\documentclass[12pt]{article} 

\usepackage{amsmath,amssymb,amsfonts}
\usepackage{psfrag}
\usepackage{color}
\definecolor{darkblue}{rgb}{0.1,0.1,.7}
\usepackage[colorlinks, linkcolor=darkblue, citecolor=darkblue, urlcolor=darkblue, linktocpage]{hyperref} 
\usepackage[square, comma, compress,numbers]{natbib}
\usepackage[]{amsmath}
\usepackage[]{graphicx}
\usepackage[]{latexsym}
\usepackage{geometry}
\usepackage{amscd}
\usepackage[all,cmtip]{xy}
\usepackage{mathrsfs}
\usepackage[margin=10pt,font=small,labelfont=bf]{caption}
\geometry{verbose,letterpaper,tmargin=2.5cm,bmargin=2.5cm,lmargin=2.6cm,rmargin=2.6cm}
\usepackage{dsdshorthand}
\usepackage{simplewick}
\usepackage{changepage}
\usepackage{tikz}
\usetikzlibrary{positioning,arrows}
\usetikzlibrary{decorations.pathmorphing}
\usetikzlibrary{decorations.markings}
\usetikzlibrary{patterns,snakes}
\setlength{\parskip}{0.1in}
\hyphenpenalty=1000

\numberwithin{equation}{section}

\def\be#1\ee{\begin{align}#1\end{align}}
\newcommand\nn\nonumber 

\theoremstyle{definition}
\newcommand\cS{\mathcal{S}}
\newtheorem{mylemma}{Lemma}

\begin{document}
\pagenumbering{gobble}

\vspace*{-.6in} \thispagestyle{empty}
\begin{flushright}
\end{flushright}
\vspace{.2in} {\Large
\begin{center}
{\bf The Lightcone Bootstrap and the \\ Spectrum of the 3d Ising CFT \\\vspace{.1in}}
\end{center}
}
\vspace{.2in}
\begin{center}
{\bf 
David Simmons-Duffin$^{1,2}$} 
\\
\vspace{.2in} 
$^1${\it School of Natural Sciences, Institute for Advanced Study, Princeton, New Jersey 08540\/} \\
$^2${\it Walter Burke Institute for Theoretical Physics, Caltech, Pasadena, California 91125\/}
\end{center}

\vspace{.2in}

\begin{abstract}
We compute numerically the dimensions and OPE coefficients of several operators in the 3d Ising CFT, and then try to reverse-engineer  the solution to crossing symmetry analytically. Our key tool is a set of new techniques for computing infinite sums of $\SL(2,\R)$ conformal blocks. Using these techniques, we solve the lightcone bootstrap to all orders in an asymptotic expansion in large spin, and suggest a strategy for going beyond the large spin limit. We carry out the first steps of this strategy for the 3d Ising CFT, deriving analytic approximations for the dimensions and OPE coefficients of several infinite families of operators in terms of the initial data $\{\De_\s,\De_\e,f_{\s\s\e},f_{\e\e\e},c_T\}$. The analytic results agree with numerics to high precision for about 100 low-twist operators (correctly accounting for $O(1)$ mixing effects between large-spin families). Plugging these results back into the crossing equations, we obtain approximate analytic constraints on the initial data.
\end{abstract}

\newpage

\renewcommand{\baselinestretch}{0.75}\normalsize
\tableofcontents
\renewcommand{\baselinestretch}{1.0}\normalsize

\newpage
\pagenumbering{arabic}

\section{Introduction}

Despite the ubiquity of conformal field theories (CFTs) in $d>2$ spacetime dimensions, very little is known about their operator dimensions and OPE coefficients away from simplifying limits like large central charge (large-$N$) or weak coupling. Unlike in $d=2$ dimensions~\cite{Belavin:1984vu}, we have no nontrivial exactly-solvable CFTs in $d>2$ from which to draw lessons.

In this work, we produce a new numerical picture of the spectrum of the 3d Ising CFT, including about 100 operators, and use it as a guide to explore the theory analytically. In addition to the intrinsic interest of the 3d Ising CFT for its role in second-order phase transitions, our motivation is to develop analytical tools for solving crossing symmetry in general (and eventually apply them to wider classes of theories).

The current most powerful techniques for studying the spectrum of small central charge theories are numerical bootstrap techniques~\cite{Rattazzi:2008pe,Rychkov:2009ij,Caracciolo:2009bx,Rattazzi:2010gj,Poland:2010wg,Rattazzi:2010yc,Vichi:2011ux,Poland:2011ey,Rychkov:2011et,ElShowk:2012ht,Liendo:2012hy,ElShowk:2012hu,Gliozzi:2013ysa,Kos:2013tga,Alday:2013opa,Gaiotto:2013nva,Berkooz:2014yda,El-Showk:2014dwa,Nakayama:2014lva,Nakayama:2014yia,Chester:2014fya,Kos:2014bka,Caracciolo:2014cxa,Nakayama:2014sba,Paulos:2014vya,Bae:2014hia,Beem:2014zpa,Chester:2014gqa,Simmons-Duffin:2015qma,Bobev:2015jxa,Kos:2015mba,Chester:2015qca,Beem:2015aoa,Iliesiu:2015qra,Rejon-Barrera:2015bpa,Poland:2015mta,Lemos:2015awa,Kim:2015oca,Lin:2015wcg,Chester:2015lej,Chester:2016wrc,Behan:2016dtz,Dey:2016zbg,Nakayama:2016knq,El-Showk:2016mxr,Li:2016wdp,Pang:2016xno,Lin:2016gcl,Lemos:2016xke,Beem:2016wfs}, based on the conformal bootstrap~\cite{Ferrara:1973yt,Polyakov:1974gs} and the methods pioneered in~\cite{Rattazzi:2008pe}. For example, the numerical bootstrap has yielded precise predictions for dimensions of the lowest-dimension scalars $\s$ and $\e$ in the 3d Ising CFT~\cite{ElShowk:2012ht,El-Showk:2014dwa,Kos:2014bka,Simmons-Duffin:2015qma,Kos:2016ysd}. It is difficult to reproduce these results analytically because the 3d Ising CFT does not admit a (known) controlled expansion in a small coupling constant.\footnote{See~\cite{Gopakumar:2016wkt,Gopakumar:2016cpb,Dey:2016mcs} for recent attempts using Mellin space.}

But even strongly-coupled theories admit small parameters in kinematic limits. The authors of~\cite{Fitzpatrick:2012yx,Komargodski:2012ek} showed that every CFT admits a large-spin expansion, accessible via the lightcone limit of the crossing equations. By studying the lightcone limit, one can prove:
\begin{theorem}[Existence of double-twist operators~\cite{Fitzpatrick:2012yx,Komargodski:2012ek}]
\label{thm:existenceofdoubletwist} Suppose a CFT in $d>2$ dimensions contains primary operators $\cO_1,\cO_2$ with twists $\tau_1,\tau_2$.\footnote{Twist is defined as $\tau=\De-\ell$.} For each $n=0,1,2,\dots$, there exists an infinite family of primary operators with increasing spin and twists approaching $\tau_1+\tau_2+2n$ as $\ell\to \oo$.
\end{theorem}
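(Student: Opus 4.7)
The plan is to derive the existence of the claimed families from a lightcone analysis of crossing for the scalar four-point function $\langle \cO_1 \cO_2 \cO_2 \cO_1\rangle$, following the strategy of Fitzpatrick-Kaplan-Poland-Simmons-Duffin and Komargodski-Zhiboedov. Writing the usual cross-ratios $u = z\bar z$ and $v = (1-z)(1-\bar z)$, crossing equates
\begin{equation}
\sum_{\cO} f_{12\cO}^{2}\, g_{\De,\ell}(u,v) \;=\; \left(\frac{u}{v}\right)^{(\tau_1+\tau_2)/2}\sum_{\cO} f_{12\cO}^{2}\, g_{\De,\ell}(v,u).
\end{equation}
I would take the lightcone limit $\bar z \to 1$ (so $v \to 0$) with $z$ held fixed. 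On the t-channel side the identity contributes $1$ to the sum, and through the crossing prefactor it produces a term diverging as $v^{-(\tau_1+\tau_2)/2}$.

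First I would analyze the small-$v$ behavior of the s-channel. A single block of twist $\tau$ starts at $v^{\tau/2}$, so no finite sub-sum of the s-channel OPE can reproduce a negative power of $v$. Consequently the s-channel must contain infinitely many operators, and a quick look at the collinear Casimir equation shows that the only way an infinite sum of blocks of bounded twist can generate new $v$-dependence is to let $\ell\to\infty$. This already forces the existence of an infinite sequence of primaries of unbounded spin.

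To pin down the twists, I would strip off the $z$-dependence by expanding both sides in $z$ and matching order by order. The $\bar z$-dependence on the s-channel side reduces to an infinite sum of collinear $SL(2,\mathbb{R})$ blocks $k_{2h}(\bar z)$, and, using their integral representation together with the known large-$h$ asymptotics, one checks that the identity-induced singularity $(1-\bar z)^{-(\tau_1+\tau_2)/2}$ can be reproduced only by a density of operators whose twist accumulates at $\tau = \tau_1+\tau_2$. The higher towers at $\tau_1+\tau_2+2n$ then follow by induction on $n$: subtract the reconstructed contribution of the $n{-}1$ tower from both sides, observe that the remaining crossing relation now exhibits a weaker singularity which matches the $v^{n}$ descendant of the t-channel identity in the collinear OPE, and repeat the argument at the next order in $z$.

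The main obstacle is passing from an integrated/asymptotic statement about the OPE density to the existence of isolated primary operators with twists approaching $\tau_1+\tau_2+2n$. This requires a Tauberian-type argument applied to the positive spectral measure arising on the diagonal of the four-point function, converting the asymptotic matching of partial sums into pointwise existence of infinitely many operators near each accumulation twist. Simultaneously one must control the contamination from non-identity t-channel operators and show that their contribution to the $v\to 0$ limit is subleading compared to the identity; for this I would use the uniform small-$v$ behavior of t-channel blocks of twist $\tau_t > 0$, whose contribution is suppressed by $v^{\tau_t/2}$ relative to the identity. Once these analytic points are handled, the same argument simultaneously produces explicit large-$\ell$ asymptotics for the double-twist OPE coefficients, setting the stage for the more refined analysis carried out in the rest of the paper.
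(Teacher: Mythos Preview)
Your overall strategy coincides with the paper's review in Section~\ref{sec:doubletwistoperators} (itself following \cite{Fitzpatrick:2012yx,Komargodski:2012ek}): take the lightcone limit, use the identity in one channel to produce a power-law singularity, and argue that only an infinite tower of operators with twist accumulating at $\tau_1+\tau_2+2n$ on the other side can reproduce it. Your use of the mixed correlator $\langle\cO_1\cO_2\cO_2\cO_1\rangle$ is the correct generalization to unequal operators, and your remarks about the need for a Tauberian step and the inductive treatment of higher $n$ are apt.

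There is, however, a concrete error in the key step. You assert that ``a single block of twist $\tau$ starts at $v^{\tau/2}$'' in the s-channel. That is the small-$u$ behavior, not the small-$v$ behavior. As $v\to 0$ at fixed $u$, an individual s-channel block does \emph{not} vanish as a positive power of $v$; it is regular up to a logarithm. In the paper's notation (equation~\eqref{eq:approxofhypergeometricaroundone}), the collinear block satisfies $k_{2\bar h}(1-z)\sim -\tfrac{\Gamma(2\bar h)}{\Gamma(\bar h)^2}\bigl(2\psi(\bar h)-2\psi(1)+\log z\bigr)$, a logarithmic singularity. It is precisely this log-versus-power mismatch that forces the sum to be dominated by $\bar h\to\infty$ and hence by an infinite family. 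Your conclusion ``no finite sub-sum can reproduce a negative power of $v$'' is correct, but the reason is that each term contributes at most $\log v$, not that each term vanishes like $v^{\tau/2}$. Once you replace the incorrect premise with this one, the rest of your outline goes through as written.
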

\noindent Schematically, these operators are
\be
\cO_1 \ptl^{\mu_1} \cdots \ptl^{\mu_\ell} \ptl^{2n} \cO_2.
\label{eq:schematiclargespin}
\ee
Of course, composite operators like (\ref{eq:schematiclargespin}) don't make sense in a general strongly-coupled theory. However, theorem~\ref{thm:existenceofdoubletwist} implies that they do make sense in the large-$\ell$ limit. We denote the family with twist approaching $\tau_1+\tau_2+2n$ as $[\cO_1\cO_2]_n$ and refer to such operators as ``double-twist" operators (following~\cite{Komargodski:2012ek}).

Dimensions and OPE coefficients of double-twist operators have a computable expansion in (generically non-integer) powers of $1/\ell$, where terms in the expansion come from matching operators on the other side of the crossing equation. Recently, there has been significant progress in understanding this expansion~\cite{Fitzpatrick:2012yx,Komargodski:2012ek,Alday:2015eya,Alday:2015ota,Alday:2015ewa,Kaviraj:2015cxa,Kaviraj:2015xsa,Alday:2016mxe,Alday:2016njk,Alday:2016jfr}. The large-$\ell$ expansion is asymptotic in general~\cite{Alday:2015ewa}, so its usefulness for studying finite-spin operators is not immediately clear. Nevertheless, we might hope that large-spin techniques could enhance numerics or vice versa. Perhaps an analytical solution of the large-spin expansion could help make numerics more efficient, or even replace numerics entirely if crossing symmetry could be solved via the lightcone limit.

With our concrete numerical calculations as a guide, we find the following:
\begin{itemize}
\item Double-twist operators play an important role in the numerical bootstrap.

\item By truncating the asymptotic large spin expansion, and with the help of some new analytical techniques described below, we can describe a large part of the 3d Ising spectrum, including operators with spin as small as $\ell=2$ or $\ell=4$.

\item The large-spin expansion can be used to solve crossing symmetry systematically in a ``double lightcone" expansion in $z,1-\bar z$.\footnote{While this work was nearing completion,~\cite{Alday:2016njk,Alday:2016jfr} appeared which also develop this approach.}

\item The ``errors" associated with the fact that the expansion is asymptotic can be precisely characterized (they are ``Casimir-regular" terms defined in section~\ref{sec:families}). Requiring that they cancel gives nontrivial constraints on the spectrum.

\end{itemize}

Let us describe the structure of this paper in more detail.

In section~\ref{sec:numericsandlightconelimit}, we perform a (non-rigorous) numerical computation of the 3d Ising spectrum using the extremal functional method~\cite{Poland:2010wg,ElShowk:2012hu,El-Showk:2014dwa}.  Importantly, we use a trick from~\cite{Komargodski:2016auf} which lets us assign error bars to the resulting operator dimensions and thereby understand which predictions are robust and which ones aren't. The robust predictions turn out to be for low-{\it twist\/} operators (not just low-dimension operators). Specifically, we find relatively precise predictions for 112 operators in the 3d Ising CFT, of which only 9 do not fall into an obvious double-twist family. The remaining $103$ operators give a clear numerical picture of the families $[\s\s]_0$, $[\s\s]_1$, $[\e\e]_0$, and $[\s\e]_0$, up to spin $\ell\sim 40$. We give additional details of our computation in appendix~\ref{app:numerics}, and list the resulting operators in appendix~\ref{app:severaloperators}. Although many of the results in this work are analytical (and applicable to any CFT), this numerical picture is a crucial guide, helping us ask the right questions and find the right tools to answer them.

We then set out to describe the families $[\s\s]_0$, $[\s\s]_1$, $[\e\e]_0$, and $[\s\e]_0$ analytically using the large-spin expansion. To succeed, we must develop two new technologies:
\begin{itemize}
\item Techniques for summing an infinite family of large-spin operators in the conformal block expansion. (For example, this lets us compute the contribution of a twist family to its own anomalous dimensions.)
\item Techniques for describing mixing between multi-twist families.
\end{itemize}

Our key tool is a better understanding of infinite sums of $\SL(2,\R)$ conformal blocks, which we develop in section~\ref{sec:families} (after reviewing the lightcone bootstrap in section~\ref{sec:lightconereview}). By generalizing the conformal block expansion of $1$-dimensional Mean Field Theory, we show how to compute exactly, and in great generality, sums of $\SL(2,\R)$ blocks in an expansion in the crossed channel $z\to 1-z$. A simple example is\footnote{The sum over $k$ in (\ref{eq:examplesl2sum}) can be written in terms of ${}_3F_2$ hypergeometric functions.}
\be
&\sum_{\substack{h=h_0+\ell \\ \ell=0,1,\dots}} \frac{1}{\G(-a)^2} \frac{\G(h)^2}{\G(2h-1)} \frac{\G(h-a-1)}{\G(h+a+1)} z^h {}_2F_1(h,h,2h,z)\nn\\
&=
\p{\frac{1-z}{z}}^a - \frac{1}{\Gamma (-a)^2}\frac{\Gamma (h_0-a-1)}{\Gamma (h_0+a) }
\sum_{k=0}^\oo \pdr{}{k}\p{ \frac{\Gamma (h_0+k)}{k!(a-k)\Gamma (h_0-k-1)}\p{\frac{1-z}{z}}^k}.
\label{eq:examplesl2sum}
\ee
The crucial point is that the first term on the right-hand side, $\p{\frac{1-z}{z}}^a$, becomes arbitrarily singular at $z=1$ after repeated application of the quadratic Casimir of $\SL(2,\R)$, while the remaining terms do not. We compute general sums of $\SL(2,\R)$ blocks by exploiting this distinction.
Because $\SO(d,2)$ conformal blocks are sums of $\SL(2,\R)$ blocks, equation~(\ref{eq:examplesl2sum}) and similar identities can be used as building blocks for understanding crossing symmetry in general. Using them, we solve the asymptotic lightcone bootstrap to all orders (for both OPE coefficients and anomalous dimensions) in section~\ref{sec:allorderslightcone}.

In section~\ref{sec:asymptoticsappliedtoising}, we explore how well the truncated large-spin expansion describes the families $[\s\s]_0$ and $[\s\e]_0$. Surprisingly, we find that the first few terms (coming from $\e$ and $T_{\mu\nu}$ in the crossed-channel) fit the numerical data for $[\s\s]_0$ beautifully, even down to spin $\ell=2$!\footnote{This was conjectured in~\cite{Alday:2015ota,Alday:2015ewa}.} To describe $[\s\e]_0$, we must perform a nontrivial sum over the twist family $[\s\s]_0$ in the $\s\s\to\e\e$ OPE channel. The result is another beautiful fit that works down to spin $\ell=2$. In this way, we find analytical approximations for dimensions and OPE coefficients of $[\s\s]_0$ and $[\s\e]_0$ in terms of the data $\{\De_\s,\De_\e,f_{\s\s\e},f_{\e\e\e},c_T\}$.

Describing $[\s\s]_1$ and $[\e\e]_0$ requires a novel approach because the two families exhibit nontrivial mixing. (For example the OPE coefficient $f_{\e\e[\s\s]_1}$ is {\it larger\/} than $f_{\e\e[\e\e]_0}$ for spins $\ell \leq 26$.) In section~\ref{sec:twisthamiltonian}, using our solution of the asymptotic lightcone bootstrap, we show how to define a ``twist Hamiltonian" $H(\bar h=\frac{\De+\ell}{2})$ whose diagonalization correctly describes this mixing, and matches the numerics well for $\ell\geq 4$. In particular, diagonalizing $H(\bar h)$ leads to $O(1)$ anomalous dimensions and variations in OPE coefficients, despite the fact that we have truncated the asymptotic expansion for $H(\bar h)$ to only a few terms. Our tentative conclusion is that by using the appropriate twist Hamiltonian, the large-spin expansion can in practice be extended down to relatively small spins for all double-twist operators in the 3d Ising CFT (and perhaps other theories as well).

In section~\ref{sec:knot}, we ask what the asymptotic large spin expansion is missing. Part of the four-point function is invisible to this expansion, to all orders in $1/\ell$. Demanding that this part be crossing-symmetric gives additional nontrivial constraints on the CFT data. Using our analytical approximations from section~\ref{sec:asymptoticsappliedtoising}, we briefly explore some of these constraints. For example, we find conditions that approximately determine $c_T$ and $f_{\s\s\e}$ in terms of $\De_\s,\De_\e, f_{\e\e\e}$, using only the lightcone limit.

We discuss future directions in section~\ref{sec:discussion}.

\section{Numerics and the lightcone limit}
\label{sec:numericsandlightconelimit}

\subsection{A numerical picture of the 3d Ising spectrum}
\label{sec:numericalpicture}

Numerical bootstrap methods have become powerful enough to estimate several operator dimensions and OPE coefficients in the 3d Ising CFT\@. The strategy is as follows. Consider the four-point functions $\<\s\s\s\s\>$, $\<\s\s\e\e\>$, and $\<\e\e\e\e\>$ where $\s$ and $\e$ are the lowest-dimension $\Z_2$-odd and $\Z_2$-even scalars in the 3d Ising CFT, respectively. Crossing symmetry and unitarity for these correlators forces the dimensions $\De_\s,\De_\e$ and OPE coefficients $f_{\s\s\e}, f_{\e\e\e}$ to lie inside a tiny island given by~\cite{Kos:2016ysd}
\be
\label{eq:isingisland}
\De_{\s} &= 0.5181489(10), & f_{\s\s\e} &= 1.0518537(41),\nn\\
\De_\e   &= 1.412625(10), & f_{\e\e\e} &= 1.532435(19).
\ee
We can then ask: given that $(\De_\s,\De_\e, f_{\s\s\e}, f_{\e\e\e})$ lie in this island, what other operators are needed for crossing symmetry? Although it is possible in principle to compute rigorous bounds on more operators, it is difficult in practice because we must scan over the dimensions and OPE coefficients of those additional operators. 

Instead, we adopt the non-rigorous approach of~\cite{Komargodski:2016auf}, based on the extremal functional method~\cite{Poland:2010wg,ElShowk:2012hu,El-Showk:2014dwa}.  Consider $N$ derivatives of the crossing equation around $z=\bar z = \frac 1 2$, which we write as $\cF_N=0$, where $\cF_N$ is an $N$-dimensional vector depending on the CFT data.  We assume that OPE coefficients are real and operator dimensions are consistent with unitarity bounds~\cite{Mack:1975je}.  By the argument of~\cite{Rattazzi:2008pe}, there is an allowed region $\mathcal{A}_N$ in the space of CFT data such that any point outside $\mathcal{A}_N$ is inconsistent with $\cF_N=0$.\footnote{The island (\ref{eq:isingisland}) is the projection of $\cA_{1265}$ onto $(\De_\s,\De_\e,f_{\s\s\e},f_{\e\e\e})$-space, where we also assume that $\s$ and $\e$ are the only relevant scalars in the theory.} For every point $p$ on the boundary of $\cA_N$, there is a unique ``partial spectrum" $\mathcal{S}_N(p)$: a finite list of operator dimensions and OPE coefficients that solve $\cF_N=0$. The number of operators in $\cS_N(p)$ grows linearly with $N$.\footnote{It is impossible to solve the full crossing equations with a finite number of operators. $\cS_N(p)$ can be finite because we have truncated the crossing equations to $\cF_N=0$.}

If $p$ lies on the boundary of the Ising island and $N$ is large, we might expect that $\cS_N(p)$ is a reasonable approximation to the actual spectrum of the theory. However, it is not obvious how to assign error bars to $\cS_N(p)$. Firstly, the actual theory lies somewhere in the interior of the island, not on the boundary. It is important that the island is small enough that points on the interior are close to points on the boundary. Secondly, $\cS_N(p)$ depends on $p$, and there is no canonical choice of $p$.

In~\cite{Komargodski:2016auf}, we propose the following trick. We sample several different points $p$ on the boundary of the island, and compute $\cS_N(p)$ for each one. As we increase $N$ and vary $p$, some of the operators in $\cS_N(p)$ jump around, while others remain relatively stable. If an operator remains stable, we can guess that it is truly required by crossing symmetry.

In~\cite{Komargodski:2016auf}, we used this strategy to estimate the dimensions and OPE coefficients of a few low-dimension operators in the 3d Ising CFT\@. In figures~\ref{fig:evenspectrum} and~\ref{fig:oddspectrum}, we show a more complete computation, giving about a hundred stable operators. To produce figures~\ref{fig:evenspectrum} and~\ref{fig:oddspectrum}, we computed 60 different spectra by varying $(\De_\s,\De_\e, f_{\s\s\e}, f_{\e\e\e})$ and minimizing $c_T$. (We give more details in appendix~\ref{app:extremalfunctional}.) We then superimposed these 60 spectra, and grouped together operators with dimensions closer than $0.03$. Each circle represents a group, and the size of the circle is proportional to the number of operators in that group. Thus, large circles correspond to stable operators and small circles correspond to unstable operators. We list the dimensions and OPE coefficients of the stable operators in appendix~\ref{app:severaloperators}. Most of the stable operators also appear in figures~\ref{fig:tauSigSig0}, \ref{fig:fSigSigSigSig0}, \ref{fig:fEpsEpsSigSig0}, \ref{fig:tauSigEps0}, \ref{fig:fSigEps0}, \ref{fig:tauEpsEps0AndSigSig1}, \ref{fig:fSigSigSigSig1AndfSigSigEpsEps0}, and~\ref{fig:fEpsEpsSigSig1AndfEpsEpsEpsEps0}, where we compare to analytics.

\begin{figure}[!ht]
\begin{center}
\includegraphics[width=\textwidth]{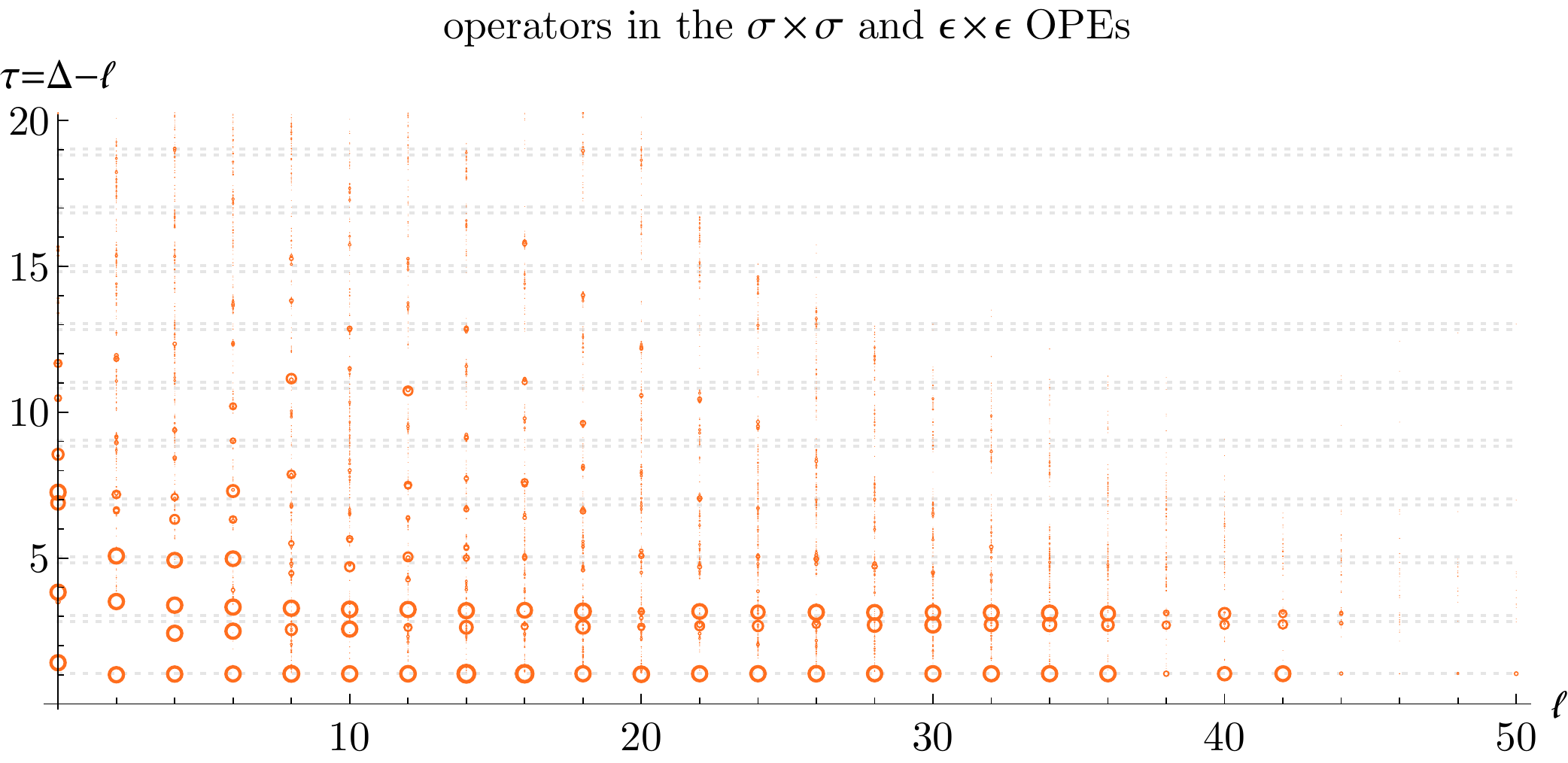}
\end{center}
\caption{
\label{fig:evenspectrum}
Estimates of $\Z_2$-even operators in the 3d Ising model. Larger circles represent ``stable" operators whose dimensions and OPE coefficients have small errors in our computation. We plot the twist $\De-\ell$ versus spin $\ell$. The grey dashed lines are $\tau=2\De_\s+2n$ and $\tau=2\De_\e+2n$ for nonnegative integer $n$.}
\end{figure}

\begin{figure}[!ht]
\begin{center}
\includegraphics[width=\textwidth]{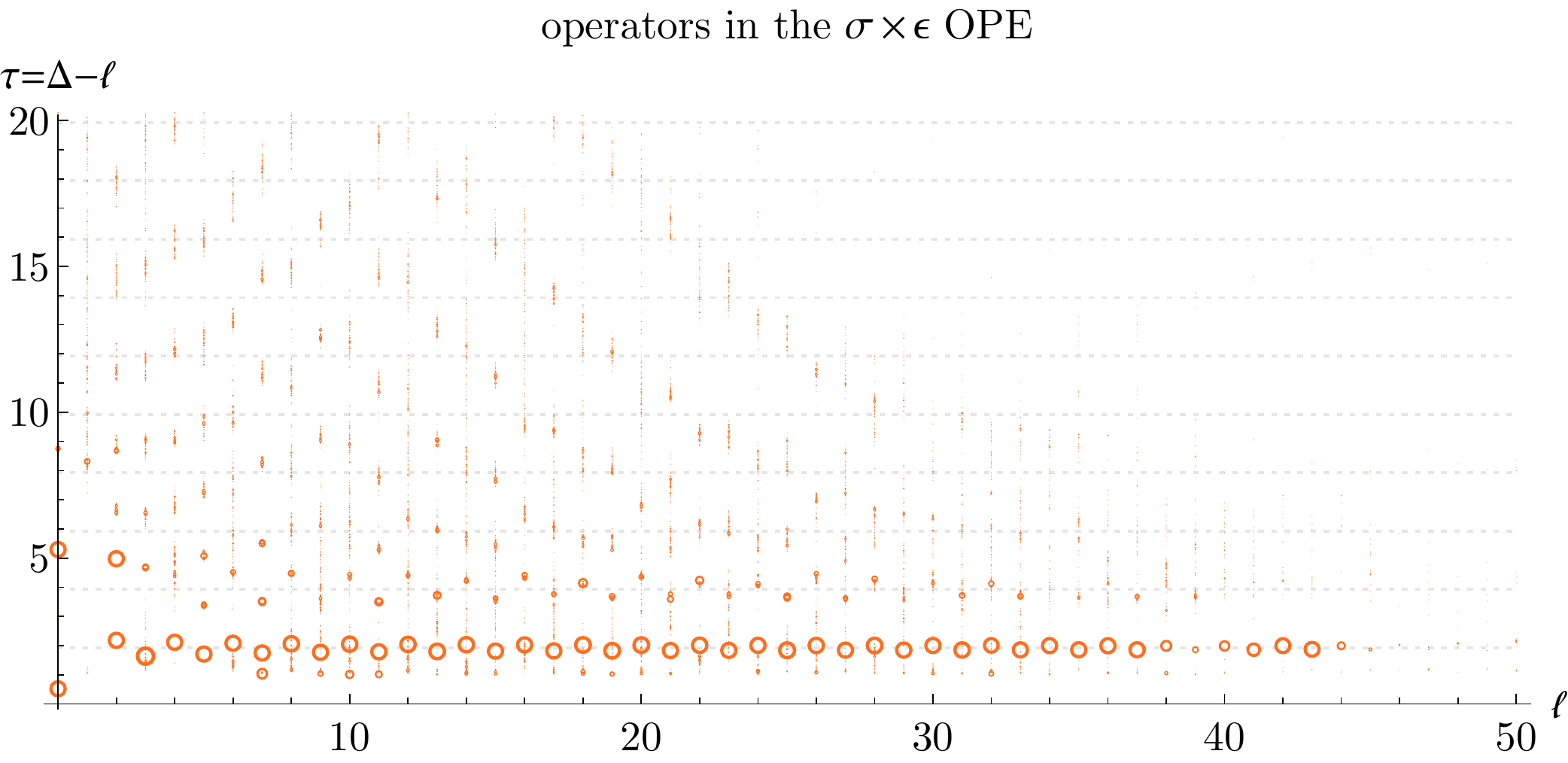}
\end{center}
\caption{
\label{fig:oddspectrum}
Estimates of $\Z_2$-odd operators in the 3d Ising model. Larger circles represent ``stable" operators. We plot the twist $\De-\ell$ versus spin $\ell$. The grey dashed lines are $\tau=\De_\s+\De_\e+2n$ for nonnegative integer $n$.}
\end{figure}

\subsection{Effectiveness of the large spin expansion}

Let us make some comments about these results. Firstly, most of the stable operators fall into families with increasing spin and nearly constant twist $\tau=\Delta-\ell$.
We immediately recognize these as double-twist operators --- specifically the families $[\s\s]_0$, $[\s\s]_1$, $[\e\e]_0$ in figure~\ref{fig:evenspectrum}, and $[\s\e]_0$ in figure~\ref{fig:oddspectrum}. (There are also vague hints of $[\s\e]_1$.) The fact that these families are stable implies that they play a crucial role in the numerical bootstrap for the 3d Ising CFT\@.\footnote{Note that even though our numerical calculation uses an expansion of the crossing equation around the Euclidean point $z=\bar z=\frac 1 2$, the results are sensitive to the Lorentzian physics of the lightcone limit. The prevailing lore was that, since the conformal block expansion converges exponentially in $\De$ in the Euclidean regime~\cite{Pappadopulo:2012jk}, numerical bootstrap methods should only be sensitive to low-dimension operators. Evidently this is incorrect because certain derivatives probe physics outside the Euclidean regime. Some hints that the numerical bootstrap probes the lightcone limit were given in~\cite{Mazac:2016qev}, where an exact extremal functional was constructed that involves the lightcone limit of conformal blocks.}

One can compute anomalous dimensions of double-twist operators in a large-$\ell$ expansion using the crossing equation~\cite{Fitzpatrick:2012yx,Komargodski:2012ek,Alday:2015eya,Alday:2015ota,Alday:2015ewa,Kaviraj:2015cxa,Kaviraj:2015xsa,Alday:2016mxe,Alday:2016njk,Alday:2016jfr}. The authors of~\cite{Alday:2015ewa} observed that the large-$\ell$ expansion appears to be asymptotic, but they conjectured that the anomalous dimensions of $[\s\s]_0$ should be well-described by the first few terms in this expansion, coming from the operators $\e$ and $T_{\mu\nu}$ appearing in the $\s\x\s$ OPE\@. The expansion is most naturally organized in terms of the ``conformal spin" $J$ defined by
\be
\label{eq:conformalspin}
J(\ell)^2 &\equiv \p{\frac {\tau(\ell)} 2 + \ell}\p{\frac {\tau(\ell)} 2 + \ell - 1}.
\ee
One finds\footnote{Note that $\tau_{[\s\s]_0}(\ell)$ depends on $J$, and $J$ depends on $\tau_{[\s\s]_0}(\ell)$. To obtain a series in $\ell$, one can repeatedly substitute the expressions for $\tau_{[\s\s]_0}(\ell)$ and $J$ into each other, starting with the initial seed $J=(\De_\s+\ell)(\De_\s+\ell-1)$.}
\be
\label{eq:largespinexpansion}
\tau_{[\s\s]_0}(J) &\approx 2\De_\s + \sum_{\cO=\e,T_{\mu\nu}}f_{\s\s\cO}^2 \frac{c_0(\tau_\cO,\ell_\cO)}{J^{\tau_\cO}}\p{1+\frac{c_1(\tau_\cO,\ell_\cO)}{J^2}},\nn\\
c_0(\tau,\ell) &\equiv -\frac{2(-1)^{\ell} \G(\tau+2\ell)\G(\De_\s)^2}{\G(\De_\s-\frac {\tau} 2)^2 \G(\ell+\frac {\tau} 2)^2},\nn\\
c_1(\tau,\ell) &\equiv -\frac{(-2 \Delta_\s +\tau +2)^2 \left(2 \ell^2 (d+\tau -2)+2 \ell (\tau -1) (d+\tau -2)+(d-4) \tau^2\right)}{8 (d+2 \ell-4) (d-2 (\ell+\tau +1))},\nn\\
&\quad\,-\frac{\tau}{12}(-3 \tau \Delta_\s +\tau^2 +3 \tau+2),
\ee
where
\be
f_{\s\s T_{\mu\nu}}^2 &= \frac{d \De_\s^2}{4(d-1)}\frac{c_\mathrm{free}}{c_T}.
\ee
Here, $d=3$ is the spacetime dimension and $c_\mathrm{free}$ is $c_T$ for the free boson~\cite{Osborn:1993cr}. We will rederive (\ref{eq:largespinexpansion}) and find its all-orders generalization in section~\ref{sec:allorderslightcone}.  Plugging in (\ref{eq:isingisland}) and the value 
\be
\frac{c_T}{c_\mathrm{free}} &\approx 0.946534(11)
\ee
computed in~\cite{El-Showk:2014dwa}, we find that this prediction fits the numerics beautifully, even at small $\ell$ (figure~\ref{fig:tauSigSig0})! This is surprising because the arguments leading to (\ref{eq:largespinexpansion}) only fix anomalous dimensions at asymptotically large $\ell$. Rigorously speaking, they say nothing about any finite value of $\ell$.

Nevertheless, inspired by this result, we might try to match the dimensions of $[\e\e]_0$ and $[\s\s]_1$ to the leading terms in their large-spin expansions. Unfortunately, the naive analytic predictions disagree wildly with the data. 
To fit $[\e\e]_0$ and $[\s\s]_1$, we will need a more sophisticated understanding of the large-spin expansion, which we develop over the course of this work.

A clue about what's going on is the fact that the twists of $[\e\e]_0$ and $[\s\s]_1$ move away from each other at small $\ell$. This is reminiscent of the behavior of the eigenvalues of
\be
\begin{pmatrix}
\tau_1 & 1/\ell \\
1/\ell & \tau_2
\end{pmatrix}
\ee
as $\ell\to 0$. If $|\tau_1-\tau_2|$ is small, the eigenvalues repel more at small $\ell$. Furthermore, the small-$\ell$ eigenvectors become nontrivial admixtures of the large-$\ell$ eigenvectors. In the 3d Ising CFT, it turns out that $\tau_1=2\De_\s+2\approx 3.04$ is numerically close to $\tau_2=2\De_\e\approx 2.83$. This suggests that the repulsion between $[\e\e]_0$ and $[\s\s]_1$ is due to large mixing  between these families. We will make this notion more precise in section~\ref{sec:twisthamiltonian} and compute the twists of $[\e\e]_0$ and $[\s\s]_1$ in section~\ref{sec:eezerosigsigone}. The off-diagonal terms will come from the $\s$ operator in the $\s\x\e$ OPE and behave like $\ell^{-\De_\s}$.

\section{Lightcone bootstrap review}
\label{sec:lightconereview}

\subsection{Double-twist operators}
\label{sec:doubletwistoperators}

Let us review the argument from~\cite{Fitzpatrick:2012yx,Komargodski:2012ek} for the existence of double-twist operators.
The crossing symmetry equation for a four-point function of scalar operators $\<\f\f\f\f\>$ is
\be
(z\bar z)^{-\De_\f} \sum_{\cO} f_{\f\f\cO}^2 g_{\De,\ell}(z,\bar z) &= ((1-z)(1-\bar z))^{-\De_\f}\sum_{\cO} f_{\f\f\cO}^2 g_{\De,\ell}(1-z,1-\bar z).
\ee
Here, $\cO$ runs over primary operators in the $\f\x\f$ OPE and $\De,\ell$ are the dimension and spin of $\cO$.  The functions $g_{\De,\ell}(z,\bar z)$ are conformal blocks for the $d$-dimensional conformal group $\SO(d,2)$.

The lightcone limit is given by $z\ll 1-\bar z \ll 1$.\footnote{By developing methods for summing infinite families of operators, we will eventually work in the limit $z,1-\bar z \ll 1$ (with no restrictions on $1-\bar z$ relative to $z$), sometimes called the ``double lightcone limit." We mostly abuse terminology and continue to call this the lightcone limit.} Let us replace $\bar z \to 1- \bar z$ so that we have
\be
\label{eq:variableswelike}
(z(1-\bar z))^{-\De_\f} \sum_{\cO} f_{\f\f\cO}^2 g_{\De,\ell}(z,1-\bar z) &= ((1-z)\bar z)^{-\De_\f}\sum_{\cO} f_{\f\f\cO}^2 g_{\De,\ell}(\bar z,1-z),
\ee
and the lightcone limit becomes $z\ll \bar z \ll 1$.
(We have used $g_{\De,\ell}(1-z,\bar z)=g_{\De,\ell}(\bar z, 1-z)$.) In this limit, the left-hand side is dominated by the unit operator,  $z^{-\De_\f}(1+O(\bar z))$.  On the right-hand side, no single term dominates the small $z$ limit.  However, because we also have small $\bar z$, we can replace each conformal block by its expansion in small $\bar z$~\cite{DO1,DO2},
\be
g_{\De,\ell}(\bar z,1-z) &= \bar z^{h} k_{2\bar h}(1-z) + O(\bar z^{h+1}),\\
k_{2 h}(x) &\equiv x^h {}_2F_1(h,h,2h,x),
\ee
where\footnote{These definitions are conventional in 2d CFT\@. In this work, we are considering $d>2$, but it is still convenient to use $h,\bar h$.}
\be
h \equiv \frac{\De-\ell}{2}=\frac{\tau}{2}, \qquad\qquad \bar h \equiv \frac{\De+\ell}{2}=\frac{\tau}{2}+\ell.
\ee
The function $k_{2 h}(x)$ is a conformal block for the 1-dimensional conformal group $\SL(2,\R)$.  Our equation becomes
\be
\label{eq:lightlimitofcrossing}
z^{-\De_\f} + \dots &= \sum_{\cO} f_{\f\f\cO}^2 \bar z^{h-\De_\f} k_{2\bar h}(1-z) + \dots.
\ee

The left-hand side of (\ref{eq:lightlimitofcrossing}) has a power-law singularity at small $z$.  However, each individual term on the right-hand side has a logarithmic singularity at small $z$,\footnote{$\psi(x)=\frac{\G'(x)}{\G(x)}$ is the digamma function.}
\be
\label{eq:approxofhypergeometricaroundone}
k_{2\bar h}(1-z) &= -\frac{\Gamma (2 \bar h)}{\Gamma (\bar h)^2} (2 \psi(\bar h)-2\psi(1)+\log (z)) + O(z \log z).
\ee
A power singularity can only come from the sum over an infinite number of operators on the right-hand side with $\bar h\to \oo$. Also, these operators must have $h\to \De_\f$ as $\bar h\to \oo$ to match fact that $z^{-\De_\f}$ on the left-hand side is independent of $\bar z$.  These are the double-twist operators $[\f\f]_0$.

One can determine the asymptotic growth of the OPE coefficients $f_{\f\f[\f\f]_0}$ by demanding that they reproduce the singularity $z^{-\De_\f}$. The leading growth is
\be
\label{eq:asymptoticdensity}
f_{\f\f[\f\f]_0}(\bar h)^2 &\sim \frac{2^{3-2\bar h}\sqrt{\pi}}{\G(\De_\f)^2} \bar h^{2\De_\f-\frac 3 2}.
\ee
The sum in (\ref{eq:lightlimitofcrossing}) is dominated by the regime $2\bar h \sqrt z\sim 1$,\footnote{The fact that $2\bar h \sqrt{z}\sim 1$ is the appropriate regime was shown in~\cite{Fitzpatrick:2012yx}. It also follows from the physical arguments of~\cite{Alday:2007mf}.} where the $\SL(2,\R)$ block becomes
\be
\label{eq:besselapprox}
k_{2\bar h}(1-z) &\approx 2^{2\bar h} \sqrt{\frac{\bar h}{\pi}}K_0(2\bar h\sqrt z)\qquad (\bar h \gg 1,\ 2\bar h\sqrt z\ \textrm{fixed}),
\ee
where $K_0(x)$ is a modified Bessel function. We can then approximate the sum over $[\f\f]_0$ as an integral, which reproduces the required singularity
\be
\sum_{\cO\in [\f\f]_0} f_{\f\f\cO}^2 k_{2\bar h}(1-z) &\approx \frac 1 2 \int d\bar h \frac{8}{\G(\De_\f)^2} \bar h^{2\De_\f-1}K_0(2\bar h\sqrt z)=z^{-\De_\f}.
\ee
(The factor of $\frac 1 2$ is because only even spin operators appear in $[\f\f]_0$.) 

Matching $z^{-\De_\f}$ only determines the asymptotic density of OPE coefficients $f_{\f\f[\f\f]_0}^2$ at large $\bar h$. The density (\ref{eq:asymptoticdensity}) could be distributed evenly, with one operator per spin, or with one operator every other spin, or in many different ways. We will not see evidence of this freedom when we compare to numerics. The OPE coefficients will always be distributed in the simplest way consistent with the large-spin expansion. 

We can determine the anomalous dimensions of double-twist operators by matching additional terms on the left-hand side of (\ref{eq:lightlimitofcrossing}). Let $\cO_0$ be the smallest-twist operator in the $\f\x\f$ OPE that is not the unit operator (often $\cO_0 = T_{\mu\nu}$).  Including the contribution of $\cO_0$ at small $z$ on the left-hand side of (\ref{eq:lightlimitofcrossing}), we have
\be
 &z^{-\De_\f} + z^{h_0-\De_\f} k_{2\bar h_0}(1-\bar z) + \dots \\
 &= z^{-\De_\f} + z^{h_0-\De_\f} \p{-\frac{\G(2\bar h_0)}{\G(\bar h_0)^2}(2\psi(\bar h_0)-2\psi(1)+\log \bar z) + \dots}+ \dots,\nn\\
&= \sum_{\cO} f_{\f\f\cO}^2 \bar z^{h-\De_\f} k_{2\bar h}(1-z) + \dots
\label{eq:alsologzbarterm}
\ee
where we have used (\ref{eq:approxofhypergeometricaroundone}), this time on the left-hand side of the crossing equation.
To match the $\log \bar z$ term, we can take
\be
\label{eq:anomdimlargespin}
h_{[\f\f]_0}(\bar h) &= \De_\f + \de(\bar h),\nn\\
f_{\f\f[\f\f]_0}(\bar h)^2 \de(\bar h) &\sim -\frac{\G(2\bar h_0)}{\G(\bar h_0)^2}\frac{2^{3-2\bar h}\sqrt\pi}{\G(\De_\f-h_0)^2}\bar h^{2\De_\f-2h_0-\frac 3 2}.
\ee
Dividing (\ref{eq:anomdimlargespin}) by (\ref{eq:asymptoticdensity}) gives the leading large-$\bar h$ expansion of the anomalous dimension $\de(\bar h)\propto \bar h^{-2h_0}$, agreeing with the leading term in (\ref{eq:largespinexpansion}).\footnote{$\de=\frac{\g}{2}$ is half of what is usually called the anomalous dimension.} Again, only the asymptotic density of the combination $f_{\f\f[\f\f]_0}^2 \de$ is determined by this computation.

\tikzset{
solid/.style={thick},
wavy/.style={thick, decorate, draw=black,
    decoration={coil,aspect=0}}
 }

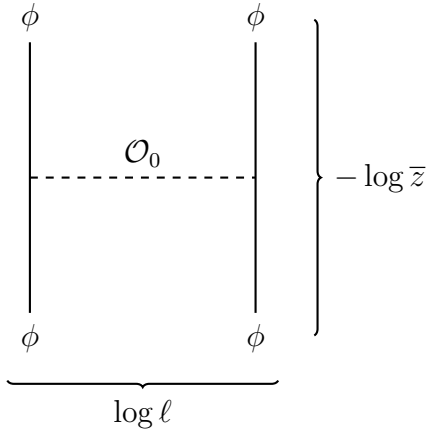
\begin{figure}
\begin{center}
\begin{tikzpicture}[xscale=0.6,yscale=0.6]
\draw[thick, decoration={brace}, decorate] (6.3,6.5) -- (6.3,-0.5);
\draw[thick, decoration={brace, mirror}, decorate] (-0.5,-1.5) -- (5.5,-1.5);
\node[right] at (6.5,3) {$-\log \bar z$};
\node[below] at (2.5,-1.7) {$\log \ell$};
\draw[thick] (0,0) -- (0,6);
\draw[thick] (5,0) -- (5,6);
\draw[thick, dashed] (0,3) -- (5,3);
\node[below] at (0,0) {$\phi$};
\node[below] at (5,0) {$\phi$};
\node[above] at (0,6) {$\phi$};
\node[above] at (5,6) {$\phi$};
\node[above] at (2.5,3) {$\cO_0$};
\end{tikzpicture}
\end{center}
\caption{
A diagram representing the contribution of the exchange of $\cO_0$ in one channel (left to right) to anomalous dimensions and OPE coefficients of double-twist operators $[\f\f]_n$ in the other channel (bottom to top). In the physical picture of~\cite{Alday:2007mf}, this diagram shows the exchange of virtual $\cO_0$-particles between $\phi$-particles separated by a distance $\log \ell$ over time $\log \bar z$.
}
\label{fig:diagramexample}
\end{figure}

An interesting feature of this argument (not realized in~\cite{Fitzpatrick:2012yx,Komargodski:2012ek}, but pointed out in~\cite{Alday:2015eya}) is that it most naturally determines a function $h(\bar h)$ instead of $\tau(\ell)$. We obtain actual operator dimensions by demanding that the spin be an even integer,
\be
\bar h - h(\bar h) &= \ell,\qquad \ell \in \{0,2,\dots\}.
\ee
Thinking in terms of $h(\bar h)$ will be even more important when we compute higher-order corrections to (\ref{eq:anomdimlargespin}).

It is often useful to draw the contribution of $\cO_0$ as a ``large-spin diagram" like figure~\ref{fig:diagramexample} (see, e.g.~\cite{Fitzpatrick:2015qma}). Such diagrams are particularly natural in the language of~\cite{Alday:2007mf}, where large-spin operators become widely separated particles in a massive two-dimensional effective theory. Figure~\ref{fig:diagramexample} represents a Yukawa potential between $\phi$-particles induced by exchange of a virtual massive $\cO_0$-particle. The distance between $\phi$'s (the width of the figure) is given by $\chi = \log \ell$, and the mass of $\cO_0$ is the twist $m=\tau_0$. The Yukawa potential has the form $e^{-m \chi} = \ell^{-\tau_0}$, in agreement with the large-$\ell$ behavior of $\de(\bar h)$. We can also think of figure~\ref{fig:diagramexample} as having height $-\log \bar z$, so that integration over the vertical position of the $\cO_0$ exchange gives a factor $-\log \bar z$, matching the $-\log \bar z$ term in the conformal block of $\cO_0$ (\ref{eq:alsologzbarterm}).

\subsubsection{What about $\log^n \bar z$?}
\label{sec:whatabout}

Above, we matched the $\log \bar z$ terms on the left-hand side of (\ref{eq:alsologzbarterm}) to anomalous dimensions on the right-hand side.
However, the expansion of $\bar z^{\de}$ contains higher-order terms in $\log \bar z$:
\be
\label{eq:obviousexpansionofexponential}
\bar z^{\de} &= 1 + \de \log \bar z + \frac{\de^2}{2} \log^2 \bar z + \dots.
\ee
What do they map to under crossing? Using (\ref{eq:asymptoticdensity}), (\ref{eq:besselapprox}), and  (\ref{eq:anomdimlargespin}), the $\log^n \bar z$ terms become
\be
\label{eq:integrategammasq}
\sum_\cO f_{\f\f\cO}^2 \de^n \log^n\bar z\, k_{2\bar h}(1-z) &\sim z^{nh_0 - \De_\f}\log^n \bar z.
\ee 
The $z$-dependence of (\ref{eq:integrategammasq}) is what one would expect from an operator of weight $n h_0$. Such operators exist: they are the multi-twist operators $[\cO_0 \dots \cO_0]_0$. The $\log^n \bar z$ behavior is not present in any individual conformal block --- instead it must come from a sum over all the operators in the family $[\cO_0 \dots \cO_0]_0$. We will see examples of $\log^2 \bar z$ coming from a sum over double-twist operators in sections~\ref{sec:asymptoticsappliedtoising} and~\ref{sec:twisthamiltonian}. We prove that double-twist operators always account for the correct $\log^2 \bar z$ terms (i.e.\ that exponentiation of $\de \log \bar z$  works automatically to second order) in appendix~\ref{app:boxdiagrams}.

\begin{figure}
\begin{center}
\begin{tikzpicture}[xscale=0.6,yscale=0.6]
\draw[thick] (0,0) -- (0,6);
\draw[thick] (5,0) -- (5,6);
\draw[thick, dashed] (0,1.5) -- (5,1.5);
\draw[thick, dashed] (0,2.5) -- (5,2.5);
\draw[thick, dashed] (0,4.5) -- (5,4.5);
\node[below] at (0,0) {$\phi$};
\node[below] at (5,0) {$\phi$};
\node[above] at (0,6) {$\phi$};
\node[above] at (5,6) {$\phi$};
\node[below] at (2.5,1.6) {$\cO_0$};
\node[below] at (2.5,2.6) {$\cO_0$};
\node[above] at (2.5,4.5) {$\cO_0$};
\node[below] at (2.5,4.35) {$\vdots$};
\end{tikzpicture}
\end{center}
\caption{
Exponentiation of the contribution of $\cO_0$ in the bottom-to-top channel becomes an exchange of multi-twist operators $[\cO_0\dots\cO_0]$ in the left-to-right channel.
}
\label{fig:multitwistexchange}
\end{figure}
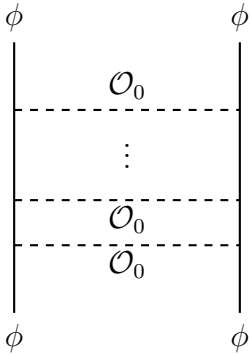

We could have immediately guessed this result using large-spin diagrams. Exponentiating the Yukawa potential in figure~\ref{fig:diagramexample} gives a sum of ``ladder diagrams" like figure~\ref{fig:multitwistexchange}. Reading these diagrams from left-to-right, they look like an exchange of multi-twist operators. If we interpret the figure as having height $\log \bar z$, then integration over the vertical positions of the exchanges gives $\log^n \bar z$. In practice, $n-1$ ``integrations" are achieved by summing over different distributions of derivatives among the operators $\ptl\cdots\ptl \cO_0 \cdots \ptl\cdots\ptl \cO_0$ (i.e.\ by summing over all members of the twist family $[\cO_0\cdots \cO_0]$), while one integration is encoded in the $\log \bar z$ factor in each individual conformal block. This makes it clear why we must sum over all multi-twist operators $[\cO_0\dots\cO_0]$ in one channel to recover exponentiation in the other channel.

In this way, crossing symmetry forces multi-twist operators $[\cO_1\dots \cO_n]$ to appear in the conformal block expansion whenever $\cO_1,\dots, \cO_n$ do individually. In particular, this implies that multi-twist operators built from the stress tensor and other low-spin operators should appear in the $\s\x\s$ OPE in the 3d Ising model. In figure~\ref{fig:evenspectrum}, we see some evidence of operators with twist near $2$, which would correspond to $[T T]_0$. However, none of them are numerically stable. This is likely because the anomalous dimension $\de_{[\s\s]_0}$ is small (of order $10^{-2}$), so higher terms in the expansion of $\bar z^{\de}$ (\ref{eq:obviousexpansionofexponential}) are highly suppressed. To get a better picture of these operators, one must study mixed correlators involving $\s$ and $T_{\mu\nu}$ together, or perhaps higher-point correlators like $\<\s\s\s\s\s\s\>$. We return to this point in section~\ref{sec:allowmixing}.

\subsection{The Casimir trick}
\label{sec:casimirtrick}

The derivation of (\ref{eq:anomdimlargespin}) makes sense when $h_0 < \De_\f$, so that the sum
\be
\label{eq:gammamatching}
 z^{h_0-\De_\f} &\approx \sum_{\bar h} f_{\f\f[\f\f]_0}(\bar h)^2 \de(\bar h) k_{2\bar h}(1-z)
\ee
diverges faster than any individual term ($\log z$) at small $z$. When this happens, the sum must be dominated by large $\bar h$ and can be approximated by an integral.\footnote{When $\cO_0 = T_{\mu\nu}$, we always have $h_{T_{\mu\nu}}=\frac{d-2}{2} \leq \De_\f$, by unitarity.} However,~\cite{Alday:2015eya} argued that the large-spin expansion can be extended to include contributions from operators with $h_0>\De_\f$.  For example, there is a calculable correction to $\de_{[\s\s]_0}$ in the 3d Ising CFT coming from $\e$, which has $h_\e\approx 0.7 > \De_\s \approx 0.52$.

To see why, suppose $h_0 > \De_\f$.  Since each term $k_{2\bar h}(1-z)$ is more singular than $z^{h_0-\De_\f}$, we cannot conclude that the sum is dominated by large $\bar h$.  However, $k_{2\bar h}(1-z)$ obeys a Casimir differential equation with eigenvalue $\bar h(\bar h - 1)$,
\be
\label{eq:caseq}
\cD k_{2\bar h}(1-z) &= \bar h(\bar h-1) k_{2\bar h}(1-z),\nn\\
\cD &\equiv (1-z)^2 z \ptl_z^2 + (1-z)^2 \ptl_z.
\ee
By repeatedly acting with the Casimir operator $\cD$ on a power $z^a$, we can make it arbitrarily singular,\footnote{$(a)_n=\frac{\G(a+n)}{\G(a)}$ is the Pochhammer symbol.}
\be
\cD^n z^{a} &= (a-n+1)_n^2 z^{a - n}(1+O(z)).
\ee
Acting $n$ times on (\ref{eq:gammamatching}), we obtain
\be
 (h_0-\De-n+1)_n^2 z^{h_0-\De_\f-n} + \dots &\approx \sum_{\bar h} f_{\f\f[\f\f]_0}(\bar h)^2 \de(\bar h) \p{\bar h(\bar h - 1)}^n k_{2\bar h}(1-z).
\ee
Taking $n$ big, the right-hand side is now dominated by large $\bar h$ when $z$ is small, and we can proceed as before.  The resulting correction to $f_{\f\f[\f\f]_0}^2 \de$ is again given by (\ref{eq:approxofhypergeometricaroundone}).

\subsection{Higher-order corrections}
\label{sec:higherorder}

By including $1/\bar h$-corrections in the approximation (\ref{eq:besselapprox}), one can compute higher-order corrections to the OPE coefficients $f_{\f\f[\f\f]_0}(\bar h)$ and anomalous dimensions $\de(\bar h)$.  After applying the Casimir operator enough times, each term in the $1/\bar h$-expansion contributes to a singularity at small $z$, and can thus be calculated by approximating the sum over $\bar h$ as an integral. This gives expansions of the form
\be
\label{eq:asymptoticsoffandgamma}
f_{\f\f[\f\f]_0}(\bar h)^2 &\sim 2^{-2\bar h} \sum_\cO \bar  h^{2\De_\f-2h_\cO-\frac 3 2}\p{a_0+\frac{a_1}{\bar h} + \frac{a_2}{\bar h^2} + \dots },\nn\\
f_{\f\f[\f\f]_0}(\bar h)^2\de(\bar h) &\sim 2^{-2\bar h} \sum_{\cO\neq 1}\bar h^{2\De_\f-2h_\cO-\frac 3 2}\p{b_0+\frac{b_1}{\bar h} + \frac{b_2}{\bar h^2} + \dots }.
\ee
The authors of~\cite{Alday:2015ewa} showed how to use the Casimir trick to compute the above coefficients. (Actually, they organize their expansion in terms of the Casimir eigenvalue $J^2 = \bar h(\bar h - 1)$, as in equations (\ref{eq:conformalspin}) and (\ref{eq:largespinexpansion}).) In section~\ref{sec:allorderslightcone}, we will write down an all-orders solution for    (\ref{eq:asymptoticsoffandgamma}).

We have written ``$\sim$" to indicate that both sides have the same asymptotic expansion at large $\bar h$. The arguments above only fix the asymptotic expansion of $f_{\f\f[\f\f]_0}(\bar h)^2$ and $\de(\bar h)$ because it is always possible to throw away a finite number of blocks $k_{2\bar h}(1-z)$ and still match the power $z^{h_\cO-\De_\f}$ on the other side of the crossing equation. We can only fix the behavior of $f_{\f\f[\f\f]_0}(\bar h)^2$ and $\de(\bar h)$ for $\bar h$ larger than some $\bar h_0$, where $\bar h_0$ might {\it grow\/} as we include more terms in (\ref{eq:asymptoticsoffandgamma}). Thus, (\ref{eq:asymptoticsoffandgamma}) should be interpreted as asymptotic series.

The behavior of $f_{\f\f[\f\f]_0}(\bar h)^2$ and $\de(\bar h)$ at finite $\bar h$ is still important --- it contributes to ``Casimir-regular" terms defined in the following section.

\section{Sums of $\SL(2,\R)$ blocks}
\label{sec:families}

Our main tool will be a better understanding of infinite sums of $\SL(2,\R)$ blocks,
\be
\label{eq:examplesum}
\sum_{\ell=0}^\oo p(h_\ell) k_{2h_\ell}(1-z),
\ee
where $h_\ell$ is an increasing series of weights that asymptotes to integer spacing, and $p(h)$ are coefficients that grow no faster than $2^{-2h}h^\textrm{const.}$ as $h\to \oo$.
We start from a simple example, Mean Field Theory (MFT) in 1-dimension, and then generalize it in several ways.

\subsection{Casimir-singular vs. Casimir-regular terms}

Sums of $\SL(2,\R)$ blocks have two parts that play different roles in the bootstrap.
As discussed in in section~\ref{sec:casimirtrick}, we can make a power $z^a$ arbitrarily singular by repeatedly applying the Casimir operator $\cD$,
\be
\cD^n z^a = (a-n+1)_n^2 z^{a-n}(1+O(z)).
\ee
We say that $z^a$ for generic $a$ is {\it Casimir-singular}.
An exception occurs when $a$ is a nonnegative integer, since then $(a-n+1)_n^2$ vanishes for $n\geq a+1$. In fact, terms of the form
\be
\label{eq:casimirregexamples}
z^n,\ z^n \log z\qquad n\in \Z_{\geq 0}
\ee
do not become arbitrarily singular when we repeatedly apply $\cD$. We call such terms {\it Casimir-regular}. 

The lesson of section~\ref{sec:higherorder} is that Casimir-singular terms can be matched unambiguously to an asymptotic expansion in large $h$.  Furthermore, to compute coefficients in this expansion, we can think of the sum over $h$ as an integral. By contrast, Casimir-regular terms are not determined by a large-$h$ expansion. This is consistent with the fact that a single $\SL(2,\R)$ block $k_{2\bar h}(1-z)$ is Casimir-regular, since it is an eigenvector of the Casimir operator. (We can also see that it is Casimir-regular by noting that its $z$-expansion (\ref{eq:approxofhypergeometricaroundone}) is a sum of terms of the form (\ref{eq:casimirregexamples}).)  For example, suppose
\be
\sum_{m=0}^\oo p(h_m) k_{2 h_m}(1-z) &= f(z).
\ee
Moving the first term on the left-hand side to the right-hand side, we have
\be
\sum_{m=1}^\oo p(h_m) k_{2 h_m}(1-z) &= f(z) - p(h_0) k_{2 h_0}(1-z)
\ee
The Casimir-regular part of the right-hand side has changed, but the large-$h$ expansion of $p(h)$ obviously hasn't.

It will often be useful to work modulo Casimir-regular terms. When we do so, we denote Casimir-regular terms by $[\dots]_z$.

\subsection{Matching a power-law singularity}
\label{sec:sl2blocksums}

Casimir-singular terms match to a unique asymptotic expansion for coefficients of $\SL(2,\R)$ blocks at large $h$. We can find the right expansion by looking at an example.  Consider the conformal block expansion of $\<\f_1(0)\f_2(z)\f_2(1)\f_1(\oo)\>$, where $\f_{1,2}$ are scalars of dimension $\De/2$ in 1-dimensional MFT,
\be
\label{eq:mftinoned}
\sum_{\ell=0}^\oo \frac{(\De)_\ell^2}{\ell!(\ell+2\De-1)_\ell} k_{2\De+2\ell}(z) &= \p{\frac{z}{1-z}}^{\De}.
\ee
Replacing $z\to 1-z$ and writing $\De=-a$, this can be written
\be
\label{eq:mftrewritten}
\sum_{\substack{h=-a+\ell \\ \ell=0,1,\dots}} S_{a}(h) k_{2h}(1-z) &= y^{a},
\ee
where
\be
y &\equiv \frac{z}{1-z},\\
S_a(h) &\equiv \frac{1}{\G(-a)^2}\frac{\G(h)^2}{\G(2h-1)} \frac{\G(h-a-1)}{\G(h+a+1)}.
\ee
Many formulae will be much simpler in the variable $y$ (and $\bar y$, defined similarly) instead of $z$. Note that $y^a$ is Casimir-singular for generic $a$, while $y^n$ and $y^n \log y$ for nonnegative integer $n$ are Casimir-regular. We will denote Casimir-regular terms by $[\dots]_y$. The crossing transformation $z\to 1-z$ maps $y\to 1/y$.

Casimir-singular terms can only come from an infinite sum of blocks, and they are sensitive only to the asymptotic density of OPE coefficients. Thus, if we change the weights $h$ entering (\ref{eq:mftrewritten}), while preserving the same asymptotic density, only the Casimir-regular terms should change.  For example, changing $-a+\ell\to h_0+\ell$, we expect
\be
\label{eq:generalizedweights}
\sum_{\substack{h=h_0+\ell \\ \ell=0,1,\dots }} S_a(h) k_{2h}(1-z) &= y^a + [\dots]_y.
\ee
The Casimir-singular term $y^a$ is independent of $h_0$, but the Casimir-regular terms $[\dots]_y$ depend on $h_0$. As a sanity check, (\ref{eq:generalizedweights}) is certainly true when $h_0=-a+n$ for nonnegative integer $n$, since we get it by moving the first $n$ terms of (\ref{eq:mftrewritten}) to the right-hand side.

The coefficients $S_a(h)$ will be our building blocks for solving the asymptotic lightcone bootstrap. They encode the all-orders large-$h$ expansion needed to match powers $y^a$. By taking linear combinations, we can match any Casimir-singular term we want. For example, to match an $\SL(2,\R)$ block $k_{2h'}(z)$ in the crossed channel, we can take a linear combination of $S_{h'+m}(h)$ which can be resummed into a ${}_4F_3$ hypergeometric function.

Casimir-regular terms depend on the detailed structure of the weights being summed over.
We can determine the Casimir-regular terms in (\ref{eq:generalizedweights}) as follows. Let us expand $k_{2h}(1-z)$ in small $y$ (equivalently small $z$) inside the sum,
\be
\label{eq:sl2expansionaroundone}
k_{2h}(1-z) &= -\frac{\G(2h)}{\G(h)^2}\sum_{k=0}^\oo \pdr{}{k}\p{T_{-k-1}(h)y^k},\\
\label{eq:expandinside}
\sum_{\substack{h=h_0+\ell \\ \ell=0,1,\dots }} S_a(h) k_{2h}(1-z) 
&=\sum_{\substack{h=h_0+\ell \\ \ell=0,1,\dots }} (1-2h) T_a(h) \sum_{k=0}^\oo \pdr{}{k}\p{T_{-k-1}(h) y^k}.
\ee
Here, we have introduced 
\be
T_a(h) &\equiv \frac{\G(2h-1)}{\G(h)^2} S_a(h) = \frac{1}{\G(-a)^2}\frac{\G(h-a-1)}{\G(h+a+1)}.
\ee

Naively, we might try to switch the order of summation in (\ref{eq:expandinside}),
\be
\sum_{\substack{h=h_0+\ell \\ \ell=0,1,\dots }} S_a(h) k_{2h}(1-z) &\stackrel{?}{=}\sum_{k=0}^\oo \pdr{}{k}\p{y^k\sum_{\substack{h=h_0+\ell \\ \ell=0,1,\dots }} (1-2h) T_a(h)T_{-k-1}(h)}.
\ee
  However, this cannot be correct. If the result converged, it would be Casimir-regular, a contradiction. Indeed, the summand
\be
(1-2h) T_a(h) T_{-k-1}(h)
\ee
grows like $h^{2k-2a-1}$, so the terms with $k>a$ diverge. However, let us analytically continue from the region $a>k$ for each term. After some gymnastics,\footnote{We obtained (\ref{eq:TTsumidentity}) in the following shameful way. When $b=-1$, we can use
\be
\label{eq:gammasumidentity}
\sum_{\ell=0}^\oo \frac{\G(\ell+\a)}{\G(\ell+\b)} &=  \frac{1}{\b-\a-1}\frac{\G(\a)}{\G(\b-1)}.
\ee
When $b=-k-1$ with $k$ a positive integer, $T_{-k-1}(h)$ is a polynomial in $h$ and we can write $T_a(h) T_{-k-1}(h)$ in terms of linear combinations of terms of the form $\frac{\G(\ell+\a)}{\G(\ell+\b)}$ and use (\ref{eq:gammasumidentity}). We did this for several positive integer $k$'s, guessed an answer for general $k$, analytically it continued away from integer $k$, and then checked the result numerically.
} we find
\be
\label{eq:TTsumidentity}
\sum_{\substack{h=h_0+\ell \\ \ell=0,1,\dots }} (1-2h) T_a(h) T_b(h) &= -\frac{(a+h_0)(b+h_0)}{a+b+1} T_a(h_0) T_b(h_0) \equiv \cA_{a,b}(h_0).
\ee

We claim that (\ref{eq:TTsumidentity}) gives the correct coefficients for the Casimir-regular terms in (\ref{eq:generalizedweights}). That is, we have the remarkable exact identity (equation~(\ref{eq:examplesl2sum}) from the introduction)
\be
\label{eq:magicSL2sum}
\sum_{\substack{h=h_0+\ell \\ \ell=0,1,\dots }} S_a(h) k_{2h}(1-z) &= y^a + \sum_{k=0}^\oo \pdr{}{k}\p{\cA_{a,-k-1}(h_0)y^k}.
\ee
One can verify that (\ref{eq:magicSL2sum}) is consistent with the fact that shifting $h_0\to h_0+1$ changes both sides by $-S_a(h_0)k_{2h_0}(1-z)$. We have also extensively checked (\ref{eq:magicSL2sum}) numerically.\footnote{We expect (\ref{eq:magicSL2sum}) can be derived using Sturm-Liouville theory for $\SL(2,\R)$ blocks~\cite{MatthijsBaltFuture}.} We slightly generalize  (\ref{eq:magicSL2sum}) in equation (\ref{eq:mixedblockremarkable}). The special case of this formula where $h_0=0$ was proven recently in~\cite{Perlmutter:2016pkf}, using hypergeometric function identities from~\cite{opac-b1078126}.

The key feature of (\ref{eq:magicSL2sum}) is that it expresses an integer-spaced family of conformal blocks in one channel as an expansion in the other channel. Since families of nearly integer-spaced operators are ubiquitous, we can use (\ref{eq:magicSL2sum}) as a building block for understanding crossing symmetry in general.

\subsection{General coefficients}
\label{sec:generalcoefficients}

Consider a sum of $\SL(2,\R)$ blocks with general coefficients $p(h)$ and integer-spaced weights,
\be
\label{eq:generalcoeffsum}
\sum_{\substack{h=h_0+\ell \\ \ell = 0,1,\dots}} p(h) k_{2h}(1-z).
\ee
If $p(h)$ has the same large-$h$ behavior as a sum of $S_a(h)$'s, the structure of (\ref{eq:generalcoeffsum}) will be similar to (\ref{eq:magicSL2sum}).  To determine the Casimir-singular terms, we match asymptotic expansions,
\be
\label{eq:matchasymptoticofp}
p(h) &\sim \sum_{a\in A} c_a S_a(h)\qquad (h\to \oo),
\ee
where $A$ is some discrete (possibly infinite) set of values depending on the function $p(h)$, and $c_a$ are constants.
We then have
\be
\sum_{\substack{h=h_0+\ell \\ \ell = 0,1,\dots}} p(h) k_{2h}(1-z) &= \sum_{a\in A} c_a y^a + [\dots]_y.
\ee

To compute the Casimir-regular terms, we expand $k_{2h}(1-z)$ inside the sum and then naively switch the order of summation,
\be
-\sum_{k=0}^\oo \pdr{}{k}\p{y^k \sum_{\substack{h=h_0+\ell \\ \ell = 0,1,\dots}} p(h) \frac{\G(2h)}{\G(h)^2}T_{-k-1}(h)}
\ee
Again, the sums in parentheses are divergent for sufficiently large $k$.  However, we can regulate them by adding and subtracting linear combinations of the known answer (\ref{eq:magicSL2sum}) until the sums become convergent.  This gives
\be
\sum_{h=h_0+\ell} p(h) k_{2h}(1-z) &= \sum_{a\in A} c_a y^a + \sum_{k=0}^\oo y^k\p{\a_k[p](h_0)\log y + \b_k[p](h_0)}\\
\a_k[p](h_0) &\equiv \sum_{\substack{a\in A \\ a<K}} c_a \cA_{a,-k-1}(h_0) -\sum_{\substack{h=h_0+\ell \\ \ell = 0,1,\dots}} \p{p(h)-\sum_{\substack{a\in A \\ a<K}} c_a S_a(h)} \frac{\G(2h)}{\G(h)^2}T_{-k-1}(h)\nn\\
\b_k[p](h_0) &\equiv \pdr{}{k} \a_k[p](h_0).
\label{eq:bk}
\ee
If we choose $K\geq k$, then the sum over $h$ in (\ref{eq:bk}) will converge. In fact, the larger we take $K$, the more quickly the sums  converge (since the quantity in parentheses falls off more quickly with $h$). Note that $\a_k$ is analytic in $k$, so we can evaluate its derivative $\b_k$.

\subsection{Non-integer spacing and reparameterization invariance}
\label{sec:nonintegerspacing}

We often encounter sums over $\SL(2,\R)$ blocks $k_{2h}(1-z)$ where the weights $h$ are not integer-spaced. The Casimir-singular terms depend only on the asymptotic density of OPE coefficients.  Thus, for a sequence $h_\ell$ that depends sufficiently nicely on $\ell$, we can compensate for uneven spacing by inserting a factor of $\pdr{h_\ell}{\ell}$, giving the same Casimir-singular part as an integer-spaced sum:
\be
\label{eq:nonintegerspace}
\sum_{\ell=0}^\oo \pdr{h_\ell}{\ell} p(h_\ell) k_{2h_\ell}(1-z) &= \sum_{\ell=0}^\oo p(h_0+\ell) k_{2(h_0+\ell)}(1-z) + [\dots]_y.
\ee
A way to understand (\ref{eq:nonintegerspace}) is that Casimir-singular terms come from asymptotically large $h$, where the sum can be treated as an integral. We are then free to redefine the integration variable and include a Jacobian $\pdr{h_\ell}{\ell}$.  We call this freedom ``reparameterization invariance."

Let us prove (\ref{eq:nonintegerspace}) for an important class of $h_\ell$.  Suppose $h_\ell$ is defined implicitly by
\be
\label{eq:classofh}
h_\ell = h_0 + \ell + \de(h_\ell),
\ee
where $\de(h)$ is an analytic function that behaves like a sum of powers $h^{-b}$ as $h\to \oo$. We have
\be
\pdr{h_\ell}{\ell} &= 1+\pdr{\de(h_\ell)}{h_0} = \p{1 - \de'(h_\ell)}^{-1}.
\ee
Working modulo Casimir-regular terms, we may restrict the sum (\ref{eq:nonintegerspace}) to $\ell\geq L$ for some large $L$ so that $\de(h)$ is small.  Expanding (\ref{eq:nonintegerspace}) in small $\de$, we find the following identity:
\be
\label{eq:derivativemanipulations}
\pdr{h_\ell}{\ell} p(h_\ell) k_{2h_\ell}(1-z) &= \p{1+\pdr{\de(h_\ell)}{h_0}} p(h_\ell) k_{2h_\ell}(1-z)\nn\\
&= \sum_{k=0}^\oo \ptl_{h_0}^k\p{\frac{\de(h_0+\ell)^k}{k!} p(h_0+\ell) k_{2(h_0+\ell)}(1-z)}.
\ee
(One way to motivate why an identity like (\ref{eq:derivativemanipulations}) should exist is to pretend the sum over $\ell$ is an integral and consider an infinitesimal change of variables in the integral.)
Now summing over $\ell$, the terms in parentheses are integer-spaced sums of the type in section~\ref{sec:generalcoefficients}.  They give Casimir-singular contributions that are independent of $h_0$.  Thus, only $k=0$ contributes in (\ref{eq:derivativemanipulations}), modulo Casimir-regular terms. This proves (\ref{eq:nonintegerspace}).

Another way to understand (\ref{eq:nonintegerspace}) is as follows.  The non-integer-spaced sum can be written as a contour integral
\be
\label{eq:contourrepresentation}
\sum_{\ell=0}^\oo \pdr{h_\ell}{\ell} p(h_\ell) k_{2h_\ell}(1-z)
&=
\oint_{-\e-i\oo}^{-\e+i\oo}dh \frac{\pi}{\tan(\pi(h-h_0-\de(h)))} p(h) k_{2h}(1-z).
\ee
The Casimir-singular part must come from the region of the integral $h\to \pm i\oo$, since any sum of blocks with bounded $h$ is Casimir-regular.  However, in this region the $\de$-dependent factor in the integrand approaches a $\de$-independent constant exponentially quickly (assuming $\de(h)$ grows slower than $h$ as $h\to \pm i\oo$):
\be
\frac{\pi}{\tan(\pi(h-h_0-\de(h)))} &\to \mp 1 + O(e^{\mp 2s})\qquad (h=\pm i s).
\ee
Thus, the Casimir-singular part is $\de$-independent and can be obtained by replacing $\tan(\pi(h-h_0-\de(h)))\to \tan(\pi h)$.\footnote{This point of view suggests that reparameterization invariance holds for any $\de(h)$ that grows slower than $h^{1-\e}$ for some $\e>0$ as $h\to \pm i\oo$. In particular, this includes logarithmically growing $\de(h)$, as in Regge trajectories in conformal gauge theories.}

Sums over general weights $h$ with general coefficients $p(h)$ can be computed using the same strategy as in section~(\ref{sec:generalcoefficients}). We obtain Casimir-singular terms from the asymptotic expansion of $p(h)$. We determine Casimir-regular terms by expanding $k_{2h}(1-z)$ inside the sum, naively reversing the order of summation, and regulating the resulting sums over $h$. We give more details in appendix~\ref{app:nonintegerspacing}.

\subsection{Alternating sums and even integer spacing}
\label{sec:alternating}

We will also encounter sums of $\SL(2,\R)$ blocks with insertions of $(-1)^\ell$. To understand these, consider the conformal block expansion of $\<\f_1(0)\f_2(z)\f_1(1)\f_2(\oo)\>$ in 1-dimensional MFT,
\be
\label{eq:mftinonedalternating}
\sum_{\ell=0}^\oo (-1)^\ell \frac{(\De)_\ell^2}{\ell!(\ell+2\De-1)_\ell} k_{2\De+2\ell}(z) &= z^{\De}.
\ee
Substituting $z\to 1-z$ and $\De\to -a$, this can be written
\be
\sum_{\substack{h=-a+\ell \\ \ell=0,1,\dots}} (-1)^\ell S_a(h) k_{2h}(1-z) &= (1+y)^a.
\ee
Note that $(1+y)^a$ is Casimir-regular.
Using the logic of the preceding sections, we conclude that general sums with $(-1)^\ell$ insertions are Casimir-regular,
\be
\sum_\ell (-1)^\ell \pdr{h}{\ell} S_a(h) k_{2h}(1-z) &= [\dots]_y,
\ee
where $h=h_\ell$ is any sequence of the form discussed in section (\ref{sec:nonintegerspacing}).

Let us describe how to compute the Casimir-regular terms in alternating sums.  For simplicity, consider the case of integer-spaced weights and general coefficients $p(h)$,
\be
\sum_{\substack{h=h_0+\ell \\ \ell=0,1,\dots}} (-1)^\ell p(h) k_{2h}(1-z).
\ee
The strategy is the same as before: we expand $k_{2h}(1-z)$ at small $y$, switch the order of summation, and regulate the resulting sums by adding and subtracting known answers. We find
\be
\sum_{h=h_0+\ell} (-1)^\ell p(h) k_{2h}(1-z) &= \sum_{k=0}^\oo y^k\p{\a_k^-[p](h_0)\log y + \b_k^-[p](h_0)},
\ee
where
\be
\a_k^-[p](h_0) &= \sum_{\substack{a\in A \\ a<K}} c_a \cA^-_{a,-k-1}(h_0) -\sum_{\substack{h=h_0+\ell \\ \ell = 0,1,\dots}} (-1)^\ell\p{p(h)-\sum_{\substack{a\in A \\ a<K}} c_a S_a(h)} \frac{\G(2h)}{\G(h)^2}T_{-k-1}(h),\nn\\
\b^-_k[p](h_0) &= \pdr{}{k} \a^-_k[p](h_0).
\ee
Again, $c_a$ are defined by matching asymptotic expansions $p(h)\sim \sum_{a\in A} c_a S_a(h)$.
The quantity $\cA^-_{a,b}(h_0)$ is given by
\be
\cA_{a,b}^-(h_0) &\equiv \sum_{\substack{h=h_0+\ell \\ \ell=0,1,\dots }} (-1)^\ell (1-2h) T_a(h) T_b(h),
\ee
analytically continued in $a$ from the region where the sum converges.

We have not found a simple closed-form expression for $\cA^-_{a,b}(h_0)$ for general $a,b$.  However, we can evaluate it to arbitrary accuracy as follows.  Using similar tricks to before, we can compute the case $b=-1$:
\be
\cA^-_{a,-1}(h_0) &= \sum_{\substack{h=h_0+\ell \\ \ell=0,1,\dots }} (-1)^\ell (1-2h) T_a(h) = -(h_0+a)T_a(h_0).
\ee
This can be used to regularize the sum for general $b\neq -1$.  Note that $T_a(h)T_b(h)$ has the same large-$h$ expansion as
\be
\label{eq:matchexpansions}
T_a(h)T_b(h) &\sim \sum_{k=0}^\oo t_{a,b}(k) T_{a+b+k+1}(h)\nn\\
t_{a,b}(k) &\equiv \frac{\G(-1-a-b)^2}{\G(-a)^2\G(-b)^2}  \frac{(a+1)_k(b+1)_k}{(a+b+2)_k}\frac{(-1)^k}{k!}.
\ee
Thus, we have
\be
\label{eq:convergentexprforaminus}
\cA_{a,b}^-(h_0) &=  \sum_{k = 0}^K t_{a,b}(k)\cA^-_{a+b+k+1,-1}(h_0) \nn\\
&\quad
+ \sum_{\substack{h=h_0+\ell \\ \ell=0,1,\dots}} (-1)^\ell (1-2h)\p{T_a(h)T_b(h) - \sum_{k=0}^K t_{a,b}(k) T_{a+b+k+1}(h)},
\ee
where $K > -a-b-5/2$ is taken large enough that the sum over $h$ converges. The larger we take $K$, the faster the sum converges. When $a$ or $b$ is a negative integer, the expansion (\ref{eq:matchexpansions}) truncates and becomes an equality, and we can omit the second line in (\ref{eq:convergentexprforaminus}).

We will also need to evaluate sums with even-integer spacing. These are an average of alternating and non-alternating sums,
\be
\sum_{\substack{h=h_0+\ell \\ \ell = 0,2,\dots}} p(h) k_{2h}(1-z) &= \frac 1 2 \sum_{a\in A} c_a y^a + \sum_{k=0}^\oo y^k\p{\a_k^\mathrm{even}[p](h_0) \log y + \b_k^\mathrm{even}[p](h_0)},\nn\\
\a_k^\mathrm{even}[p](h_0) &\equiv \frac 1 2 \p{\a_k[p](h_0)+\a^-_k[p](h_0)},\nn\\
\b_k^\mathrm{even}[p](h_0) &\equiv \frac 1 2 \p{\b_k[p](h_0)+\b^-_k[p](h_0)}.
\ee
Similarly, we define
\be
\cA^\mathrm{even}_{a,b}(h_0) &\equiv \frac 1 2 \p{\cA_{a,b}(h_0)+\cA^-_{a,b}(h_0)}.
\ee

\subsection{Mixed blocks}

Correlation functions of operators $\<\f_1\f_2\f_3\f_4\>$ with different scaling dimensions can be expanded in $\SL(2,\R)$ blocks of the form
\be
\label{eq:mixedsl2block}
k_{2h}^{r,s}(z) &\equiv z^h (1-z)^{-r} {}_2F_1(h-r,h+s,2h,z),
\ee
where $r=h_1-h_2, s=h_3-h_4$.
We include the unconventional factor $(1-z)^{-r}$ because it simplifies several formulae later on. It also ensures that $k^{r,s}_{2h}(z)$ is symmetric in $r$ and $s$, by elementary hypergeometric function identities.  Casimir-regular terms for the mixed block (\ref{eq:mixedsl2block}) are of the form $y^{n-r}$ and $y^{n-s}$ for nonnegative integer $n$.

The mixed block analog of $S_a(h)$ is
\be
S_a^{r,s}(h) &\equiv \frac{1}{\G(-a-r)\G(-a-s)}\frac{\G(h - r)\G(h - s)}{\G(2 h - 1)} \frac{\G(h - a - 1)}{\G(h + a + 1)}.
\ee
These coefficients satisfy the 1-dimensional MFT equation
\be
\sum_{\substack{\bar h = \ell - a \\ \ell = 0,1,\dots}} S_a^{r,s}(h) k_{2h}^{r,s}(1-z) &= y^a,
\ee
and its generalization in the spirit of the previous sections\footnote{The meaning of $[\dots]_y$ depends on what type of $\SL(2,\R)$ blocks we are summing over. Here, it refers to terms of the form $y^{n-r}$ and $y^{n-s}$.  For the case $r=s=0$, it refers to terms of the form $y^n$ and $y^n \log y$.}
\be
\label{eq:mixedsl2sumgeneralization}
\sum_{\ell=0}^\oo \pdr{h}{\ell} S_a^{r,s}(h) k_{2h}^{r,s}(1-z) &= y^a + [\dots]_y.
\ee

Using (\ref{eq:TTsumidentity}), we also find a generalized version of (\ref{eq:magicSL2sum}) giving the explicit Casimir-regular terms in an integer-spaced sum of mixed blocks 
\be
&\sum_{\substack{h=h_0+\ell \\ \ell=0,1,\dots}} S_a^{r,s}(h)k_{2h}^{r,s}(1-z)\nn\\
 &= y^a + \frac{\pi}{\sin(\pi(s-r))}\frac{\G(-a)^2}{\G(-a-r)\G(-a-s)}\sum_{k=0}^\oo \p{\frac{\G(k+1-r)^2 \cA_{a,r-k-1}(h_0)}{\G(k+1+s-r)k!}y^{k-r} - (r\leftrightarrow s)}.
\label{eq:mixedblockremarkable}
\ee

\section{Large spin asymptotics to all orders}
\label{sec:allorderslightcone}

\subsection{Basic idea}

Equipped with the results of section~\ref{sec:families}, we can solve the asymptotic lightcone bootstrap. The idea is to expand both sides of the crossing equation in $y,\bar y$ and match $y^a$ on one side to $S_a(\bar h)$ on the other. For the lowest family of double-twist operators $[\f\f]_0$, we have an equation of the form (\ref{eq:alsologzbarterm}), which in the $y$ variables reads
\be
\label{eq:lowesttwistfamily}
y^{-2h_\f} + \sum_i y^{h_i-2h_\f}\p{A_i \log \bar y + B_i + O(\bar y)} &= \sum_{\cO \in [\f\f]_0} f_{\f\f\cO}^2 \bar y^{h_\cO-2h_\f} k_{2\bar h_\cO}(1-z) + \dots.
\ee
Here, ``$\dots$" represents other operators that are unimportant for this computation.  Note that the $y$ variables make the unit operator block very simple. For other operators, expanding in $y$ instead of $z$ is equivalent to shuffling around contributions of descendants.

The $h_i$ are weights of primary and descendant operators in the $\f\x\f$ OPE\@. We can match the left-hand side by choosing
\be
h_{[\f\f]_0}(\bar h) &= 2h_\f + \de_{[\f\f]_0}(\bar h), \\
f_{\f\f[\f\f]_0}(\bar h)^2 &= \pdr{\bar h}{\ell}\l_{\f\f[\f\f]_0}(\bar h)^2  = \p{1-\pdr{\de_{[\f\f]_0}(\bar h)}{\bar h}}^{-1} \l_{\f\f[\f\f]_0}(\bar h)^2,
\label{eq:getopecoeff}
\ee
where
\be
\label{eq:examplelargespinallorders}
\l_{\f\f[\f\f]_0}(\bar h)^2 &\sim 2S_{-2h_\f}(\bar h) + 2\sum_i B_i S_{h_i-2h_\f}(\bar h),\nn\\
\l_{\f\f[\f\f]_0}(\bar h)^2 \de_{[\f\f]_0}(\bar h) &\sim 2 \sum_i A_i S_{h_i-2h_\f}(\bar h).
\ee
Here, ``$\sim$" means the two sides have the same large-$\bar h$ expansion.
We include factors of $2$ in (\ref{eq:examplelargespinallorders}) because the family $[\f\f]_0$ only contains even spin operators. Dividing, we find
\be
\label{eq:gammaequationallorders}
\de_{[\f\f]_0}(\bar h) &\sim \frac{\sum_i A_i S_{h_i-2h_\f}(\bar h)}{S_{-2h_\f}(\bar h) + \sum_i B_i S_{h_i-2h_\f}(\bar h)}.
\ee
Once we know $\de_{[\f\f]_0}(\bar h)$, we can obtain the OPE coefficients $f_{\f\f[\f\f]_0}$ from (\ref{eq:getopecoeff}). Expanding in large $\bar h$ gives a series with terms of the form $1/\bar h^{2(h_{i_1} + \dots + h_{i_k})+n}$.  

In (\ref{eq:gammaequationallorders}), we can see explicitly why the large-spin expansion for $\de_{[\f\f]_0}(\bar h)$ is naturally organized in terms of the Casimir eigenvalue $J^2=\bar h(\bar h - 1)$ as discussed in~\cite{Alday:2015eya}.  The reason is that ratios of $S_a(\bar h)$ are also ratios of $T_a(\bar h)=\frac{\G(2\bar h -1)}{\G(\bar h)^2}S_a(\bar h)$, which has a series expansion in $J^2$,
\be
T_a(\bar h) &= \frac{1}{J^{2a}} \p{t_0+ \frac{t_2}{J^2} + \frac{t_4}{J^4} + \dots}.
\ee

We have suppressed an important subtlety in (\ref{eq:lowesttwistfamily}). The OPE $\f\x\f$ contains an infinite number of operators with bounded $h$ (for example, the families $[\f\f]_n$) themselves. Thus the sum on the left-hand side,
\be
\sum_i y^{h_i} (A_i \log \bar y + B_i),
\ee
may not converge. 
For simplicity, suppose all the $h_i=h$ are the same. The correct procedure is to perform the sum over $i$ first, before expanding in $\bar y$, using the methods of section~\ref{sec:sl2blocksums}. This leads to
\be
\label{eq:correctlhssum}
y^h \sum_a c_a \bar y^a + y^h(\bar A \log \bar y + \bar B + O(\bar y)),
\ee
where $\bar A$ and $\bar B$ are regularized versions of the sums over $A_i$ and $B_i$.  The $\bar y^a$ terms are Casimir-singular in $\bar y$, and will be cancelled by other operators on the right-hand side of (\ref{eq:lowesttwistfamily}).  The remaining $\bar y$-Casimir-regular (but still $y$-Casimir-singular) terms $y^h \bar A \log \bar y$ and $y^h \bar B$ contribute to anomalous dimensions and OPE coefficients of $[\f\f]_0$, respectively.
The $\bar y$-Casimir-singular terms in (\ref{eq:correctlhssum}) can also include $\log^n \bar y$ contributions related to higher-order exponentiation of anomalous dimensions, and discussed in section~\ref{sec:whatabout}. We will see several examples in section~\ref{sec:asymptoticsappliedtoising}.

Thus, the techniques of section~\ref{sec:sl2blocksums} for summing $\SL(2,\R)$ blocks have two roles to play. Firstly, they let us match Casimir-singular terms in one channel to $\bar h$-dependence in the other channel. Secondly, they let us resum operators whose twists have accumulation points.

Naively this leads to an impasse: we must resum $[\f\f]_0$ before finding how it contributes to its own anomalous dimensions $\de_{[\f\f]_0}$. However, it turns out that $[\f\f]_0$ contributes to its own anomalous dimensions only at order $\de_{[\f\f]_0}^2$ and higher. (This is related to the fact that Mean Field Theory has no anomalous dimensions.) Thus, both the resummation and the matching to $\bar h$-dependence will be possible. We will see this explicitly in section~\ref{sec:contributiontoself}.

\subsection{Why asymptotic?}

We have been careful to write ``$\sim$" instead of ``$=$" because the relations~(\ref{eq:examplelargespinallorders}) are not necessarily equalities. In fact, taken literally, the expressions on the right-hand side may not even converge to functions of $\bar h$. 
Instead, they represent equivalence classes of functions with the same asymptotic expansions at large $\bar h$. For example, both sides of
\be
\label{eq:equivalenceclasses}
T_b(\bar h) T_a(\bar h) &\sim \frac{\G(-1-a-b)^2}{\G(-a)^2\G(-b)^2} \sum_{k=0}^\oo \frac{(a+1)_k(b+1)_k}{(a+b+2)_k}\frac{(-1)^k}{k!} T_{a+b+k+1}(\bar h)
\ee
formally have the same large-$\bar h$ expansion, but they are different. In fact, the sum on the right diverges. We must interpret (\ref{eq:equivalenceclasses}) in terms of large-$\bar h$ equivalence classes.

The asymptotic nature of the large-$\bar h$ expansion for double-twist operators makes mathematical and physical sense. Mathematically, a given Casimir-singular term only determines an asymptotic density of coefficients on the other side of the crossing equation. Any change in the density at finite $\bar h$ contributes to Casimir-regular terms. Thus, we cannot fix the actual function of $\bar h$ without simultaneously considering all Casimir-regular terms.

Physically, it is ambiguous which twist family (if any) we should assign a given operator to. For instance, should we assign $T_{\mu\nu}$ to the family $[\s\s]_0$, or should the family should start at spin-$4$ or higher? Twist families only make sense as infinite collections of operators with unbounded spin. We shouldn't necessarily expect to write analytic expressions that interpolate between their OPE coefficients and dimensions at finite $\ell$. On the other hand, we might expect a convergent large-$\bar h$ expansion for an object that packages together {\it all\/} operators in the theory, and does not try to distinguish them into twist families.

When our theory has extra structure, twist families may become well-defined even at finite spin. For example, in a large-$N$ expansion, we have a well-defined classification of operators into single-trace, double-trace, etc.. Consequently, large-$\bar h$ equivalence classes in large-$N$ theories should have distinguished representatives. See, for example, in~\cite{Aharony:2016dwx}. Similar remarks hold in weakly-coupled theories.

\subsection{General double-twist families}
\label{sec:generaldtfamilies}

Let us be more explicit and derive all-orders expansions for OPE coefficients and anomalous dimensions of double twist families $[\f_i\f_j]_n$ for all $n\geq 0$. For generality, we study mixed four-point functions $\<\f_1\f_2\f_3\f_4\>$ of scalars with possibly different external dimensions.
 
We use a slightly unconventional definition for $\SO(d,2)$ blocks,
\be
G_{h,\bar h}^{r,s}(z,\bar z) &\equiv  ((1-z)(1-\bar z))^{-r}g^{2r,2s}_{h+\bar h,\bar h - h}(z,\bar z),
\ee
where $g^{\De_{12},\De_{34}}_{\De,\ell}(z,\bar z)$ are the mixed scalar blocks of~\cite{DO3} with coefficient $c_\ell=1$.\footnote{Our blocks differ from those of~\cite{Kos:2014bka} by $G_\mathrm{ours}(u,v) = v^{-\frac{\De_{12}}{2}} (-1)^\ell \frac{4^\De (2\nu)_\ell}{(\nu)_\ell} g_{\mathrm{theirs}}(u,v)$.} Using identities from~\cite{DO3}, one can show that our $G_{h,\bar h}^{r,s}(z,\bar z)$ is symmetric under $r\leftrightarrow s$. The extra factors $((1-z)(1-\bar z))^{-r}=v^{-r}$ simplify the crossing equations in the $y,\bar y$ variables and make the symmetry between $r$ and $s$ manifest.  For brevity, we omit $r,s$ when they are zero.

The four-point function $\<\f_1\f_2\f_3\f_4\>$ has conformal block expansion\footnote{The ordering $f_{12\cO}f_{43\cO}$ differs from the $f_{12\cO}f_{34\cO}$ ordering in~\cite{Kos:2014bka} because our blocks differ by $(-1)^\ell$ times positive factors. We have reabsorbed this $(-1)^\ell$ by using $f_{34\cO}=(-1)^{\ell_\cO} f_{43\cO}$. A useful way to remember the correct sign is to note that $\<\phi_1(0)\f_2(z)|\cO|\f_2(1)\f_1(\oo)\>$ is the norm of a state in radial quantization, where $|\cO|$ is a projector onto the conformal multiplet of $\cO$. Thus, it should be positive, which implies that it should have coefficient $f_{12\cO}^2$ in the conformal block expansion.}
\be
& \<\f_1(x_1)\f_2(x_2)\f_3(x_3)\f_4(x_4)\> &= \frac{1}{x_{12}^{\De_1+\De_2} x_{34}^{\De_3+\De_4}}\frac{x_{14}^{\De_{34}} x_{23}^{\De_{12}}}{x_{13}^{\De_{12}+\De_{34}}} \sum_{\cO} f_{12\cO}f_{43\cO} G_{h_\cO,\bar h_\cO}^{h_{12},h_{34}}(z,\bar z),
\ee
where $h_{ij}\equiv h_i-h_j=\frac{\De_{ij}}{2}$. The coefficients $f_{ij\cO}$ are real in unitary theories.
Demanding symmetry under $1\leftrightarrow 3$ gives the crossing equation
\be
\label{eq:mixedcrossingequation}
  y^{-h_1-h_3} \sum_{\cO} f_{32\cO}f_{41\cO} G_{h_\cO,\bar h_\cO}^{h_{32},h_{14}}(z,1-\bar z)
 &=
 \bar y^{-h_1-h_3} \sum_{\cO} f_{12\cO}f_{43\cO} G_{h_\cO,\bar h_\cO}^{h_{12},h_{34}}(\bar z,1- z).
\ee

\subsubsection{Sums over $n$ and $\ell$}
\label{sec:sumovernandell}

The coefficients $S_a^{r,s}(\bar h)$ give a simple result when summed over a single family of $\SL(2,\R)$ blocks.  However, in $d$-dimensions, double-twist operators come in doubly-infinite families, labeled both by $\ell$ and $n$ such that $h\approx h_0+n$.  
The $d$-dimensional analog of $S_a^{r,s}(\bar h)$ will be coefficients $C^{(n)r,s}_a(h_0,\bar h)$ that, when summed over both $\ell$ and $n$, produce a simple result,
\be
\label{eq:magicidentityddims}
\sum_{n=0}^\oo \sum_{\ell=0}^\oo \pdr{\bar h}{\ell} C_a^{(n)r,s}(h_0,\bar h) G^{r,s}_{h_0+n,\bar h}(\bar z,1-z) &= \bar y^{h_0} y^a + [\dots]_{y}.
\ee

We can obtain the $C_a^{(n)r,s}(h_0,\bar h)$ by expanding $\SO(d,2)$ blocks in terms of $\SL(2,\R)$ blocks and using what we know about the coefficients $S_a^{r,s}(\bar h)$.
A simple example is in 2-dimensions, where $\SO(2,2)$ blocks are just products of $\SL(2,\R)$ blocks,\footnote{Here, we organize operators into irreps of $\SO(2)$, and not traceless symmetric tensors of $\SO(2)$. The latter convention would give an additional term $z\leftrightarrow \bar z$.} 
\be
G_{h,\bar h} (z,\bar z) &= k_{2h}(z) k_{2\bar h}(\bar z)\qquad (d=2),
\ee
(for simplicity we take $r=s=0$). Then we have
\be
C^{(n)}_a(h_0,\bar h) &= S_{-h_0}(h_0+n)S_a(\bar h) \qquad (d=2).
\ee
In general, $\SO(d,2)$ blocks have an expansion of the form\footnote{The 2d global conformal group $\SL(2,\R)_L\x\SL(2,\R)_R$ is a subgroup of $\SO(d,2)$.  The expansion (\ref{eq:expansionforblocks}) follows from decomposing an $\SO(d,2)$ multiplet into multiplets of $\R^* \x \SL(2,\R)_R$, where $\R^*$ is the Cartan of $\SL(2,\R)_L$.}
\be
\label{eq:expansionforblocks}
G^{r,s}_{h,\bar h}(z,\bar z) &= \sum_{n=0}^\oo \sum_{j=-n}^n A^{r,s}_{n,j}(h,\bar h) y^{h+n} k^{r,s}_{2(\bar h + j)}(\bar z).
\ee
The coefficients $A^{r,s}_{n,j}(h,\bar h)$ can be determined, for example, by solving the $\SO(d,2)$ Casimir equation order-by-order in $y$. Alternatively, we can obtain them from the decomposition of $d$-dimensional blocks into $2$-dimensional blocks~\cite{Hogervorst:2016hal}. The first few coefficients are
\be
A_{0,0}^{r,s}(h,\bar h) &= 1,\nn\\
A_{1,-1}^{r,s}(h,\bar h) &= \frac{\nu  ({\bar h}-h)}{{\bar h}-h+\nu -1},\nn\\
A_{1,0}^{r,s}(h,\bar h) &=\frac{s+r-h}{2}- \frac{r s ({\bar h}^2-{\bar h}-h \nu+\nu)}{2({\bar h}-1) {\bar h} (h-\nu)},\nn\\
A_{1,1}^{r,s}(h,\bar h) &= \frac{\nu  (h+{\bar h}-1) ({\bar h}-r) ({\bar h}+r) ({\bar h}-s) ({\bar h}+s)}{4 {\bar h}^2 (2 {\bar h}-1) (2 {\bar h}+1) (h+{\bar h}-\nu)},
\label{eq:somecoefficients}
\ee
where $\nu=\frac{d-2}{2}$.\footnote{Equations (\ref{eq:expansionforblocks}) and (\ref{eq:somecoefficients}) are subtle in even dimensions because the limit $\nu\to \frac{d-2}{2}$ does not commute with the limit $\bar h\to h+\ell$ when both $d/2$ and $\ell$ are integers. This is easily visible for the case $\nu=1$ and $\bar h - h=0$ in $A_{1,-1}^{r,s}$ in (\ref{eq:somecoefficients}). To get the correct block, one must take the limit $\nu\to \frac{d-2}{2}$ last. On the other hand, in even dimensions the blocks have simple analytic formulae~\cite{DO1,DO2}, and one can simplify the present analysis by using those specialized formulae.  For example, after multiplying the crossing equation in $4d$ by $\frac{z-\bar z}{z\bar z}$, one obtains products of $\SL(2,\R)$ blocks, and the analysis becomes similar to 2d.}

Since the leading $\bar y$-dependence of $G_{h_0,\bar h}^{r,s}(\bar z,1-z)$ is simply $\bar y^{h_0} k_{2\bar h}^{r,s}(1-z)$, if we take
\be
C_a^{(0)r,s}(h_0,\bar h) &= S_a^{r,s}(\bar h),
\ee
then the $\bar y^{h_0}$ terms on both sides of (\ref{eq:magicidentityddims}) will agree, by equation~(\ref{eq:mixedsl2sumgeneralization}). We can then choose the $n>0$ coefficients to cancel higher-order terms in $\bar y$. This gives a recursion relation
\be
\label{eq:recursionrelationforCs}
C_a^{(n)r,s}(h_0,\bar h) &= -\sum_{m=1}^n \sum_{j=-m}^m C^{(n-m)r,s}_a(h_0,\bar h - j) A_{m,j}^{r,s}(h_0+n-m,\bar h - j)\qquad(n>0),
\ee
that determines all the higher $C^{(n)}$'s.

As a cross-check, recall that $d$-dimensional MFT has conformal block expansion
\be
\sum_{n,\ell=0}^\oo C^{\mathrm{MFT}}_{n,\ell}(\De_1,\De_2) G^{\frac{\De_{12}}{2},\frac{\De_{21}}{2}}_{\frac{\De_1+\De_2}{2}+n,\frac{\De_1+\De_2}{2}+n+\ell}(z,\bar z)&= y^{\frac{\De_1+\De_2}{2}} \bar y^{\frac{\De_1+\De_2}{2}},
\ee
with coefficients given by~\cite{Fitzpatrick:2011dm}
\be
&C^{\mathrm{MFT}}_{n,\ell}(\De_1,\De_2) = \nn \\
&\frac{(\De_1-\nu)_n (\De_2-\nu)_n (\De_1)_{\ell+n} (\De_2)_{\ell+n}}{\ell! n! (\ell+\nu+1)_n (\De_1+\De_2+n-2 \nu-1)_n (\De_1+\De_2+2n+\ell-1)_\ell (\De_1+\De_2+n+\ell-\nu-1)_n}.
\ee
To be consistent with (\ref{eq:magicidentityddims}), we must have
\be
C_{-\frac{\De_1+\De_2}{2}}^{(n)\frac{\De_{12}}{2},\frac{\De_{21}}{2}}\p{\frac{\De_1+\De_2}{2},\frac{\De_1+\De_2}{2}+n+\ell} &= C^\mathrm{MFT}_{n,\ell}(\De_1,\De_2).
\ee
We have checked this explicitly for $n=0,1,2$.  Although $C_{n,\ell}^\textrm{MFT}(\De_1,\De_2)$ has a simple formula, we have not found a  closed-form expression for $C^{(n)r,s}_a(h_0,\bar h)$ in general dimensions.

\subsubsection{Small $\bar y$ expansion of the left-hand side}

On the left-hand side of the crossing equation, we should expand the blocks $G_{h,\bar h}(z,1-\bar z)$ in small $\bar y$.  As a starting point, the $\SL(2,\R)$ blocks have an expansion
\be
k_{2\bar h}^{r,s}(1-\bar z) &=  \sum_{k=0}^\oo \p{K^{r,s}_{k}(\bar h) \bar y^{k-r} + K^{s,r}_{k}(\bar h) \bar y^{k-s}},\\
K^{r,s}_k(\bar h) &\equiv  
\frac{\G(r-s)\G(1+s-r)}{\G(k+1)\G(k+1+s-r)}\frac{\G(2\bar h)}{\G(\bar h - r)\G(\bar h -s)}\frac{\G(\bar h + k-r)}{\G(\bar h - k+r)}.
\ee
Thus, we have
\be
G_{h,\bar h}^{r,s}(z, 1-\bar z) &= \sum_{m,k=0}^{\oo} y^{h+m}\p{P_{m,k}^{r,s}(h,\bar h) \bar y^{k-r} + P_{m,k}^{s,r}(h,\bar h) \bar y^{k-s}},
\label{eq:fullblockaroundone}\nn
\\
P_{m,k}^{r,s}(h,\bar h) &\equiv \sum_{j=-m}^m A_{m,j}^{r,s}(h,\bar h) K_k^{r,s}(\bar h + j).
\ee
In the special case $r=s$, this becomes
\be
G_{h,\bar h}^{r,r}(z, 1-\bar z) &= \sum_{m,k=0}^\oo y^{h+m} \pdr{}{k}(Q^{r}_{m,k}(h,\bar h) \bar y^{k-r}),\\
Q^r_{m,k}(h,\bar h) &\equiv \lim_{s\to r}\p{ (s-r) P_{m,k}^{r,s}(h,\bar h)}\nn\\
&=-\sum_{j=-m}^m A_{m,k}^{r,s}(h,\bar h) \frac{\Gamma (2 \bar h + 2j)}{\Gamma (k+1)^2 \Gamma (\bar h + j-r)^2}\frac{ \Gamma (\bar h + j+k-r)}{ \Gamma (\bar h + j-k+r)}.
\label{eq:defofQ}
\ee

\subsubsection{Matching the two sides}
\label{sec:matchingtosides}

Using (\ref{eq:fullblockaroundone}), the left-hand side of the crossing equation (\ref{eq:mixedcrossingequation}) is
\be
\label{eq:lhscrossingexpanded}
 &y^{-h_1-h_3}\sum_{\cO} f_{32\cO}f_{41\cO} G_{h_\cO,\bar h_\cO}^{h_{32},h_{14}}(z,1-\bar z)\nn\\
  &= \bar y^{-h_1-h_3} \nn\\
  & \quad \x \sum_\cO f_{32\cO}f_{41\cO} \sum_{m,k=0}^\oo y^{h_\cO+m-h_1-h_3} \p{P_{m,k}^{h_{32},h_{14}}(h_\cO,\bar h_\cO) \bar y^{k+h_1+h_2} + P^{h_{14},h_{32}}_{m,k}(h_\cO,\bar h_\cO) \bar y^{k+h_3+h_4}}.
\ee
Let us assume that the terms $\bar y^{k+h_1+h_2}$ match the families $[\f_1\f_2]_n$ with $n\leq k$ on the right-hand side, while $\bar y^{k+h_3+h_4}$ match $[\f_3\f_4]_n$ with $n\leq k$. (We return to this assumption in section~\ref{sec:twisthamiltonian}.) As before, define $\l_{ij[kl]_n}$ by
\be
\label{eq:jacobianstuff}
f_{ij[kl]_n}(\bar h) &= \l_{ij[kl]_n}(\bar h)\p{\pdr{\bar h}{\ell}}^{1/2} = \l_{ij[kl]_n}(\bar h) \p{ 1 - \pdr{\de_{[kl]_n}(\bar h)}{\bar h}}^{-1/2}.
\ee
Using (\ref{eq:magicidentityddims}) and working order-by-order in $\bar y$, we find 
\be
\label{eq:productsoflargespinope}
 \l_{12[12]_n}(\bar h) \l_{43[12]_n}(\bar h)
&\sim
\sideset{}{'}\sum_{\substack{\cO \in 2\x 3 \\ m\geq 0}} f_{32\cO}f_{41\cO} U^{(n)1234}_{\cO,m}(\bar h),
\ee
where
\be
U^{(n)1234}_{\cO,m}(\bar h) &\equiv \sum_{k=0}^n  P_{m,n-k}^{h_{32},h_{14}}(h_\cO,\bar h_\cO) C_{h_\cO+m-h_3-h_1}^{(k)h_{12},h_{34}}(h_1+h_2+n-k,\bar h).
\ee
The sum $\sum'_{\cO \in 2\x 3, m \geq 0}$ runs over operators $\cO$ in the $\f_2\x \f_3$ OPE and their descendants organized by weights under $\SL(2,\R)_L$.  The prime indicates that we must regularize the sum, as discussed above and demonstrated in sections~\ref{sec:asymptoticsappliedtoising} and~\ref{sec:twisthamiltonian}. 

By the same logic with $4\leftrightarrow 3$ swapped, we obtain
\be
 \l_{12[12]_n}(\bar h) \l_{34[12]_n}(\bar h) &= (-1)^\ell \l_{12[12]_n}(\bar h)\l_{43[12]_n}(\bar h)
 \sim  \sideset{}{'}\sum_{\substack{\cO \in 2\x 4 \\ m\geq 0}} f_{42\cO}f_{31\cO} U^{(n)1243}_{\cO,m}(\bar h),\label{eq:flip34}
\ee
where we used $\l_{ij\cO}=(-1)^{\ell_\cO} \l_{ji\cO}$.
Naively, equations (\ref{eq:productsoflargespinope}) and (\ref{eq:flip34}) seem to contradict each other. However, the meaning of (\ref{eq:productsoflargespinope}) and (\ref{eq:flip34}) is that the $\bar h$-dependence above reproduces the correct Casimir-singular terms on the other side of the crossing equations. We are free to add contributions that do not change the Casimir-singular part of the sum over blocks. As we learned in section~\ref{sec:alternating}, sums with a $(-1)^\ell$ insertion are Casimir-regular.  Thus, we can safely add the two contributions,
\be
\label{eq:addspinflip}
\l_{12[12]_n}(\bar h) \l_{43[12]_n}(\bar h) &\sim 
\sideset{}{'}\sum_{\substack{\cO \in 2\x 3 \\ m\geq 0}} f_{32\cO}f_{41\cO} U^{(n)1234}_{\cO,m}(\bar h)
+
(-1)^\ell (3\leftrightarrow 4)
,
\ee
and this single formula produces the correct Casimir-singular terms in both cases.\footnote{One can check that (\ref{eq:addspinflip}) is consistent with the symmetry $\l_{ij\cO}=(-1)^{\ell_\cO} \l_{ji\cO}$ for both $\l_{12[12]_n}$ and $\l_{43[12]_n}$.} The two terms in (\ref{eq:addspinflip}) are illustrated in figure~\ref{fig:generallargespindiag}.

\begin{figure}
\begin{center}
\be
\begin{array}{ccc}
\begin{tikzpicture}[xscale=0.6,yscale=0.6,baseline=(B.base)]
\draw[thick] (0,0) -- (0,6);
\draw[thick] (5,0) -- (5,6);
\draw[thick, dashed] (0,3) -- (5,3);
\node[below] at (0,0) {$\phi_1$};
\node[below] at (5,0) {$\phi_2$};
\node[above] at (0,6) {$\phi_4$};
\node[above] at (5,6) {$\phi_3$};
\node[above] at (2.5,3) {$\cO$};
\node[left] at (0,3) {$f_{41\cO}$};
\node[right] at (5,3) {$f_{32\cO}$};
\node (B) [] at (2.5,2.8) {};
\end{tikzpicture}
&
+
&
(-1)^\ell
\begin{tikzpicture}[xscale=0.6,yscale=0.6,baseline=(B.base)]
\draw[thick] (0,0) -- (0,6);
\draw[thick] (5,0) -- (5,6);
\draw[thick, dashed] (0,3) -- (5,3);
\node[below] at (0,0) {$\phi_1$};
\node[below] at (5,0) {$\phi_2$};
\node[above] at (0,6) {$\phi_3$};
\node[above] at (5,6) {$\phi_4$};
\node[above] at (2.5,3) {$\cO$};
\node[left] at (0,3) {$f_{31\cO}$};
\node[right] at (5,3) {$f_{42\cO}$};
\node (B) [left] at (2.5,2.8) {};
\end{tikzpicture}
\end{array}
\ee
\end{center}
\caption{
Large-spin diagrams for the contribution of $\cO$ to $\l_{12[12]_n}\l_{43[12]_n}$ in (\ref{eq:productsoflargespinope}).
}
\label{fig:generallargespindiag}
\end{figure}

In the special case $h_1+h_2=h_3+h_4$, (\ref{eq:lhscrossingexpanded}) develops $\log\bar y$-dependence (because $P_{m,k}^{r,s}$ has a pole at $r=s$), and we instead find a formula for products of OPE coefficients and anomalous dimensions,
\be
\label{eq:mixingpossible}
&\l_{12[12]_n}(\bar h) \l_{43[12]_n}(\bar h) \de_{[12]_n}(\bar h) + \l_{12[34]_n}(\bar h) \l_{43[34]_n}(\bar h) \de_{[34]_n}(\bar h) \nn\\
&\qquad\sim
\sideset{}{'}\sum_{\substack{\cO \in 2\x 3 \\ m\geq 0}}
f_{32\cO}f_{41\cO} V_{\cO,m}^{(n)1234}(\bar h)
+
(-1)^\ell (3\leftrightarrow 4)
,
\\
&\l_{12[12]_n}(\bar h) \l_{43[12]_n}(\bar h) + \l_{12[34]_n}(\bar h) \l_{43[34]_n}(\bar h)\nn\\
&\qquad\sim
\sideset{}{'}\sum_{\substack{\cO \in 2\x 3 \\ m\geq 0}}
f_{32\cO}f_{41\cO} W_{\cO,m}^{(n)1234}(\bar h)
+
(-1)^\ell (3\leftrightarrow 4),
\ee
where $V,W$ are defined by
\be
\label{eq:implicitdefofVW}
(V^{(n)1234}_{\cO,m}(\bar h)\log \bar y + W^{(n)1234}_{\cO,m}(\bar h))\bar y^{h_1+h_2} &\equiv
\lim_{h_3+h_4 \to h_1+h_2} U^{(n)1234}_{\cO,m}(\bar h) \bar y^{h_1+h_2}+U^{(n)3412}_{\cO,m}(\bar h) \bar y^{h_3+h_4}.
\ee
More explicitly, they are given by
\be
V^{(n)1234}_{\cO,m}(\bar h) &=
\sum_{k=0}^n Q_{m,n-k}^{h_{32}}(h_\cO,\bar h_\cO) C_{h_\cO+m-h_3-h_1}^{(k)h_{12},h_{34}}(h_1+h_2+n-k,\bar h),\\
W^{(n)1234}_{\cO,m}(\bar h) &= \sum_{k=0}^n \pdr{}{n}\p{Q_{m,n-k}^{h_{32}}(h_\cO,\bar h_\cO) C_{h_\cO+m-h_3-h_1}^{(k)h_{12},h_{34}}(h_1+h_2+n-k,\bar h)}.
\ee

Specializing further, we will need the case where the pairs of operators $\f_{1,2}$ and $\f_{3,4}$ are actually the same.  Since now only a single family $[12]_n$ reproduces $\bar y^{k+h_1+h_2}$ and $\bar y^{k+h_1+h_2}\log\bar y$ in (\ref{eq:lhscrossingexpanded}), we must drop the $[34]_n$ terms in (\ref{eq:mixingpossible}) before setting $12=43$. This gives
\be
\label{eq:anomdimwithminusoneell}
\l_{12[12]_n}(\bar h)^2\de_{[12]_n}(\bar h) &\sim 
\sideset{}{'}\sum_{\substack{\cO \in 1\x 1 \\ m\geq 0}}
f_{11\cO} f_{22\cO} V_{\cO,m}^{(n)1221}(\bar h)
+
(-1)^\ell\sideset{}{'}\sum_{\substack{\cO \in 1\x 2 \\ m\geq 0}}
(-1)^{\ell_\cO}f_{12\cO}^2 V_{\cO,m}^{(n)1212}(\bar h) 
,
\\
\label{eq:mixedopecorrection}
\l_{12[12]_n}(\bar h)^2 &\sim 
\sideset{}{'}\sum_{\substack{\cO \in 1\x 1 \\ m\geq 0}}
f_{11\cO} f_{22\cO} W_{\cO,m}^{(n)1221}(\bar h)
+
(-1)^\ell\sideset{}{'}\sum_{\substack{\cO \in 1\x 2 \\ m\geq 0}}
(-1)^{\ell_\cO}f_{12\cO}^2 W_{\cO,m}^{(n)1212}(\bar h) 
.
\ee
The identity operator is the leading contribution to (\ref{eq:mixedopecorrection}). Its coefficients are those of Mean Field Theory, analytically continued to $\ell=\bar h - h_1 -h_2 -n$,
\be
W_{\mathbf 1,m}^{(n)1221}(\bar h) &= \de_{m,0} C_{-h_1-h_2}^{(n)h_{12},h_{21}}(h_1+h_2,\bar h)
= \de_{m,0} C^\mathrm{MFT}_{n,\ell = \bar h - h_1-h_2-n}(2h_1,2h_2).
\ee

Finally, when all the operators are equal, (\ref{eq:anomdimwithminusoneell}) and (\ref{eq:mixedopecorrection}) become
\be
\label{eq:allequalanom}
\l_{11[11]_n}(\bar h)^2\de_{[11]_n}(\bar h) &\sim (1+(-1)^\ell)
\sideset{}{'}\sum_{\substack{\cO \in 1\x 1 \\ m\geq 0}}
f_{11\cO}^2 V_{\cO,m}^{(n)1111}(\bar h), \\
\l_{11[11]_n}(\bar h)^2&\sim (1+(-1)^\ell)
\sideset{}{'}\sum_{\substack{\cO \in 1\x 1 \\ m\geq 0}}
f_{11\cO}^2 W_{\cO,m}^{(n)1111}(\bar h).
\label{eq:allequalope}
\ee
We will often replace $1+(-1)^\ell\to 2$ and simply remember that only even-spin operators appear in the OPE $\f_1\x \f_1$.

\subsubsection{Checks}

Knowing CFT data up to weight $h_\mathrm{max}$ unambigiously determines the large-$\bar h$ corrections up to order $\bar h^{-2h_\mathrm{max}}$, or equivalently $J^{-\tau_\mathrm{max}}$.  To get this information, we could alternatively use the technology of~\cite{Alday:2015ewa}. It is straightforward to check that the first few $J^{-\tau_\cO}$ corrections to anomalous dimensions agree:
\be
2\frac{V^{(0)\f\f\f\f}_{\cO,0}(\bar h) + V^{(0)\f\f\f\f}_{\cO,1}(\bar h)}{C_{-2h_\f}^{(0)}(2h_\f,\bar h)}
&= 2\frac{Q_{0,0}(h_0,\bar h_0) S_{h_\cO-2h_\f}(\bar h) + Q_{1,0}(h_\cO,\bar h_\cO) S_{h_\cO+1-2h_\f}(\bar h)}{S_{-2h_\f}(\bar h)}\nn\\
&= \frac{c_0(\tau_\cO,\ell_\cO)}{J^{\tau_\cO}}\p{1+\frac{c_1(\tau_\cO,\ell_\cO)}{J^2} + \dots},
\ee
where $c_{0,1}(\tau_\cO,\ell_\cO)$ are the coefficients computed in~\cite{Alday:2015ewa} and given in equation~(\ref{eq:largespinexpansion}). (The factor of $2$ is because $\tau(\bar h)=2\De_\f + 2\de(\bar h)$.) The numerator above includes the contributions to anomalous dimensions from an operator $\cO$ and its descendants at level $1$ (\ref{eq:allequalanom}). The denominator includes the leading OPE coefficient coming from the unit operator. Additional terms in the denominator would give corrections of the form $J^{-\tau_1-\dots-\tau_n-k}$ not computed in~\cite{Alday:2015ewa}.

\subsubsection{Meaning of $\pdr{\bar h}{\ell}$}

Equation~(\ref{eq:anomdimwithminusoneell}) implies that the anomalous dimension $\de_{[12]_n}$ is not a smooth function of $\bar h$ alone, but also depends on $(-1)^\ell$.  Our proof of reparameterization invariance in section~\ref{sec:nonintegerspacing} does not apply to this case, but it can be fixed with a small modification. Suppose
\be
\bar h &= \bar h_0 + \ell + \de(\ell,\bar h),
\ee
where $\de(\ell, \bar h)$ has a large-$\bar h$ expansion that includes powers of $\bar h$ and factors of $(-1)^\ell$,
\be
\de(\ell,\bar h) &\sim \sum_{b_+} \bar h^{-b_+} + (-1)^\ell \sum_{b_-} \bar h^{-b_-}, \qquad \bar h\to \oo.
\ee
The proof in section~\ref{sec:nonintegerspacing} then works, provided we replace
\be
\label{eq:actualmeaningoflderiv}
\pdr{\bar h}{\ell} &\to \pdr{\bar h}{\bar h_0} = 1+\pdr{\de}{\bar h_0} = \p{1-\pdr{\de}{\bar h}}^{-1},
\ee
where in the derivative $\pdr{\de}{\bar h}$ we treat $(-1)^\ell$ as constant.

\section{Application to the 3d Ising CFT}
\label{sec:asymptoticsappliedtoising}

Let us now apply these results to the 3d Ising CFT\@. We would like to see how well the truncated large-$\bar h$ expansion describes the spectrum at finite $\bar h$. The more operators we can describe precisely, the better the prospects for hybrid analytical/numerical approaches like those discussed in section~\ref{sec:lessonsfornumerics}. We will find that a few terms in the expansion match numerics surprisingly well, even down to relatively small spins.

We will organize our expansions in terms of $S_a(\bar h)$'s. This simplifies several computations (in particular it makes it simpler to compute Casimir-regular terms). However, one could just as well use powers of the $\SL(2,\R)$ Casimir $J^2$, as in~\cite{Alday:2015eya,Alday:2015ota,Alday:2015ewa,Alday:2016mxe,Alday:2016njk,Alday:2016jfr}. A sum of $S_a(\bar h)$'s is a partial resummation of a series in $J^2$. 

We will work our way upwards in twist, first understanding $[\s\s]_0$ in section~\ref{sec:sigsig0section}, then $[\s\e]_0$ in section~\ref{sec:sigepszero}, and finally $[\s\s]_1$ and $[\e\e]_0$ in section~\ref{sec:eezerosigsigone}.  Because $2h_\s$ is so small, the family $[\s\s]_0$ is particularly important. Its contribution to other large-$\bar h$ expansions is competitive with those $T_{\mu\nu}$ and $\e$. Thus, we will use our formulae for OPE coefficients and dimensions of $[\s\s]_0$ in several subsequent computations. We expect this approach should also work well for the $O(N)$ models. It is an interesting question whether it works in a general CFT.

\subsection{$[\s\s]_0$}
\label{sec:sigsig0section}

\begin{figure}
\begin{center}
\begin{tikzpicture}[xscale=0.6,yscale=0.6]
\draw[thick] (0,0) -- (0,6);
\draw[thick] (5,0) -- (5,6);
\draw[thick, dashed] (0,3) -- (5,3);
\node[below] at (0,0) {$\s$};
\node[below] at (5,0) {$\s$};
\node[above] at (0,6) {$\s$};
\node[above] at (5,6) {$\s$};
\node[above] at (2.5,3) {$\e,T$};
\end{tikzpicture}
\end{center}
\caption{
The contributions of $\e,T$ to $\l_{\s\s[\s\s]_0}$ and $\de_{[\s\s]_0}$ in (\ref{eq:sigsigfit0}) and (\ref{eq:sigsigfit1}).
}
\label{fig:etcontributiontosigsigsigsig}
\end{figure}

The OPE coefficients and anomalous dimensions of $[\s\s]_0$ fit nicely to the first few terms in (\ref{eq:allequalanom}), (\ref{eq:allequalope}), illustrated in figure~\ref{fig:etcontributiontosigsigsigsig},
\be
\label{eq:sigsigfit0}
\l_{\s\s[\s\s]_0}^2 &\approx 2\p{S_{-2h_\s}(\bar h) + f_{\s\s\e}^2 W^{(0)\s\s\s\s}_{\e,0}(\bar h) + f_{\s\s T}^2 W^{(0)\s\s\s\s}_{T,0}(\bar h)},\\
\label{eq:sigsigfit1}
\l_{\s\s[\s\s]_0}^2 \de_{[\s\s]_0} &\approx 2\p{f_{\s\s\e}^2 V^{(0)\s\s\s\s}_{\e,0}(\bar h) + f_{\s\s T}^2 V^{(0)\s\s\s\s}_{T,0}(\bar h)},
\ee
where
\be
V^{(0)\s\s\s\s}_{\cO,0}(\bar h) &= -\frac{\G(2\bar h_\cO)}{\G(\bar h_\cO)^2}S_{h_\cO-2h_\s}(\bar h),\nn\\
W^{(0)\s\s\s\s}_{\cO,0}(\bar h) &= -\frac{\G(2\bar h_\cO)}{\G(\bar h_\cO)^2}\p{2\psi(\bar h_\cO) -2 \psi(1)}S_{h_\cO-2h_\s}(\bar h),
\ee
and 
\be
\label{eq:actualvalues}
\De_\s = 2h_\s &\approx 0.5181489,\nn\\
\De_\e = 2h_\e &\approx 1.412625,\nn\\
f_{\s\s\e} &\approx 1.0518539,\nn\\
f_{\s\s T} = \sqrt{\frac{3}{8 c_T}} \De_\s &\approx 0.326138.
\ee
In other words, we have
\be
\de_{[\s\s]_0} &\approx \frac{2\p{f_{\s\s\e}^2 V^{(0)\s\s\s\s}_{\e,0}(\bar h) + f_{\s\s T}^2 V^{(0)\s\s\s\s}_{T,0}(\bar h)}}{2\p{S_{-2h_\s}(\bar h) + f_{\s\s\e}^2 W^{(0)\s\s\s\s}_{\e,0}(\bar h) + f_{\s\s T}^2 W^{(0)\s\s\s\s}_{T,0}(\bar h)}}
\label{eq:deltasszeroprecition}
\\
f_{\s\s[\s\s]_0}^2 &\approx \p{1-\pdr{\de_{[\s\s]_0}(\bar h)}{\bar h}}^{-1} \l_{\s\s[\s\s]_0}(\bar h)^2,
\label{eq:fsssszeroprediction}
\ee
where we used equation (\ref{eq:actualmeaningoflderiv}) for the Jacobian $\pdr{\bar h}{\bar \ell}$ that relates $f_{\s\s[\s\s]_0}$ to $\l_{\s\s[\s\s]_0}$. The actual operator dimensions are determined by solving $\bar h-2 h_\s - \de(\bar h) = 0, 2, 4,\dots$.

A comparison between the above formula and numerics for $\tau_{[\s\s]_0}=2\De_\s + 2\de_{[\s\s]_0}$ is shown in figure~\ref{fig:tauSigSig0}. The discrepancy between analytics and numerics is $3\x 10^{-3}$ and $5\x 10^{-4}$ for spins $\ell=2,4$, respectively, and $\sim 5\x 10^{-5}$ for $\ell> 4$. Including additional higher-twist operators (primaries or descendants) in (\ref{eq:sigsigfit0}) and (\ref{eq:sigsigfit1}) does not improve the fit for low spins, and barely affects it for high spins.

\begin{figure}[ht!]
\begin{center}
\includegraphics[width=0.9\textwidth]{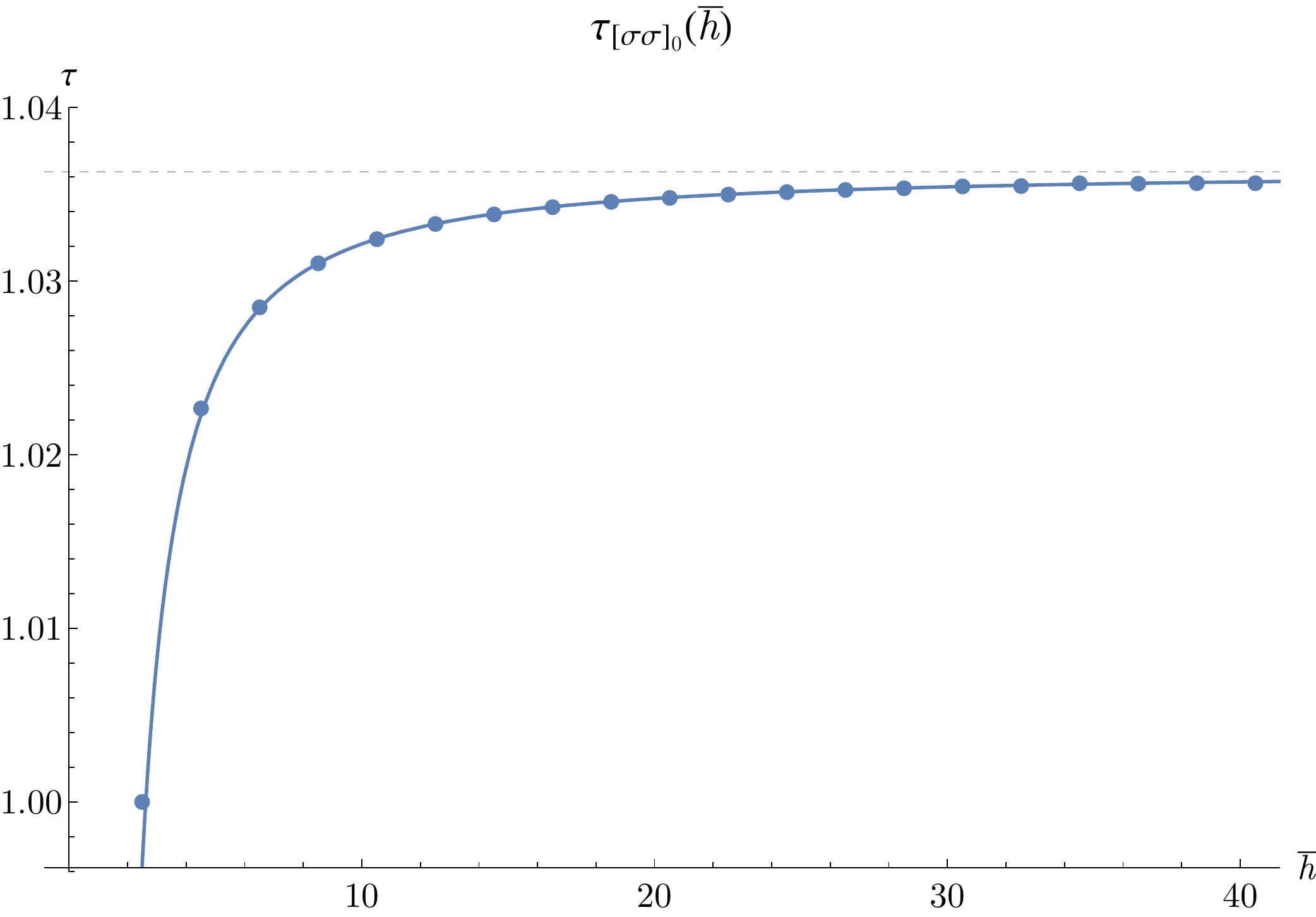}
\end{center}
\caption{A comparison between the analytical prediction (\ref{eq:deltasszeroprecition}) (blue curve) and numerical data (blue dots) for $\tau_{[\s\s]_0}$. The two agree with accuracy $3\x 10^{-3}$ and $5\x 10^{-4}$ for spins $\ell=2,4$, respectively, and $\sim 5\x 10^{-5}$ for $\ell> 4$. The grey dashed line is the asymptotic value $\tau=2\De_\s$. The curve (\ref{eq:largespinexpansion}) from~\cite{Alday:2015ewa} looks essentially the same.}
\label{fig:tauSigSig0}
\end{figure}

\subsubsection{Differences from~\cite{Alday:2015ewa}}

Let us comment briefly on the (inconsequential) differences between the above calculation and the series (\ref{eq:largespinexpansion}) computed in~\cite{Alday:2015ewa}. Firstly, we have not included descendants of $\e,T$, namely terms of the form $W_{\cO,m}^{(0)\s\s\s\s}$ and $V_{\cO,m}^{(0)\s\s\s\s}$ with $m \geq 1$, whereas~\cite{Alday:2015ewa} included descendants at first order in $z$. This is because it doesn't make sense to include level-1 descendants of $\e,T$ without also including the double-twist operators $[\e T]_0$, $[TT]_0$, which contribute at the same order in the large-$\bar h$ expansion. Also, because we organize everything as a series in $y$ instead of $z$, the contributions of descendants will differ somewhat (though the sum over all of them will be the same). In addition, we have partially resummed the $J$ series into sums of $S_a(\bar h)$'s.

All these alternatives represent different choices of subleading terms in a series that we are truncating anyway. Fortunately, they turn out to be inconsequential at the truncation order and precision at which we are working. A plot of (\ref{eq:largespinexpansion}) looks essentially identical to figure~\ref{fig:tauSigSig0}.  However, $S_a(\bar h)$'s will begin to differ from powers of $J$ when $a=h_\cO+m-2h_\s$ is larger (i.e.\ for higher-twist primaries and descendants in the crossed-channel). This is because $S_a(\bar h)$ has poles at $\bar h = a+1,a,a-1,\dots$, whereas $J^{-2a}$ does not. (In Sturm-Liouville theory for $\SL(2,\R)$ blocks~\cite{MatthijsBaltFuture}, these poles come from the region near $y\sim 1$, outside the validity of the small-$y$ expansion. Thus, they are artifacts of our expansion in small-$y$ in the crossed-channel.) These differences reflect the fact that we are comparing different truncations of an asymptotic expansion outside the regime of validity of those truncations.

\subsubsection{Contributions of $[\s\s]_0$ to itself}
\label{sec:contributiontoself}

We should also include higher-spin members of the family $[\s\s]_0$ in (\ref{eq:sigsigfit0}), (\ref{eq:sigsigfit1}). 
Their contributions for $\ell=4,6,\dots$ are small because
\be
W^{(0)\s\s\s\s}_{\cO,0}(\bar h), V^{(0)\s\s\s\s}_{\cO,0}(\bar h) &\propto \frac{1}{\G(h_\cO-2h_\s)^2} \sim \de_\cO^2,
\ee
where $\de_\cO=h_\cO-2h_\s$ is half the anomalous dimension of $\cO$, and $\de_\cO$ decreases with $\ell$.  
Nevertheless, we can sum the whole family $[\s\s]_0$ by by expanding in the anomalous dimension $\de_{[\s\s]_0}$ and using the methods we have developed for summing $\SL(2,\R)$ blocks.\footnote{An alternative approach to computing corrections to anomalous dimensions from an infinite family of operators is given in~\cite{Alday:2016njk,Alday:2016jfr}.}

Using (\ref{eq:mynonintegerspaced}), we have
\be
& \sum_{\ell=\ell_0,\ell_0+2,\dots} f_{\s\s[\s\s]_0}^2 y^{h_{[\s\s]_0}-2h_\s} k_{2\bar h}(1-\bar z) \nn\\
&= \sum_{m=0}^\oo \log^m y \sum_{k=0}^\oo \pdr{}{k}\p{\bar y^k \a_k^\mathrm{even}\left[\l_{\s\s[\s\s]_0}^2 \frac{\de_{[\s\s]_0}^m}{m!}, \de_{[\s\s]_0} \right](2h_\s+\ell_0)} + \textrm{casimir-singular}.\nn\\
\ee
The terms with $k=0$ contribute to $\l_{\s\s[\s\s]_0}$ and $\de_{[\s\s]_0}$ as follows
\be
\l_{\s\s[\s\s]_0}(\bar h)^2 &\sim \textrm{above} + 2\sum_{m=2}^\oo  \b_0^\mathrm{even}\left[\l_{\s\s[\s\s]_0}^2 \frac{\de_{[\s\s]_0}^m}{m!}, \de_{[\s\s]_0} \right](2h_\s+\ell_0) \left.\pdr{{}^m S_a(\bar h)}{a^m}\right|_{a=0}\nn\\
& \approx \textrm{above} -0.000572238  \left.\pdr{{}^2 S_a(\bar h)}{a^2}\right|_{a=0} +8.92146\.10^{-7}\left.\pdr{{}^3 S_a(\bar h)}{a^3}\right|_{a=0}+\dots\nn\\
\l_{\s\s[\s\s]_0}(\bar h)^2\de_{[\s\s]_0}(\bar h) &\sim \textrm{above} + 2\sum_{m=2}^\oo  \a_0^\mathrm{even}\left[\l_{\s\s[\s\s]_0}^2 \frac{\de_{[\s\s]_0}^m}{m!}, \de_{[\s\s]_0} \right](2h_\s+\ell_0) \left.\pdr{{}^m S_a(\bar h)}{a^m}\right|_{a=0},\nn\\
& \approx \textrm{above} -0.000123342 \left.\pdr{{}^2 S_a(\bar h)}{a^2}\right|_{a=0} + 2.1276\.10^{-7}\left.\pdr{{}^3 S_a(\bar h)}{a^3}\right|_{a=0}+\dots,
\label{eq:includinghigherspincorrections}
\ee
where ``above" represents terms already present in (\ref{eq:sigsigfit0}) and (\ref{eq:sigsigfit1}), and $\b_k^\mathrm{even} = \pdr{}{k} \a_k^\mathrm{even}$. The sums start at $m=2$ because $S_a(h)$ has a second-order zero at $a=0$. (Equivalently, the terms proportional to $\log^m y$ are Casimir-regular in the other channel when $m=0,1$.)   We illustrate the contributions (\ref{eq:includinghigherspincorrections}) in figure~\ref{fig:higherspincontribtosigsig}.

Equation~(\ref{eq:includinghigherspincorrections}) might look complicated because $\l_{\s\s[\s\s]_0}$ and $\de_{[\s\s]_0}$ are defined in terms of themselves. However, $\de_{[\s\s]_0}$ is small, so (\ref{eq:includinghigherspincorrections}) is easily solved by iteration starting with the approximations (\ref{eq:sigsigfit0}), (\ref{eq:sigsigfit1}). Above, we show the result from plugging in (\ref{eq:sigsigfit0}), (\ref{eq:sigsigfit1}) and setting $\ell_0=4$. 
The corrections in (\ref{eq:includinghigherspincorrections}) are so small that we mostly omit them in what follows. By contrast, similar corrections for $[\s\e]_0$ begin at $m=1$, and for $[\e\e]_0$ they begin at $m=0$.  In these cases, one must sum the whole family $[\s\s]_0$ to get accurate results.

\begin{figure}
\begin{center}
\be
\begin{array}{ccccccc}
\begin{tikzpicture}[xscale=0.6,yscale=0.6,baseline=(B.base)]
\draw[thick] (0,0) -- (0,1.5);
\draw[thick, dashed] (0,1.5) -- (0,5.5);
\draw[thick] (0,5.5) -- (0,7);
\draw[thick] (5,0) -- (5,1.5);
\draw[thick, dashed] (5,1.5) -- (5,5.5);
\draw[thick] (5,5.5) -- (5,7);
\draw[thick] (0,1.5) -- (5,1.5);
\draw[thick] (0,5.5) -- (5,5.5);
\node[below] at (0,0) {$\s$};
\node[below] at (5,0) {$\s$};
\node[above] at (0,7) {$\s$};
\node[above] at (5,7) {$\s$};
\node[below] at (2.5,1.5) {$\s$};
\node[above] at (2.5,5.5) {$\s$};
\node[left] at (0,3.5) {$\e,T$};
\node[right] at (5,3.5) {$\e,T$};
\node(B) [left] at (2.8,3.4) {};
\end{tikzpicture}
&
+
&
\cdots
&
+
&
\begin{tikzpicture}[xscale=0.6,yscale=0.6,baseline=(B.base)]
\draw[thick] (0,0) -- (0,1.5);
\draw[thick, dashed] (0,1.5) -- (0,5.5);
\draw[thick] (0,5.5) -- (0,7);
\draw[thick] (5,0) -- (5,1.5);
\draw[thick, dashed] (5,1.5) -- (5,5.5);
\draw[thick, dashed] (1.2,1.5) -- (1.2,5.5);
\draw[thick, dashed] (2.4,1.5) -- (2.4,5.5);
\draw[thick] (5,5.5) -- (5,7);
\draw[thick] (0,1.5) -- (5,1.5);
\draw[thick] (0,5.5) -- (5,5.5);
\node[below] at (0,0) {$\s$};
\node[below] at (5,0) {$\s$};
\node[above] at (0,7) {$\s$};
\node[above] at (5,7) {$\s$};
\node[below] at (2.5,1.5) {$\s$};
\node[above] at (2.5,5.5) {$\s$};
\node[left] at (0,3.5) {$\e,T$};
\node[] at (3.8,3.5) {$\dots$};
\node[right] at (5,3.5) {$\e,T$};
\node(B) [left] at (2.8,3.4) {};
\end{tikzpicture}
&
+
&
\cdots
\end{array}
\ee
\end{center}
\caption{
Contributions to $\l_{\s\s[\s\s]_0}$ and $\de_{[\s\s]_0}$ (bottom-to-top channel) from the exchange of double-twist operators $[\s\s]_0$ (left-to-right channel). We can further expand the contribution of the family $[\s\s]_0$ in small $\de_{[\s\s]_0}$.  We illustrate the $m$-th order term in this expansion by adding $m$ vertical exchanges between $\s$ lines, coming from the operators that contribute to $\de_{[\s\s]_0}$ ($\e$ and $T$ in our approximations (\ref{eq:sigsigfit0}) and (\ref{eq:sigsigfit1})). The leading nonzero term has $m=2$, corresponding to two vertical lines, or a ``box diagram."
}
\label{fig:higherspincontribtosigsig}
\end{figure}

We compare analytics and numerics for $f_{\s\s[\s\s]_0}$ in figure~\ref{fig:fSigSigSigSig0}. There is an interesting wrinkle in interpreting the numerics.  Although the numerical spectra include operators $\cO_\ell$ with twists $\tau_{[\s\s]_0}$, they also sometimes include spurious higher-spin currents $J_\ell$ at the unitarity bound with small but nonzero OPE coefficients. Because $\tau_{[\s\s]_0}$ is close to the unitarity bound, these spurious operators can ``fake" the contribution of $\cO_\ell$ in the conformal block expansion.\footnote{Higher spin currents are disallowed in interacting CFTs~\cite{Maldacena:2011jn,Alba:2013yda,Boulanger:2013zza,Alba:2015upa}.} The $J_\ell$ are artifacts of the extremal functional method. They should disappear at sufficiently high derivatives, but working at higher derivatives is not currently feasible. Instead, we remove them by hand and add their OPE coefficients to the correct operators $\cO_\ell$.  In other words, we use $(f_{\s\s\cO_\ell}^2 + f_{\s\s J_\ell}^2)^{1/2}$ as our numerical prediction for $f_{\s\s[\s\s]_0}$.  Indeed, the numerical errors in in this modified quantity are smaller than the errors in $f_{\s\s J_\ell}$, and the results agree beautifully with the analytical prediction. We show numerical data both before and after the modification in figure~\ref{fig:fSigSigSigSig0}.

\begin{figure}[ht!]
\begin{center}
\includegraphics[width=0.9\textwidth]{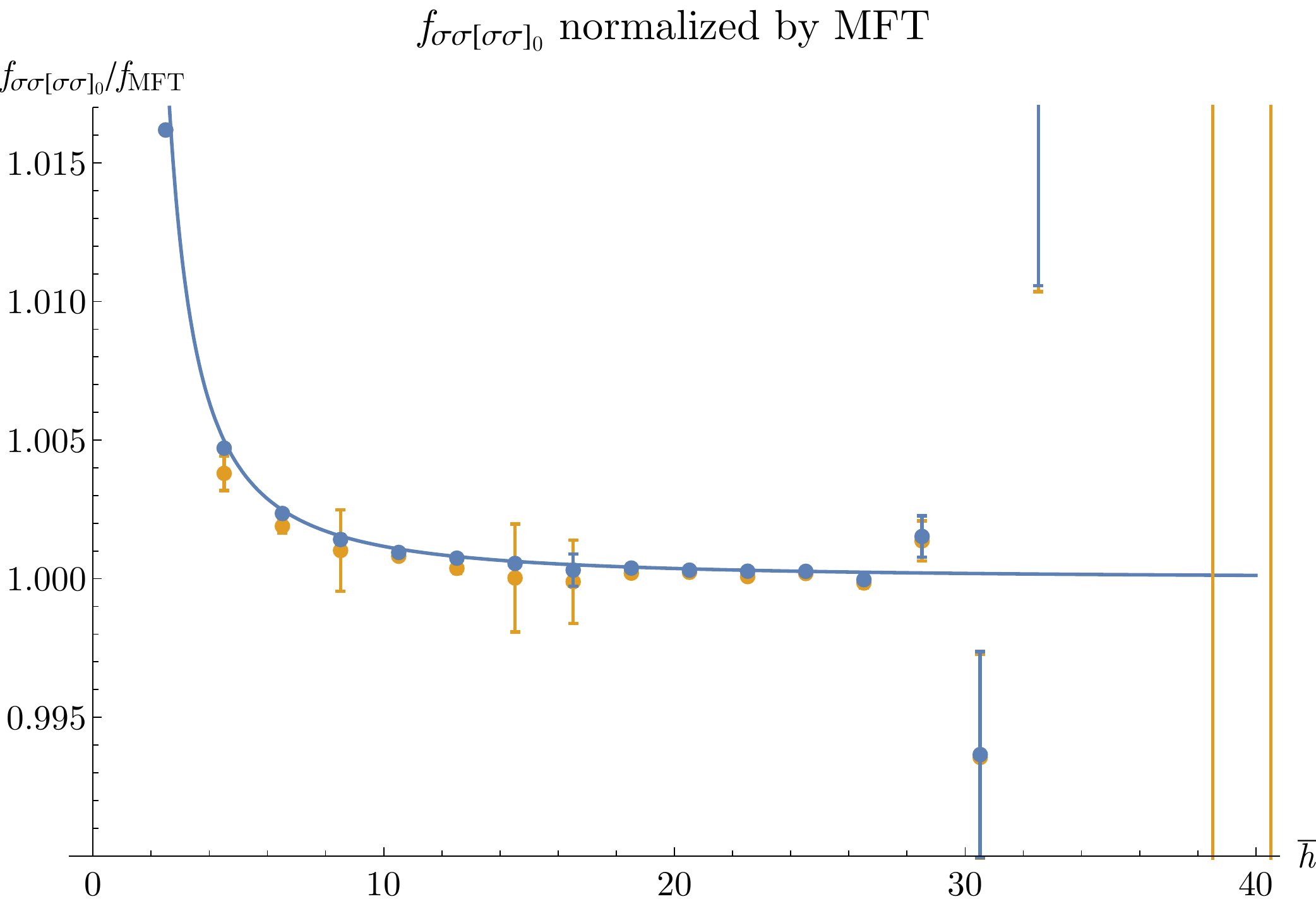}
\end{center}
\caption{A comparison between the analytical prediction (\ref{eq:fsssszeroprediction}) and numerics for $f_{\s\s[\s\s]_0}$, both normalized by dividing by the Mean Field Theory OPE coefficients $f_\mathrm{MFT}=(2S_{-2h_\s}(\bar h))^{1/2}$.  We show two sets of numerical data.  The orange series gives the OPE coefficients of the operators $\cO_\ell$ with twist closest to $\tau_{[\s\s]_0}$ for each spin $\ell$. 
 The blue series combines the contributions of $\cO_\ell$ and spurious higher-spin currents $J_\ell$ into $(f_{\s\s\cO_\ell}^2 + f_{\s\s J_\ell}^2)^{1/2}$. The latter quantities have smaller errors and better match the analytical prediction. The fact that the errors shrink after this modification supports the idea that the correct OPE coefficient is being shared between the real operators $\cO_\ell$ and ``fake" operators $J_\ell$.
 }
\label{fig:fSigSigSigSig0}
\end{figure}

The leading contribution to the OPE coefficients $\l_{\e\e[\s\s]_0}$ comes from $\s$-exchange in the $\s\e\to\s\e$ channel,
\be
\l_{\s\s[\s\s]_0} \l_{\e\e[\s\s]_0} &\approx 2f_{\s\s\e}^2 U^{(0)\s\s\e\e}_{\s,0}(\bar h).
\label{eq:feess0fit}
\ee
This agrees with numerics within $1\%$ for all spins $\ell \geq 2$. In the next section, we compute additional corrections from the family $[\s\e]_0$ and improve the agreement.

\subsection{$[\s\e]_0$}
\label{sec:sigepszero}

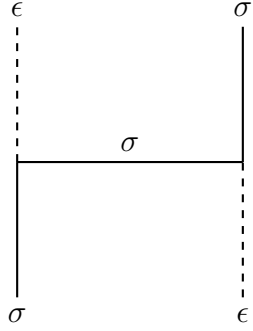
\begin{figure}
\begin{center}
\begin{tikzpicture}[xscale=0.6,yscale=0.6]
\draw[thick] (0,0) -- (0,3);
\draw[thick,dashed] (0,3) -- (0,6);
\draw[thick,dashed] (5,0) -- (5,3);
\draw[thick] (5,3) -- (5,6);
\draw[thick] (0,3) -- (5,3);
\node[below] at (0,0) {$\s$};
\node[below] at (5,0) {$\e$};
\node[above] at (0,6) {$\e$};
\node[above] at (5,6) {$\s$};
\node[above] at (2.5,3) {$\s$};
\end{tikzpicture}
\end{center}
\caption{
Contribution of $\s$-exchange (left-to-right) to the $[\s\e]_0$ family (bottom-to-top). We get a factor of $(-1)^\ell$ in (\ref{eq:firstapproximationfse}) and (\ref{eq:firstapproximationdeltase}) because $\s$ and $\e$ switch places.
}
\label{fig:sigcontribtosigeps}
\end{figure}

The leading correction to OPE coefficients and anomalous dimensions of $[\s\e]_0$ comes from exchange of $\s$ in the $\s\e\to \e\s$ channel (figure~\ref{fig:sigcontribtosigeps}),
\be
\l_{\s\e[\s\e]_0}^2 
&\approx S_{-h_\s-h_\e}^{h_{\s\e},h_{\e\s}}(\bar h)
+ (-1)^\ell f_{\s\s\e}^2 W_{\s,0}^{(0)\s\e\s\e}(\bar h) + \dots
\label{eq:firstapproximationfse}
\\
\l_{\s\e[\s\e]_0}^2 \de_{[\s\e]_0}
&\approx (-1)^\ell f_{\s\s\e}^2 V_{\s,0}^{(0)\s\e\s\e}(\bar h) + \dots.
\label{eq:firstapproximationdeltase}
\ee
To go further, we must include the contribution of the family $[\s\s]_0$ in $\s\s\to \e\e$. Doing so will provide a nontrivial test of the tools we have developed.

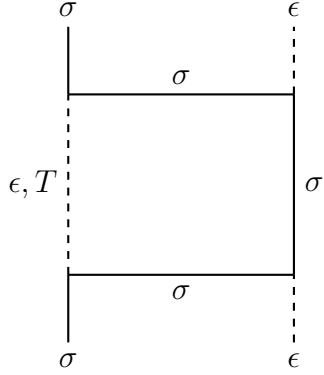
\begin{figure}
\begin{center}
\begin{tikzpicture}[xscale=0.6,yscale=0.6]
\draw[thick] (0,0) -- (0,1.5);
\draw[thick,dashed] (0,1.5) -- (0,5.5);
\draw[thick] (0,5.5) -- (0,7);
\draw[thick,dashed] (5,0) -- (5,1.5);
\draw[thick] (5,1.5) -- (5,5.5);
\draw[thick,dashed] (5,5.5) -- (5,7);
\draw[thick] (0,1.5) -- (5,1.5);
\draw[thick] (0,5.5) -- (5,5.5);
\node[below] at (0,0) {$\s$};
\node[below] at (5,0) {$\e$};
\node[above] at (0,7) {$\s$};
\node[above] at (5,7) {$\e$};
\node[below] at (2.5,1.5) {$\s$};
\node[above] at (2.5,5.5) {$\s$};
\node[right] at (5,3.5) {$\s$};
\node[left] at (0,3.5) {$\e,T$};
\end{tikzpicture}
\end{center}
\caption{
Contribution of $[\s\s]_0$-exchange (left-to-right) to the $[\s\e]_0$ family (bottom-to-top). In general, any operators can appear in the internal legs of the box diagram. Here we highlight the contributions computed below.
}
\label{fig:sigsigcontribtosigeps}
\end{figure}

Because we will discuss both channels simultaneously, let us write the crossing equation in a way that emphasizes the important terms:
\be
\sum_{\cO\in [\s\e]_0} f_{\s\e\cO}^2 y^{h_\cO-h_\s-h_\e} k^{h_{\s\e},h_{\e\s}}_{2\bar h}(1-\bar z)+\dots
&=
\sum_{\cO\in [\s\s]_0} f_{\s\s\cO}f_{\e\e\cO} \bar y^{h_\cO-h_\s-h_\e} k_{2\bar h}(1-z) + \dots.
\label{eq:referencecrossingequation}
\ee
Our first goal is to compute the sum over $[\s\s]_0$ on the right-hand side,
\be
\label{eq:exampleformforsigsigsum}
&\sum_{\cO\in [\s\s]_0} f_{\s\s\cO}f_{\e\e\cO} \bar y^{h_\cO-h_\s-h_\e} k_{2\bar h}(1-z)\nn\\
 &= \textrm{$y$-Casimir-singular} + \a(\bar y) \log y + \b(\bar y) + O(y).
\ee
The terms $\a(\bar y)\log y$ and $\b(\bar y)$ have the correct form to contribute to anomalous dimensions and OPE coefficients of $[\s\e]_0$ on the left-hand side of (\ref{eq:referencecrossingequation}). However, the Casimir-singular terms do not, and must be cancelled in some other way.  We work through an explicit example in section~\ref{sec:cancellationofcasimirsing}.

Before performing the sum over $[\s\s]_0$, let us understand what part we will need. Consider $\cO=[\s\s]_{0,\ell}$ on the right-hand side of (\ref{eq:referencecrossingequation}), and suppose $\ell$ is large so that $\de_{[\s\s]_{0,\ell}}=h_\cO-2h_\s$ is small.
The $\bar y$-dependence of the $\cO$-block maps to the following $\bar h$-dependence of $\l_{\s\e[\s\e]_0}$ on the left-hand side:
\be
\label{eq:mixedinversegamma}
\bar y^{h_\s-h_\e + \de_{[\s\s]_0}} = \sum_{k=0}^\oo \frac{\de^k_{[\s\s]_{0}}}{k!} \bar y^{h_{\s\e}} \log^k \bar y &\to \sum_{k=0}^\oo \frac{\de_{[\s\s]_0}^k}{k!} \left.\pdr{{}^k}{a^k} S^{h_{\s\e},h_{\e\s}}_{h_{\s\e}+a}(\bar h)\right|_{a=0}.
\ee
The $k=0$ term vanishes because $S^{r,s}_{r+a}$ has a simple zero at $a=0$.
The first nontrivial correction has $k=1$  (figure~\ref{fig:sigsigcontribtosigeps}). Thus, the leading correction to $\l_{\s\e[\s\e]_0}$ and $\de_{[\s\e]_0}$ in the sum over $[\s\s]_0$ comes from expanding to {\it first\/}-order in the anomalous dimension $\de_{[\s\s]_0}$:
\be
\bar y^{h_\s-h_\e} \log \bar y \sum_{\substack{\bar h = \bar h_0 + \ell \\ \ell=0,2,\dots}} \l_{\s\s[\s\s]_0}(\bar h) \l_{\e\e[\s\s]_0}(\bar h) \de_{[\s\s]_0}(\bar h) k_{2\bar h}(1-z) + \dots.
\ee
Here, ``$\dots$" represents non-$\log \bar y$ terms that do not contribute to $\l_{\s\e[\s\e]_0}$ and $\de_{[\s\e]_0}$.
We will treat $T_{\mu\nu}$ separately, so the family $[\s\s]_0$ starts at $\bar h_0=2h_\s+\ell_0$ with $\ell_0=4$.

The quantities $\l_{\s\s[\s\s]_0}$, $\l_{\e\e[\s\s]_0}$, and $\de_{[\s\s]_0}$ can be obtained from (\ref{eq:sigsigfit0}), (\ref{eq:sigsigfit1}), and (\ref{eq:feess0fit}).  For simplicity, we approximate their product by the first two leading terms at large $\bar h$, coming from the corrections to $\de_{[\s\s]_0}$ due to $\e$ and $T$,
\be
&\l_{\s\s[\s\s]_0} \l_{\e\e[\s\s]_0} \de_{[\s\s]_0}\nn\\
&\approx
\sum_{\cO=T,\e}-2 f_{\s\s\cO}^2 f_{\s\s\e}^2 \frac{ \Gamma (2 \bar h_\cO) \Gamma (2 h_\s)^3 \Gamma (h_\e-h_\cO)^2 \Gamma (2 h_\e-2 h_\s)}{\Gamma (\bar h_\cO)^2 \Gamma (h_\e)^4 \Gamma (2 h_\s-h_\cO)^2}S_{h_\cO-h_\e}(\bar h).
\label{eq:fsssseessdess}
\ee
This approximation has the correct asymptotics and also matches numerics within $1\%$ for all $\ell \geq 4$. This is sufficient accuracy for our purposes, since we are already computing a small correction to $[\s\e]_0$.

For the $\cO=T$ term, we have
\be
\label{eq:easyTsum}
\sum_{\substack{\bar h = \bar h_0 + \ell \\ \ell=0,2,\dots}} S_{h_T - h_\e}(\bar h) k_{2\bar h}(1-z)
&=
\frac 1 2 y^{h_T-h_\e} + \a^\mathrm{even}_0[S_{h_T-h_\e}](\bar h_0) \log y + \b^\mathrm{even}_0[S_{h_T-h_\e}](\bar h_0) + O(y),
\ee
where
\be
\a^\mathrm{even}_0[S_{a}](\bar h_0) &= \cA^\mathrm{even}_{a,-1}(\bar h_0)
\label{eq:Tcontribtose}
= -\frac{\Gamma (\bar h_0-a-1)}{2 a \Gamma (-a)^2 \Gamma (\bar h_0+a-1)},\\
\b^\mathrm{even}_0[S_{a}](\bar h_0) &= \left.\pdr{}{k}\cA^\mathrm{even}_{a,k}(\bar h_0)\right|_{k=-1}.
\ee
Equation~(\ref{eq:easyTsum}) has the form anticipated in (\ref{eq:exampleformforsigsigsum}).
As we prove in appendix~\ref{app:boxdiagrams}, the Casimir-singular term $y^{h_T-h_\e}$ is cancelled by the exchange of $[T\s]_0$ in the $\s\e\to \s\e$ OPE\@. The remaining terms give nontrivial contributions to $\l_{\s\e[\s\e]_0}$ and $\de_{[\s\e]_0}$.
We have not found an analytic formula for $\b^\mathrm{even}_0[S_a](\bar h_0)$ in general, but it can be computed to arbitrary accuracy using (\ref{eq:TTsumidentity}) and (\ref{eq:convergentexprforaminus}).

The $\cO=\e$ term in (\ref{eq:fsssseessdess}) takes more care to evaluate.  Taking $h_\cO\to h_\e$ gives
\be
\label{eq:extractfunnylimit}
-2 f_{\s\s\e}^4 \frac{ \Gamma (2 h_\e) \Gamma (2 h_\s)^3 \Gamma (2 h_\e-2 h_\s)}{ \Gamma (h_\e)^6 \Gamma (2 h_\s-h_\e)^2}\lim_{a\to 0} \G(-a)^2S_{a}(\bar h).
\ee
The function $\lim_{a\to 0} \G(-a)^2S_{a}(\bar h)$ is finite, but when we insert it in a sum over blocks, both the Casimir-singular and Casimir-regular terms are naively infinite. However, $1/a^2$ and $1/a$ poles cancel between them, leaving a finite result:
\be
&\lim_{a\to 0}\sum_{\substack{\bar h = \bar h_0+\ell \\ \ell = 0,2,\dots}} \G(-a)^2 S_a(\bar h) k_{2\bar h}(1-z)\nn\\
&=
\lim_{a\to 0} \p{\frac 1 2 \G(-a)^2 y^a + \G(-a)^2 \a_0^\mathrm{even}[S_a](\bar h_0)\log y + \G(-a)^2 \b_0^\mathrm{even}[S_a](\bar h_0)} + O(y)\nn\\
&= \frac 1 4 \log^2 y + A_0(\bar h_0) \log y + B_0(\bar h_0) + O(y),
\label{eq:funnyatozeroresult}
\ee
where
\be
A_0(\bar h_0) &\equiv \lim_{a\to 0} \p{\G(-a)^2 \a_0^\mathrm{even}[S_a](\bar h_0) + \frac 1 {2a} + \g} \nn\\
&= \psi(\bar h_0 - 1) + \g,\\
B_0(\bar h_0) &\equiv \lim_{a\to 0}\p{ \G(-a)^2 \b_0^\mathrm{even}[S_a](\bar h_0) + \frac{1}{2 a^2}+\frac{\gamma }{a}+\gamma ^2+\frac{\pi ^2}{12}}\nn\\
&= \frac{\pi ^2}{12} +(\psi(\bar h_0)+\gamma)\p{\psi(\bar h_0)+\g-\frac{2}{\bar h_0-1}}+ \frac{1}{4}\p{\psi^{(1)}\p{\frac{\bar h_0}{2}}-\psi^{(1)}\p{\frac{\bar h_0+1}{2}}}.
\label{eq:b0formula}
\ee
Here, $\psi^{(m)}(z) \equiv \rdr{{}^{m+1}}{z^{m+1}} \log \G(z)$ is the polygamma function, $\psi(z)=\psi^{(0)}(z)$, and $\g=-\psi(1)$ is the EulerÐ-Mascheroni constant. (Even though we do not have a simple formula for $\b_0^\mathrm{even}[S_a](\bar h_0)$ in general, the limit $B_0(\bar h_0)$ is computable in closed form and given by (\ref{eq:b0formula}).)

\subsubsection{Cancellation of Casimir-singular terms}
\label{sec:cancellationofcasimirsing}

Equation~(\ref{eq:funnyatozeroresult}) again has the form anticipated in (\ref{eq:exampleformforsigsigsum}), where
the $\log^2 y$ term in (\ref{eq:funnyatozeroresult}) is $y$-Casimir-singular. Combining (\ref{eq:extractfunnylimit}) and (\ref{eq:funnyatozeroresult}), this term is
\be
&\bar y^{-h_\s - h_\e} \sum_{\bar h} f_{\s\s[\s\s]_0}f_{\e\e[\s\s]_0}\de_{[\s\s]_0} \bar y^{2h_\s} \log \bar y\, k_{2\bar h}(1-z)\nn\\
&\sim
 -\frac 1 2 f_{\s\s\e}^4 \frac{ \Gamma (2 h_\e) \Gamma (2 h_\s)^3 \Gamma (2 h_\e-2 h_\s)}{ \Gamma (h_\e)^6 \Gamma (2 h_\s-h_\e)^2} \log^2 y\, \bar y^{h_\s-h_\e}\, \log \bar y
\label{eq:contributionssee}
\ee
in the $\s\s\to \e\e$ channel.

We claimed earlier that the Casimir-singular terms in (\ref{eq:exampleformforsigsigsum}) should be canceled by other contributions, and it is instructive to see how this works explicitly. The expression~(\ref{eq:contributionssee}) has the correct form to match the exchange of $[\s\e]$ in the $\s\e\to \s\e$ channel, where $\log^2 y$ comes from expanding $y^{\de_{[\s\e]_0}}$ to second order in $\de_{[\s\e]_0}$. We could have guessed this by reinterpreting figure~\ref{fig:sigsigcontribtosigeps} as the second order term in the exponentiation of figure~\ref{fig:sigcontribtosigeps} (in the bottom-to-top channel). 

The important terms in $\de_{[\s\e]_0}^2$ come from squaring the contribution of $\s$-exchange. 
From (\ref{eq:firstapproximationfse}) and (\ref{eq:firstapproximationdeltase}), we have
\be
\l_{\s\e[\s\e]_0}^2 \frac 1 2 \de_{[\s\e]_0}^2 &\sim \frac 1 2\frac{\p{f_{\s\s\e}^2 V_{\s,0}^{(0)\s\e\s\e}(\bar h)}^2}{S_{-h_\s-h_\e}^{h_{\s\e},h_{\e\s}}(\bar h)}\\
&= \frac 1 2 f_{\s\s\e}^4 \frac{\G(2h_\e)\G(2h_\s)^3\G(2h_\e-2h_\s)}{\G(h_\e)^6 \G(2h_\s-h_\e)^2} \p{\lim_{a\to 0} \G(-a) S^{h_{\s\e},h_{\e\s}}_{h_\s-h_\e+a}(\bar h)} + \dots.\nn\\
\ee
Using (\ref{eq:mixedblockremarkable}), the relevant sum over blocks is
\be
\sum_{h=h_0+\ell} \lim_{a\to 0} \G(-a) S^{r,s}_{a-s}(h)k_{2h}(1-\bar z)
&=
-\bar y^{-s} \log \bar y + [\dots]_{\bar y},\nn\\
[\dots]_{\bar y} &= -\bar y^{-s}(\psi(h_0-s)+\psi(h_0+s-1)-\psi(s-r)+\g)\nn\\
& \quad + \bar y^{-r} (\dots) + O(\bar y^{1-s},\bar y^{1-r}).
\label{eq:otherexamplewithpolecancelling}
\ee
(We have written the $\bar y^{-s}$ part of the Casimir-regular terms because we will need them shortly.)
Again, $1/a$ poles cancel between the Casimir-regular and Casimir-singular part, leaving a finite result.
It follows that 
\be
&\sum_{\bar h=\bar h_0+\ell} \l_{\s\e[\s\e]_0}^2 \frac 1 2 \de_{[\s\e]_0}^2 \log^2 y\, k_{2\bar h}^{h_{\s\e},h_{\e\s}}(1-\bar z)\nn\\
&=
-\frac 1 2 f_{\s\s\e}^4 \frac{\G(2h_\e)\G(2h_\s)^3\G(2h_\e-2h_\s)}{\G(h_\e)^6 \G(2h_\s-h_\e)^2} \log^2 y\,\bar y^{h_\s-h_\e} \log \bar y + [\dots]_{\bar y},
\label{eq:matchcassingse}
\ee
which exactly matches (\ref{eq:contributionssee}).

Thus, the other channel indeed cancels the Casimir-singular term in (\ref{eq:funnyatozeroresult}). This phenomenon, which has been explored previously in~\cite{Alday:2015ewa,Alday:2016mxe}, is a special case of a more general result. The $y$-Casimir-singular part of the exchange of double-twist operators in one channel matches the $\bar y$-Casimir-singular part of the exchange of double-twist operators in the other channel. Another way to say this is that box diagrams like figure~\ref{fig:sigcontribtosigeps} give the same Casimir-singular parts when interpreted from bottom-to-top or from left-to-right.\footnote{However, their Casimir-regular parts are not necessarily the same.} We prove this claim in appendix~\ref{app:boxdiagrams}.\footnote{We conjecture that it should be possible to prove a much more general result: that the Casimir-singular terms in a general large-spin diagram, given by an arbitrary network of operator exchanges, are crossing-symmetric.}

The Casimir-regular term proportional to $\bar y^{h_\s-h_\e}$ in (\ref{eq:matchcassingse}) determines the leading correction to $f_{\e\e[\s\s]_0}$ coming from $[\s\e]_0$ exchange. Including also level-one descendants of $\s$, which contribute at similar order in the $1/\bar h$-expansion to $[\s\e]_0$, we have
\be
&\l_{\s\s[\s\s]_0} \l_{\e\e[\s\s]_0}\nn\\
&\approx 2f_{\s\s\e}^2 (U^{(0)\s\s\e\e}_{\s,0}(\bar h)+U^{(0)\s\s\e\e}_{\s,1}(\bar h))\nn\\
&\quad- f_{\s\s\e}^4 \frac{\G(2h_\e)\G(2h_\s)^3\G(2h_\e-2h_\s)}{\G(h_\e)^6 \G(2h_\s-h_\e)^2} \p{\psi(2h_\s+\ell_0)+\psi(2h_\e+\ell_0-1)-\psi(2h_\e-2h_\s)+\g} \nn\\
&\qquad \x \left.\pdr{{}^2}{a^2} S_a(\bar h)\right|_{a=0},
\label{eq:betterfitforfeess0}
\ee
where $\ell_0=2$ is the lowest spin appearing in the $[\s\e]_0$ family. As we show in figure~\ref{fig:fEpsEpsSigSig0}, (\ref{eq:betterfitforfeess0}) agrees with numerics for all spins with accuracy $\sim 10^{-3}$.

\begin{figure}[ht!]
\begin{center}
\includegraphics[width=0.9\textwidth]{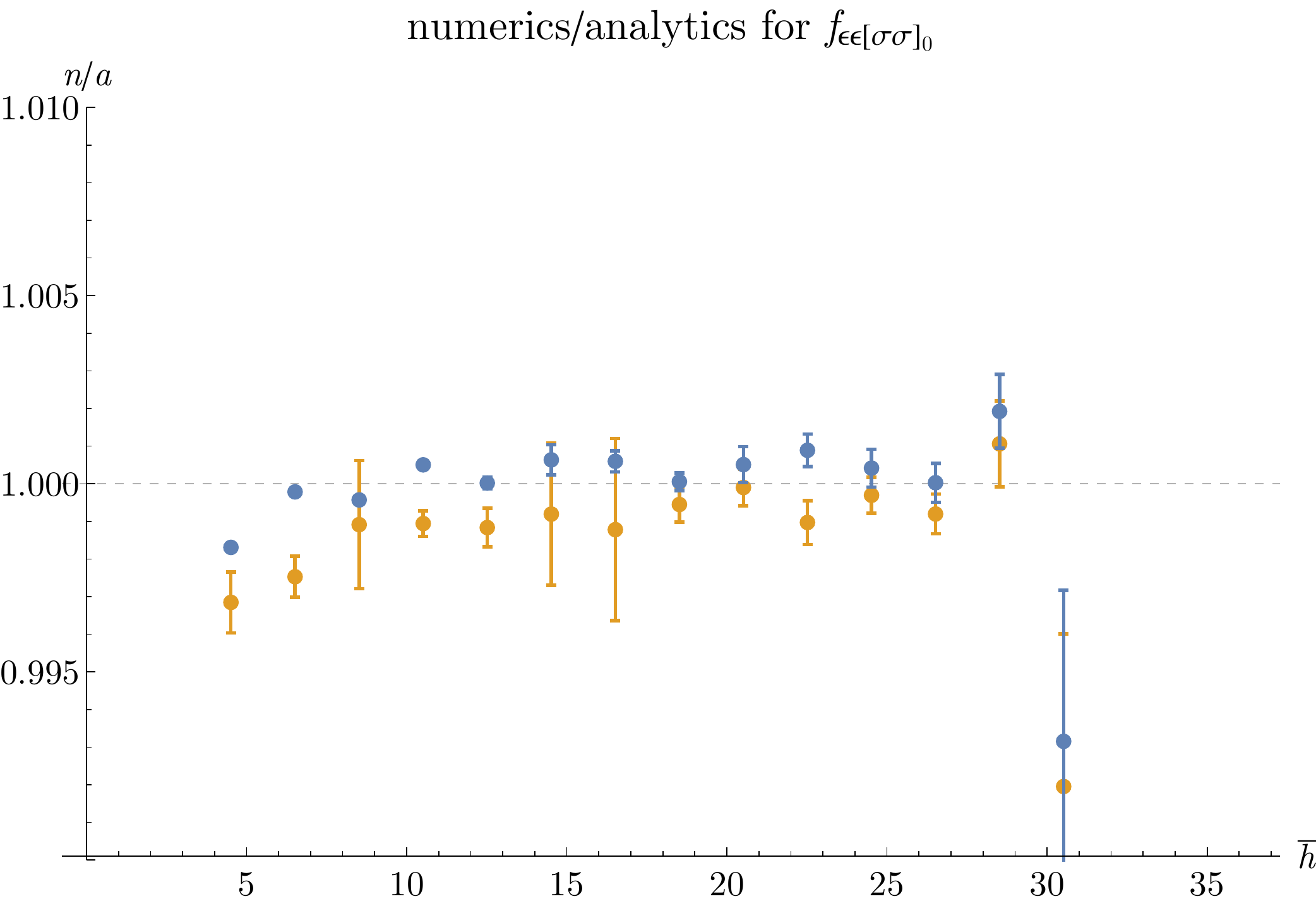}
\end{center}
\caption{Ratios $n/a$ of numerical results to the analytical prediction (\ref{eq:sigsigfit0}, \ref{eq:betterfitforfeess0}) for $f_{\e\e[\s\s]_0}$. (One must multiply by the Jacobian $\pdr{\bar h}{\ell}$ to relate $f_{\e\e[\s\s]_0}$ to $\l_{\e\e[\s\s]_0}$.) As in figure~\ref{fig:fSigSigSigSig0}, we show two sets of numerical data. The orange series are the raw OPE coefficients $f_{\e\e\cO_\ell}$ of operators with twists $\tau_{[\s\s]_0}$. The blue series are the improved coefficients $(f_{\e\e\cO_\ell}^2 + f_{\e\e J_\ell}^2)^{1/2}$ discussed in section~\ref{sec:sigsig0section}.}
\label{fig:fEpsEpsSigSig0}
\end{figure}

\subsubsection{Putting everything together}

Combining the Casimir-regular terms from (\ref{eq:Tcontribtose}) and (\ref{eq:funnyatozeroresult}), we have
\be
&\bar y^{h_\s-h_\e} \log \bar y \sum_{\substack{\bar h = 2h_\s + 4+ \ell \\ \ell=0,2,\dots}} \l_{\s\s[\s\s]_0}(\bar h) \l_{\e\e[\s\s]_0}(\bar h) \de_{[\s\s]_0}(\bar h) k_{2\bar h}(1-z)\nn\\
&\approx \bar y^{h_\s-h_\e} \log \bar y \left(-2 f_{\s\s T}^2 f_{\s\s\e}^2 \frac{ \Gamma (2 \bar h_T) \Gamma (2 h_\s)^3 \Gamma (h_\e-h_T)^2 \Gamma (2 h_\e-2 h_\s)}{\Gamma (\bar h_T)^2 \Gamma (h_\e)^4 \Gamma (2 h_\s-h_T)^2}\right.\nn\\
& \left.\qquad\qquad\qquad\qquad \x\p{\a^\mathrm{even}_0[S_{h_T-h_\e}](2h_\s+4) \log y + \b^\mathrm{even}_0[S_{h_T-h_\e}](2h_\s+4)}\right.\nn\\
&\qquad\qquad\qquad\qquad\left. -2 f_{\s\s\e}^4 \frac{ \Gamma (2 h_\e) \Gamma (2 h_\s)^3 \Gamma (2 h_\e-2 h_\s)}{ \Gamma (h_\e)^6 \Gamma (2 h_\s-h_\e)^2} \p{A_0(2h_\s+4) \log y + B_0(2h_\s+4)}\right)\nn\\
& + \textrm{Casimir-singular}+ O(y).
\ee
From the above, we can read off the contributions to $\l_{\s\e[\s\e]_0}$ and $\de_{[\s\e]_0}$ from exchange of the family $[\s\s]_0$. Including also the corrections from exchange of $\e$ and $T_{\mu\nu}$, we have
\be
\l_{\s\e[\s\e]_0}^2 &\approx S_{-h_\s-h_\e}^{h_{\s\e},h_{\e\s}}(\bar h) + (-1)^\ell f_{\s\s\e}^2 W_{\s,0}^{(0)\s\e\s\e}(\bar h)\nn\\
&\quad + f_{\s\s\e} f_{\e\e\e} W^{(0)\s\e\e\s}_{\e,0}(\bar h) + f_{\s\s T} f_{\e\e T} W_{T,0}^{(0)\s\e\e\s}(\bar h)\nn\\
&\quad + \left(
-2 f_{\s\s T}^2 f_{\s\s\e}^2 \frac{ \Gamma (2 \bar h_T) \Gamma (2 h_\s)^3 \Gamma (h_\e-h_T)^2 \Gamma (2 h_\e-2 h_\s)}{\Gamma (\bar h_T)^2 \Gamma (h_\e)^4 \Gamma (2 h_\s-h_T)^2}\b^\mathrm{even}_0[S_{h_T-h_\e}](2h_\s+4)\right.\nn\\
&\quad\quad\quad \left. -2 f_{\s\s\e}^4 \frac{ \Gamma (2 h_\e) \Gamma (2 h_\s)^3 \Gamma (2 h_\e-2 h_\s)}{ \Gamma (h_\e)^6 \Gamma (2 h_\s-h_\e)^2} B_0(2h_\s+4)\right) \left.\pdr{}{a} S^{h_{\s\e},h_{\e\s}}_{h_\s-h_\e + a}(\bar h)\right|_{a=0},
\label{eq:secondapproximationfse}
\\
\l_{\s\e[\s\e]_0}^2 \de_{[\s\e]_0} &\approx (-1)^\ell f_{\s\s\e}^2 V_{\s,0}^{(0)\s\e\s\e}(\bar h)\nn\\
&\quad + f_{\s\s\e} f_{\e\e\e} V^{(0)\s\e\e\s}_{\e,0}(\bar h) + f_{\s\s T} f_{\e\e T} V_{T,0}^{(0)\s\e\e\s}(\bar h)\nn\\
&\quad + \left(
-2 f_{\s\s T}^2 f_{\s\s\e}^2 \frac{ \Gamma (2 \bar h_T) \Gamma (2 h_\s)^3 \Gamma (h_\e-h_T)^2 \Gamma (2 h_\e-2 h_\s)}{\Gamma (\bar h_T)^2 \Gamma (h_\e)^4 \Gamma (2 h_\s-h_T)^2}\a^\mathrm{even}_0[S_{h_T-h_\e}](2h_\s+4)\right.\nn\\
&\quad\quad\quad \left. -2 f_{\s\s\e}^4 \frac{ \Gamma (2 h_\e) \Gamma (2 h_\s)^3 \Gamma (2 h_\e-2 h_\s)}{ \Gamma (h_\e)^6 \Gamma (2 h_\s-h_\e)^2} A_0(2h_\s+4)\right) \left.\pdr{}{a} S^{h_{\s\e},h_{\e\s}}_{h_\s-h_\e + a}(\bar h)\right|_{a=0}.
\label{eq:secondapproximationdeltase}
\ee

\subsubsection{Comparison to numerics}

We plot the twists $\tau_{[\s\e]_0}=\De_\s+\De_\e + 2\de_{[\s\e]_0}$ in figure~\ref{fig:tauSigEps0} and OPE coefficients $f_{\s\e[\s\e]_0}$ in figure~\ref{fig:fSigEps0}, comparing the formulae (\ref{eq:secondapproximationfse}) and (\ref{eq:secondapproximationdeltase}) to numerical results. In both cases, analytics matches numerics to high precision $(\sim10^{-4})$ at large $\bar h$, and moderate precision ($<10^{-2}$) for all $\bar h$. The agreement is particularly impressive because the corrections are large compared to Mean Field Theory, in contrast to the case of $[\s\s]_0$. Correctly summing the family $[\s\s]_0$ is crucial for achieving this.

\begin{figure}[ht!]
\begin{center}
\includegraphics[width=0.9\textwidth]{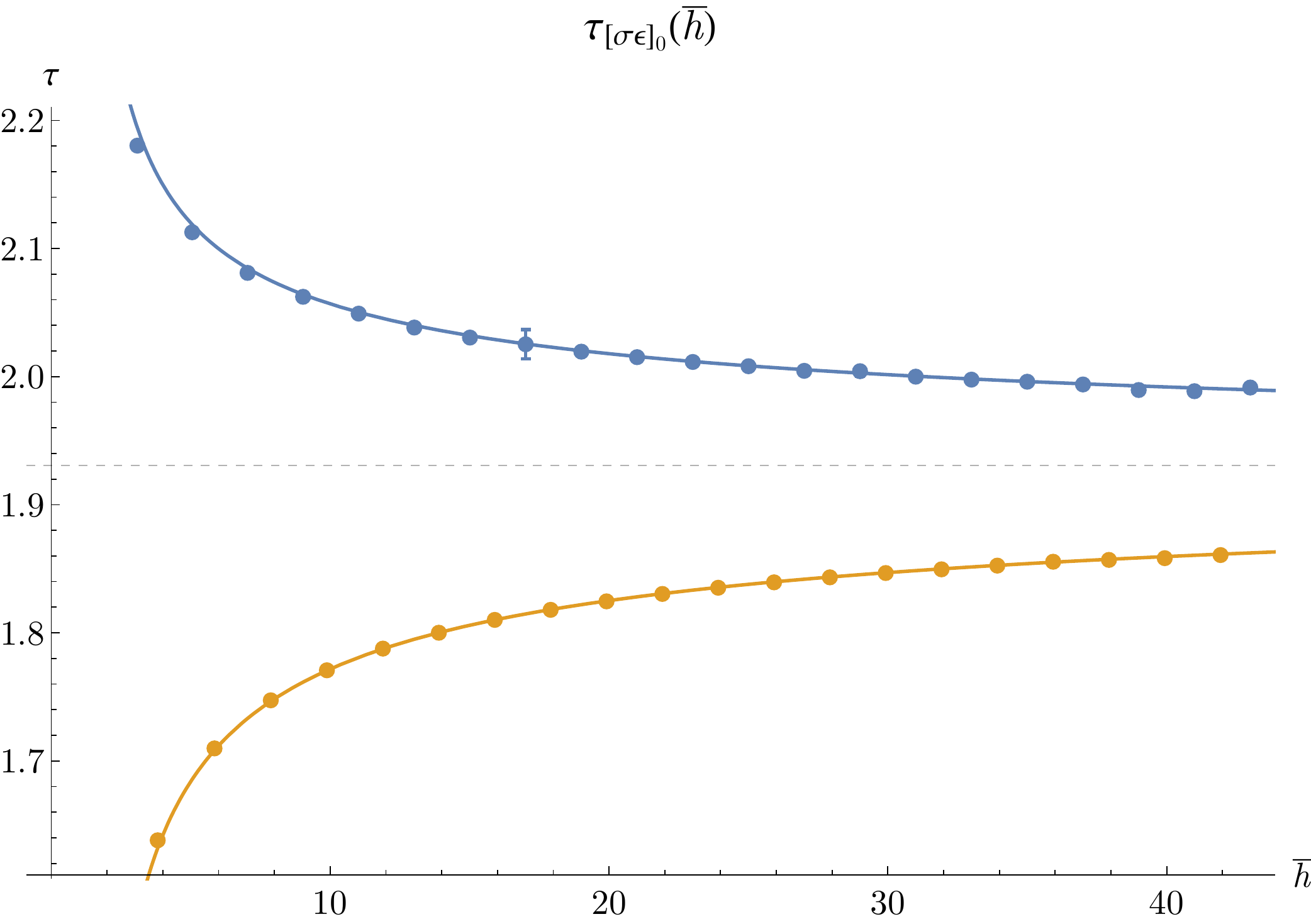}
\end{center}
\caption{Comparison between numerical data and the analytical prediction (\ref{eq:secondapproximationfse}, \ref{eq:secondapproximationdeltase}) for $\tau_{[\s\e]_0}$. The blue curve and points correspond to even-spin operators and the orange curve and points correspond to odd-spin operators. The dashed line is the asymptotic value $\tau=\De_\s+\De_\e$.}
\label{fig:tauSigEps0}
\end{figure}

\begin{figure}[ht!]
\begin{center}
\includegraphics[width=0.9\textwidth]{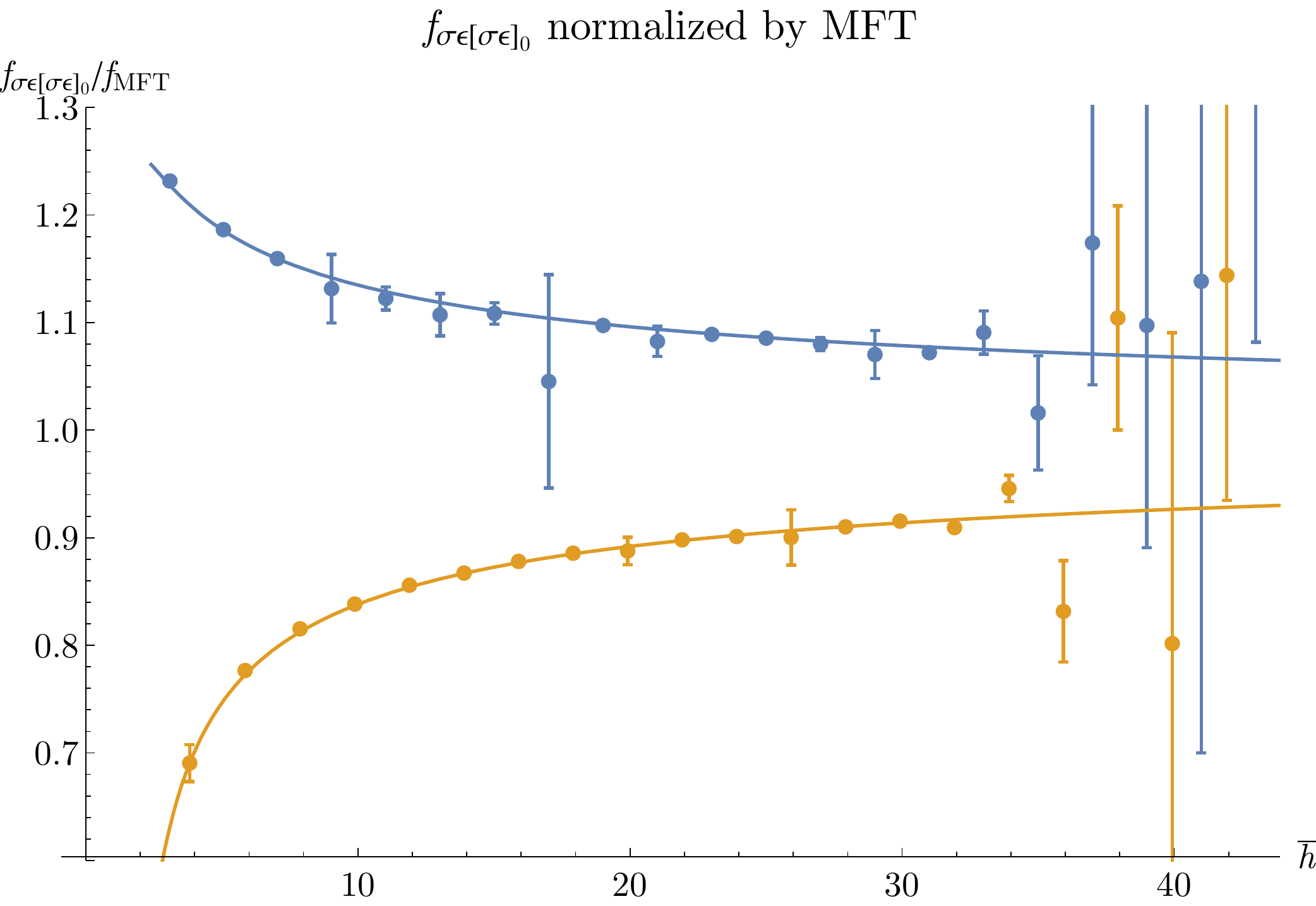}
\end{center}
\caption{Comparison between numerical data and the analytical prediction (\ref{eq:secondapproximationfse}) for $f_{\s\e[\s\e]_0}$, both divided by the Mean Field Theory OPE coefficients $f_\mathrm{MFT}=S^{h_{\s\e},h_{\e\s}}_{-h_\s-h_\e}(\bar h)^{1/2}$. The blue curve and points correspond to even-spin operators and the orange curve and points correspond to odd-spin operators.}
\label{fig:fSigEps0}
\end{figure}

\section{Operator mixing and the twist Hamiltonian}
\label{sec:twisthamiltonian}

\subsection{Allowing for mixing}
\label{sec:allowmixing}

The naive large-$\bar h$ expansion of section~\ref{sec:allorderslightcone} describes the operators $[\s\s]_0$ and $[\s\e]_0$ nicely. However, it fails badly for $[\s\s]_1$ and $[\e\e]_0$. As mentioned in the introduction, the numerics indicate large mixing between these families.  As a striking illustration,  we plot the ratios $f_{\e\e[\e\e]_0}/f_\mathrm{MFT}$ and $f_{\e\e[\s\s]_1}/f_\mathrm{MFT}$ in figure~\ref{fig:fEpsEpsSigSig1AndfEpsEpsEpsEps0}. (We define $[\e\e]_0$ as the operator with lower twist.) For spins $\ell\lesssim 20$, the coefficient $f_{\e\e[\s\s]_1}$ is actually {\it larger\/} than $f_{\e\e[\e\e]_0}$.

One might guess that the asymptotic large-$\bar h$ expansion simply breaks down earlier for these operators --- that it just doesn't work for $\ell\lesssim 40$.  This turns out to be false. In this section, we give a procedure that extends the validity of the large-$\bar h$ expansion down to smaller values of $\bar h$.

The key idea is to relax the assumption from section~\ref{sec:generaldtfamilies} that the double-twist operators $[ij]_n$ on one side of the crossing equation map only to terms of the form $\bar y^{h_i + h_j+k}$ on the other side. Instead, we will compute a fully $\bar y$-dependent asymptotic expansion in $\bar h$ and identify operators by diagonalizing an effective ``twist Hamiltonian."

Let
\be
H(\bar h) &= \begin{pmatrix}
h_{[\s\s]_0}(\bar h) & 0 & 0 \\
0 & h_{[\s\s]_1}(\bar h) & 0 \\
0 & 0 & h_{[\e\e]_0}(\bar h) \\
\end{pmatrix},\\
\L(\bar h) &= \begin{pmatrix}
\l_{\s\s[\s\s]_0}(\bar h) & \l_{\s\s[\s\s]_1}(\bar h) & \l_{\s\s[\e\e]_0}(\bar h) \\
\l_{\e\e[\s\s]_0}(\bar h) & \l_{\e\e[\s\s]_1}(\bar h) & \l_{\e\e[\e\e]_0}(\bar h) 
\end{pmatrix}.
\ee
Suppose that, using crossing symmetry, we can find the combination
\be
\label{eq:formofm}
 \L(\bar h)\bar y^{H(\bar h)} \L(\bar h)^T 
&= \sum_{\cO=[\s\s]_0,[\s\s]_1,[\e\e]_0} \begin{pmatrix}
\l_{\s\s\cO}(\bar h)^2  & \l_{\s\s\cO}(\bar h) \l_{\e\e\cO}(\bar h)  \\
\l_{\s\s\cO}(\bar h) \l_{\e\e\cO}(\bar h)  & \l_{\e\e\cO}(\bar h)^2 
\end{pmatrix}\bar y^{h_\cO(\bar h)}\nn\\
& \equiv \begin{pmatrix}
M_{\s\s\s\s}(\bar y, \bar h) & M_{\s\s\e\e}(\bar y, \bar h) \\
M_{\s\s\e\e}(\bar y, \bar h) & M_{\e\e\e\e}(\bar y, \bar h)
\end{pmatrix}.
\ee
One way to extract the twist Hamiltonian is as follows.
Given the elements $M_{ijkl}(\bar y, \bar h)$, we form the matrix
\be
M(\bar y, \bar h) &\equiv \begin{pmatrix}
M_{\s\s\s\s}(\bar y, \bar h) & \bar\ptl M_{\s\s\s\s}(\bar y, \bar h) & M_{\s\s\e\e}(\bar y, \bar h) \\
\bar \ptl M_{\s\s\s\s}(\bar y, \bar h) & \bar\ptl^2 M_{\s\s\s\s}(\bar y, \bar h) & \bar\ptl M_{\s\s\e\e}(\bar y, \bar h)\\
M_{\s\s\e\e}(\bar y, \bar h) & \bar \ptl M_{\s\s\e\e}(\bar y, \bar h) & M_{\e\e\e\e}(\bar y, \bar h)
\end{pmatrix},
\ee
where for brevity, we've defined 
\be
\bar\ptl \equiv \pdr{}{\log \bar y}.
\ee
The twist Hamiltonian $H(\bar h)$ is given by diagonalizing
\be
M(\bar y, \bar h)^{-1} \bar\ptl M(\bar y, \bar h).
\label{eq:twisthamiltonian}
\ee

If $M(\bar y, \bar h)$ indeed has the form (\ref{eq:formofm}), with only the twist families $[\s\s]_0$, $[\s\s]_1$, and $[\e\e]_0$ contributing,  then the combination (\ref{eq:twisthamiltonian}) will be $\bar y$-independent. In practice, we cannot completely single out $[\s\s]_0$, $[\s\s]_1$, and $[\e\e]_0$ on the other side of the crossing equation, so our $M(\bar y, \bar h)$  will have corrections from other operators in the $\s \x \s$ OPE, and we must choose a value $\bar y= \bar y_0$ at which to evaluate it.

The families $[\s\s]_n$ and $[\e\e]_n$ with higher $n$ will be exponentially suppressed if we choose a small value of $\bar y_0$. However, to single out $[\e\e]_0$ and $[\s\s]_1$ we must also assume that other twist families like $[TT]$, $[TTT]$, and $[\s\s\e]$, which contribute at similar order in $\bar y$, have small OPE coefficients in the $\s\x\s$ and $\e\x\e$ OPEs. This assumption is supported by numerics (which likely means that it follows from unitarity). However, we do not know how to derive it using the information in this work. Instead, we should enlarge our system of crossing equations to include additional external operators.
For example, by studying the matrix
\be
M &= 
\begin{array}{l}
\quad\,\overbrace{\hspace{1.42in}}^{\mbox{$N_{ij}$}}\,\,\,\, \overbrace{\hspace{1.42in}}^{\mbox{$\phantom{N_{ij}}N_{kl}\phantom{N_{ij}}$}}
\\
\left(
\begin{array}{ccc|ccc|c}
\ddots & \cdots & \cdots & \cdots & \cdots & \cdots & \cdots \\
\vdots & \bar\ptl^{m+p} M_{ijij} & \cdots & \cdots & \bar\ptl^{m+q} M_{ijkl} & \cdots & \cdots \\
\vdots & \vdots & \ddots & \cdots & \cdots & \cdots & \cdots\\
\hline
\vdots & \vdots & \vdots & \ddots & \cdots & \cdots & \cdots\\
\vdots & \bar\ptl^{n+p} M_{klij} & \vdots & \vdots & \bar\ptl^{n+q} M_{klkl} & \cdots & \cdots \\
\vdots & \vdots & \vdots & \vdots & \vdots & \ddots & \cdots \\
\hline
\vdots & \vdots & \vdots & \vdots & \vdots & \vdots & \ddots
\end{array}
\right)
\end{array}
,
\label{eq:generalizedMmatrix}
\ee
we can obtain the twists and OPE coefficients of double-twist operators $[ij]_0\dots [ij]_{N_{ij}-1}$, $[kl]_0\dots [kl]_{N_{kl}-1}$, $\dots$.
To build a more complete picture of the low-twist spectrum of the Ising model, it will be important to study (\ref{eq:generalizedMmatrix}) for $[\e T]$, $[TT]$, and other families, in addition to $[\s\s]$ and $[\e\e]$.

To summarize, we have
\be
H &= \mathrm{diag}\p{\mathrm{eigenvalues}\p{M_0^{-1} M_0'}}.
\label{eq:twisthamiltoniansummary}
\ee
where $M_0=M(\bar y_0,\bar h)$ and $M_0'=\bar \ptl M(\bar y,\bar h)|_{\bar y = \bar y_0}$.
The OPE coefficients $\L(\bar h)$ can be obtained as follows. Let
\be
\Lambda' &= \begin{pmatrix}
\l_{\s\s[\s\s]_0} & \l_{\s\s[\s\s]_1} & \l_{\s\s[\e\e]_0} \\
\l_{\s\s[\s\s]_0}h_{[\s\s]_0} & \l_{\s\s[\s\s]_1} h_{[\s\s]_1} & \l_{\s\s[\e\e]_0} h_{[\e\e]_0} \\
\l_{\e\e[\s\s]_0} & \l_{\e\e[\s\s]_1} & \l_{\e\e[\e\e]_0}
\end{pmatrix}.
\ee
(The generalization to many twist families as in (\ref{eq:generalizedMmatrix}) should be clear.)
Note that $M_0 = \L' \bar y_0^H \L'^T$ and $M_0'=\L' H \bar y_0^H \L'^T$.  Let us compute decompositions\footnote{$U_{1}$ and $U_2$ can be obtained in several ways, for example via Cholesky decomposition, or eigenvalue decomposition. If $M_0$ and $M_0'$ are positive semidefinite, then $U_{1,2}$ will be real.}
\be
M_0 &= U_1 U_1^T,\nn\\
M_0' &= U_2 U_2^T.
\ee
It must be the case that
\be
U_1 &= \L' \bar y_0^{H/2} Q_1^T,\nn\\
U_2 &= \L' \bar y_0^{H/2} H^{1/2} Q_2^T,
\label{eq:equationtosolveforlambda}
\ee
where $Q_1,Q_2$ are orthogonal matrices. To determine the $Q_{1,2}$, consider the combination
\be
\label{eq:equationforqs}
U_1^{-1} U_2 &= Q_1 H^{1/2} Q_2^T,
\ee
The right-hand side has the form of a singular value decomposition (SVD), 
so $Q_1,Q_2$ can be obtained by from an SVD of $U_1^{-1} U_2$. Finally, we solve for $\L'$ (and hence $\L$) using either equation in (\ref{eq:equationtosolveforlambda}).\footnote{It is easy to check that the number of unknowns $h_{[ij]_n}$ and $\l_{ij[kl]_n}$ always equals the total number of distinct entries in the matrices $M_0,M_0'$. Thus, we can solve for $\L$ using either equation in (\ref{eq:equationtosolveforlambda}) and we will get the same result.} Note that this procedure gives us $\l_{ij[kl]_n}$. To determine the actual OPE coefficients $f_{ij[kl]_n}$, we must multiply by Jacobian factors (\ref{eq:jacobianstuff}), which are different for each eigenvalue of the twist Hamiltonian $h_{[kl]_n}$.

\subsection{Choice of external states}
\label{sec:effectivesubspace}

We can understand the twist-Hamiltonian prescription as follows. The four-point function $\<\f_i(x_1)\f_j(x_2)\f_k(x_3)\f_l(x_4)\>$ is the amplitude for creating a state with $\f_i(x_1) \f_j(x_2)$ and annihilating it with $\f_k(x_3) \f_l(x_4)$.
States created by pairs of local operators are not eigenstates of the twist-Hamiltonian $H$.
Our task is to compute the change of basis between pair states $\f_i(x_1)\f_j(x_2)|0\>$ and $H$-eigenstates (the OPE coefficients $f_{ij[ab]_n}$), and to find the eigenvalues $h_{[ab]_n}$. For this, we need  matrix elements of $\bar y^H$ between enough states to span the Hilbert space. 

Although generically any eigenstate $\cO$ will appear in the span of $\f_i(x_1)\f_j(x_2)|0\>$ (when global charges allow it), it should be easier to study $\cO$ precisely if we use states that have large overlap with $\cO$.
Specifically, we expect to get a better picture of the $[\f_i \f_j]_n$ operators if we study matrix elements that include $\f_i(x_1)\f_j(x_2)|0\>$.
Similarly, one might learn about multi-twist operators $[\cO_1\cdots\cO_n]$ by performing very high-precision studies of four-point functions. However, it may be more efficient to study matrix elements of $\cO_1(x_1)\cdots \cO_n(x_n)|0\>$, i.e.\ to study higher-point correlators.
  
\subsection{Analogy with the renormalization group}

The difference between the twist-Hamiltonian approach and the approach of section~\ref{sec:generaldtfamilies} is analogous to the difference between RG-improved perturbation theory and fixed-order calculations. In fixed-order perturbation theory at $L$ loops, one finds powers of logarithms $\log^2 x, \dots, \log^L x$ (where $x$ is some kinematic variable) whose coefficients are related by exponentiation to coefficients at lower loop order. In RG-improved perturbation theory, we exploit this fact by choosing a scale $x_0$ and deriving a differential equation for the $x$-dependence near $x/x_0=1$.  The $\log^1 x/x_0$ terms at $L$-loops give $L$-th order corrections to anomalous dimensions, beta functions, etc..

In the context of large-spin operators, the role of $L$-loops is played by $L$-twist operators in the crossed-channel. To see exponentiation of anomalous dimensions, we must in principle  sum all multi-twist operators. Instead, in analogy with RG-improved perturbation theory, we assume exponentiation works and find anomalous dimensions by working at some scale $\bar y_0$. $L$-twist operators also give corrections to anomalous dimensions, given by the Casimir-regular terms after summing their conformal blocks. These are analogous to $\log^1 x/x_0$ terms in $L$-loop perturbation theory. To compute them, we must understand the detailed structure of the $L$-twist operators.

\subsection{Crossing symmetry for the twist Hamilonian}

To compute $M(\bar y, \bar h)$, we need the following lemma.
\begin{mylemma}
\label{lem:coperator}
If an infinite sum of $\SO(d,2)$ blocks has Casimir-singular part $f(\bar y, y)$,\footnote{We assume $p(\bar h)$ and $h(\bar h)$ depend nicely on $\bar h$, and $\bar h(\ell)$ is the solution to $\bar h(\ell) - h(\bar h(\ell)) = \ell$.}
\be
\label{eq:sodsum}
\sum_\ell \pdr{\bar h}{\ell} p(\bar h) G_{h(\bar h), \bar h}(\bar z, 1-z) &= f(\bar y, y) + [\dots]_y,
\ee
then the asymptotic density of $p(\bar h) \bar y^{h(\bar h)}$ is given by
\be
p(\bar h) \bar y^{h(\bar h)} &\sim (\cC f)(\bar y, \bar h),
\ee
where the operator $\cC$ is defined as follows. Let
\be
\cC: \bar y^{h_0} y^a &\mto \sum_{n=0}^\oo \bar y^{h_0+n} C_a^{(n)}(h_0,\bar h),
\ee
and extend $\cC$ linearly to arbitrary sums of powers and logs of $y,\bar y$. Here, $C^{(n)}_a(h_0,\bar h)$ are the coefficients defined in section~\ref{sec:sumovernandell}.
\end{mylemma}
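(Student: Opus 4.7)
\textbf{Proof proposal for Lemma~\ref{lem:coperator}.}

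The plan is to reduce the lemma to the master identity~(\ref{eq:magicidentityddims}), which states that
\be
\sum_{n=0}^\oo \sum_{\ell=0}^\oo \pdr{\bar h}{\ell}\, C_a^{(n)}(h_0,\bar h)\, G_{h_0+n,\bar h}(\bar z, 1-z) = \bar y^{h_0} y^a + [\dots]_y.
\ee
This identity asserts that $\cC$, as defined on the basis monomials $\bar y^{h_0}y^a$, is a right-inverse to the operation ``sum an integer-spaced family of $\SO(d,2)$ blocks and extract the $y$-Casimir-singular part.'' The proposal is to show that the hypothesis~(\ref{eq:sodsum}) reduces to this basis case, and then invoke linearity.

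First, I would use reparameterization invariance (section~\ref{sec:nonintegerspacing}) to replace the implicitly-defined weights $\bar h(\ell)$ with an integer-spaced family $\bar h_0+\ell$, at the cost of redefining $p(\bar h)$ by the Jacobian factor $\pdr{\bar h}{\ell}$ and modifications that are Casimir-regular in $y$ and therefore invisible to $f$. The identity~(\ref{eq:derivativemanipulations}) organizes the subleading pieces so that only the leading $k=0$ term survives modulo $[\dots]_y$, and this reabsorption does not disturb the large-$\bar h$ asymptotic expansion that $\cC$ is designed to read off.

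Next, I would substitute the block expansion~(\ref{eq:expansionforblocks}) so that the left-hand side of~(\ref{eq:sodsum}) becomes a double sum of $\SL(2,\R)$ blocks $k_{2(\bar h+j)}(1-z)$ weighted by $p(\bar h)A_{n,j}(h(\bar h),\bar h)\bar y^{h(\bar h)+n}$. Working order-by-order in $\bar y$ and applying the $\SL(2,\R)$ summation technology of section~\ref{sec:generalcoefficients}, the $y$-Casimir-singular part at each order in $\bar y$ is determined entirely by the large-$\bar h$ asymptotic expansion of the coefficient functions in the $S_a(\bar h)$ basis. Matching this order-by-order in $\bar y$ against the expansion of $f$ in basis monomials then gives a system of equations whose unique solution, by~(\ref{eq:magicidentityddims}) and the definition of the coefficients $C_a^{(n)}(h_0,\bar h)$ in section~\ref{sec:sumovernandell}, is precisely $p(\bar h)\bar y^{h(\bar h)} \sim (\cC f)(\bar y,\bar h)$. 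Extension to $\log y$ and $\log\bar y$ terms is automatic by differentiating the identity in $a$ and in $h_0$, which is why $\cC$ is defined to extend linearly to such expressions.

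The main obstacle is the interchange of the sum over $\ell$ (which diverges term-by-term once we pass to the $\bar y$-expansion) with the small-$\bar y$ expansion, and the corresponding question of whether the ``Casimir-singular part'' is well-defined on the left-hand side. This is addressed by the same analytic-continuation-and-subtraction regularization used in section~\ref{sec:generalcoefficients} to derive~(\ref{eq:TTsumidentity}): at any finite order in $\bar y$, one subtracts a finite linear combination of $C_a^{(n)}(h_0,\bar h)$ terms (whose sums are known exactly by the master identity) until the remaining sum over $\bar h$ converges absolutely and contributes only Casimir-regular terms. This uses the key fact that two coefficient functions with the same large-$\bar h$ asymptotic expansion differ by a function whose block sum is Casimir-regular, so the asymptotic density $(\cC f)(\bar y,\bar h)$ is unambiguously determined by $f$.
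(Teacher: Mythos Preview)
Your proposal is correct and follows essentially the same route as the paper: reduce by linearity to a single monomial $\bar y^{h_0}y^a$, expand the $\SO(d,2)$ block via~(\ref{eq:expansionforblocks}) into $\SL(2,\R)$ blocks, and invoke the defining recursion for $C_a^{(n)}$ together with the uniqueness of the asymptotic $\bar h$-expansion determined by the Casimir-singular part. The only stylistic difference is that the paper runs the argument in reverse---it \emph{assumes} $p(\bar h)\bar y^{h(\bar h)}\sim(\cC f)(\bar y,\bar h)$, applies the differential operator $\sum_{m,j}\bar y^m A_{m,j}(\bar\partial,\bar h)k_{2(\bar h+j)}(1-z)$ to both sides, and uses the recursion~(\ref{eq:recursionrelationforCs}) (with $h_0\to\bar\partial$) to collapse the result back to $S_a(\bar h)c(\bar y)k_{2\bar h}(1-z)$---whereas you match forward order-by-order in $\bar y$; the two directions are equivalent once uniqueness is in hand.
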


\begin{proof}
By linearity, it suffices to consider $f(\bar y,y)=c(\bar y) y^a$ for some function $c(\bar y)$.  Let us assume
\be
p(\bar h) \bar y^{h(\bar h)} &\sim \sum_{n=0}^\oo \bar y^n C_a^{(n)}\p{\pdr{}{\log \bar y}, \bar h} c(\bar y),
\label{eq:weclaim}
\ee
and show that the sum (\ref{eq:sodsum}) has Casimir-singular part $c(\bar y) y^a$.
 Since Casimir-singular terms uniquely determine an asymptotic $\bar h$-expansion for coefficients of blocks, the claim follows.

As before, let $\bar\ptl=\pdr{}{\log \bar y}$. The $\SO(d,2)$ blocks have expansion
\be
G_{h,\bar h}(\bar z, 1-z) &= \p{\sum_{m=0}^\oo \sum_{j=-m}^m\bar y^m A_{m,j}(\bar \ptl,\bar h)k_{2(\bar h+j)}(1-z)} \bar y^{h}.
\ee
Applying the differential operator in parentheses to (\ref{eq:weclaim}), we get
\be
p(\bar h) G_{h(\bar h), \bar h}(\bar z, 1-z) &=
\sum_{n=0}^\oo\sum_{m=0}^\oo \sum_{j=-m}^m \bar y^m A_{m,j}(\bar\ptl,\bar h)  \bar y^n C^{(n)}_a(\bar \ptl,\bar h) c(\bar y) k_{2(\bar h + j)}(1-z)\nn\\
&= \sum_{n=0}^\oo\sum_{m=0}^\oo \sum_{j=-m}^m \bar y^{n+m} A_{m,j}(\bar\ptl + n, \bar h) C_a^{(n)}(\bar \ptl, \bar h) c(\bar y) k_{2(\bar h+j)}(1-z)\nn\\
&\sim \sum_{n=0}^\oo \bar y^{n}\sum_{m=0}^n \sum_{j=-m}^m  A_{m,j}(\bar\ptl + n-m, \bar h-j) C_a^{(n-m)}(\bar \ptl, \bar h-j) c(\bar y) k_{2\bar h}(1-z).
\ee
In the last line, ``$\sim$" indicates that the two sides give the same Casimir-singular part when summed over $\bar h$ (since shifting $\bar h\to \bar h - j$ only affects Casimir-regular terms). Finally, applying the recursion relation (\ref{eq:recursionrelationforCs}) with $h_0=\bar \ptl$ we get
\be
p(\bar h) G_{h(\bar h), \bar h}(\bar z, 1-z) &\sim S_a(\bar h) c(\bar y) k_{2\bar h}(1-z).
\ee
Summing over $\bar h$ gives the desired result.
\end{proof}

Lemma~\ref{lem:coperator} generalizes trivially to the case of mixed blocks, where we must use the operator
\be
\cC^{r,s}: \bar y^{h_0} y^a &\mto \sum_{n=0}^\oo \bar y^{h_0+n} C_a^{(n)r,s}(h_0,\bar h).
\ee
Applying $\cC^{h_{12},h_{34}}$ to the left-hand side of the crossing equation (\ref{eq:mixedcrossingequation}), we obtain
\be
\label{eq:sumoffamilies}
M_{1234}(\bar y, \bar h) &=
\sum_i \l_{12\cO_i}(\bar h) \l_{43\cO_i}(\bar h) \bar y^{h_i(\bar h)}\nn\\
&\sim
\cC^{h_{12},h_{34}}\p{
\bar y^{h_1+h_3} y^{-h_1-h_3} G_{3214}(z,1-\bar z)
} + (-1)^\ell (3\leftrightarrow 4),\\
G_{3214}(z,\bar z) &\equiv \sum_{\cO} f_{32\cO}f_{41\cO} G_{h_\cO,\bar h_\cO}^{h_{32},h_{14}}(z,\bar z),
\ee
where $i$ runs over twist families in the $1\x2$ and $3\x 4$ OPEs.  As in section~\ref{sec:matchingtosides}, we must add $(-1)^\ell (3\leftrightarrow 4)$  for consistency with the symmetry properties of $\l_{43\cO_i}$.

The contribution of an individual block to (\ref{eq:sumoffamilies}) is,
\be
&\cC^{h_{12},h_{34}} \p{
\bar y^{h_1+h_3} y^{-h_1-h_3} G_{h_\cO,\bar h_\cO}^{h_{32},h_{14}}(z,1-\bar z)
}=
\sum_{m=0}^\oo U^{1234}_{\cO,m}(\bar y, \bar h),
\ee
where
\be
U^{1234}_{\cO,m}(\bar y, \bar h) &\equiv
\sum_{n=0}^\oo \p{U^{(n)1234}_{\cO,m}(\bar h)\bar y^{n+h_1+h_2} + U^{(n)3412}_{\cO,m}(\bar h)\bar y^{n+h_3+h_4}}.
\ee
Using~(\ref{eq:implicitdefofVW}), this has a smooth limit as $h_1+h_2\to h_3+h_4$,
\be
U^{1234}_{\cO,m}(\bar y, \bar h) &= \sum_{n=0}^\oo \bar y^{n+h_1+h_2}\p{V^{(n)1234}_{\cO,m}(\bar h) \log \bar y + W^{(n)1234}_{\cO,m}(\bar h)} \qquad (h_1+h_2=h_3+h_4).
\ee
As a special case, the unit operator contributes
\be
U^{1221}_{\mathbf{1}}(\bar y, \bar h) &= \cC^{h_{12},h_{21}}(\bar y^{h_1+h_2} y^{-h_1-h_2}) = \sum_{n=0}^\oo \bar y^{n+h_1+h_2} C^{(n)h_{12},h_{21}}_{-h_1-h_2}(h_1+h_2,\bar h)\nn\\
&= \sum_{n=0}^\oo \bar y^{n+h_1+h_2} C^\mathrm{MFT}_{n,\bar h-h_1-h_2-n}(2h_1,2h_2).
\ee

\subsection{Application to $[\e\e]_0$ and $[\s\s]_1$}
\label{sec:eezerosigsigone}

\subsubsection{Why large mixing?}

Before computing the Hamiltonian for $[\e\e]_0$ and $[\s\s]_1$, let us explain intuitively why the two families exhibit large mixing at intermediate values of $\bar h$. At very large $\bar h$, the dominant contributions to the anomalous dimensions of $[\e\e]_0$ and $[\s\s]_1$ come from exchange of the stress tensor $T_{\mu\nu}$, and mixing is negligible. However, the operators $[\s\s]_0$ have twist only slightly larger than $T_{\mu\nu}$, so all of their contributions become important at slightly smaller $\bar h$.\footnote{In a weakly-coupled theory, there is no regime where the stress-tensor completely dominates over the first higher-spin family.}

\begin{figure}
\begin{center}
\be
\begin{array}{ccc}
\begin{tikzpicture}[xscale=0.6,yscale=0.6,baseline=(B.base)]
\draw[thick,dashed] (0,0) -- (0,1.5);
\draw[thick] (0,1.5) -- (0,5.5);
\draw[thick,dashed] (0,5.5) -- (0,7);
\draw[thick,dashed] (5,0) -- (5,1.5);
\draw[thick] (5,1.5) -- (5,5.5);
\draw[thick,dashed] (5,5.5) -- (5,7);
\draw[thick] (0,1.5) -- (5,1.5);
\draw[thick] (0,5.5) -- (5,5.5);
\node[below] at (0,0) {$\e$};
\node[below] at (5,0) {$\e$};
\node[above] at (0,7) {$\e$};
\node[above] at (5,7) {$\e$};
\node[below] at (2.5,1.5) {$\s$};
\node[right] at (5,3.5) {$\s$};
\node[above] at (2.5,5.5) {$\s$};
\node[left] at (0,3.5) {$\s$};
\node (B) [left] at (0,3.4) {};
\end{tikzpicture}
&
\qquad\longrightarrow\qquad
&
\exp
\left(
\begin{array}{cc}
&
\begin{tikzpicture}[xscale=0.3,yscale=0.3,baseline=(B.base)]
\draw[thick] (0,0) -- (0,1.5);
\draw[thick,dashed] (0,1.5) -- (0,3.5);
\draw[thick] (5,0) -- (5,1.5);
\draw[thick,dashed] (5,1.5) -- (5,3.5);
\draw[thick] (0,1.5) -- (5,1.5);
\node[below] at (0,0) {$\s$};
\node[below] at (5,0) {$\s$};
\node[above] at (2.5,1.5) {$\s$};
\node[above] at (5,3.5) {$\e$};
\node[above] at (0,3.5) {$\e$};
\node (B) [left] at (0,3.4) {};
\end{tikzpicture}
\\
\begin{tikzpicture}[xscale=0.3,yscale=0.3,baseline=(B.base)]
\draw[thick,dashed] (0,0) -- (0,1.5);
\draw[thick] (0,1.5) -- (0,3.5);
\draw[thick,dashed] (5,0) -- (5,1.5);
\draw[thick] (5,1.5) -- (5,3.5);
\draw[thick] (0,1.5) -- (5,1.5);
\node[below] at (0,0) {$\e$};
\node[below] at (5,0) {$\e$};
\node[above] at (2.5,1.5) {$\s$};
\node[above] at (5,3.5) {$\s$};
\node[above] at (0,3.5) {$\s$};
\node (B) [left] at (0,3.4) {};
\end{tikzpicture}
&
\end{array}
\right)
\end{array}
\ee
\end{center}
\caption{
Exchange of large-spin $[\s\s]_0$ operators looks like the exponentiation of a Hamiltonian that mixes $[\s\s]$ and $[\e\e]$.
}
\label{fig:mixingillustration}
\end{figure}

As illustrated in figure~\ref{fig:mixingillustration}, exchange of large-spin $[\s\s]_0$ operators (namely operators where the vertical distance between $\s$ lines in figure~\ref{fig:mixingillustration} is large) looks like a product of off-diagonal terms that transition between $[\e\e]$ and $[\s\s]$, coming from $\s$-exchange in the $\<\s\e\s\e\>$ four-point function. This is part of the exponentiation of a twist Hamiltonian with structure
\be
H(\bar h) &\approx \begin{pmatrix}
2h_\e + \bar h^{-\tau_T} & \bar h^{-\Delta_\s} \\
\bar h^{-\Delta_\s} & (2h_\s + 1) + \bar h^{-\tau_T}
\end{pmatrix}.
\ee
The off-diagonal terms are unimportant at very large $\bar h$. (We should compare the square of the off-diagonal terms to the diagonal terms.) However, they become important at slightly smaller $\bar h$. In fact, because $2h_\e\approx 2h_\s+1$, they cause the eigenvalues to repel significantly.

\subsubsection{Computing the twist Hamiltonian}

To find the twist Hamiltonian for $[\e\e]_0$ and $[\s\s]_1$, we must compute $M_{\s\s\s\s}$, $M_{\e\e\e\e}$, and $M_{\s\s\e\e}$. For example,
\be
M_{\s\s\s\s}(\bar y, \bar h) &\sim 2\,\cC\p{\bar y^{2h_\s} y^{-2h_\s}G_{\s\s\s\s}(z,1-\bar z)}.
\ee
We will take the first few terms in an asymptotic expansion in large-$\bar h$, so we should truncate powers of $y$ (so that only low-twist operators contribute) before applying $\cC$.  In the $G_{\s\s\s\s}$ correlator, we will include terms up to order $y^{h_\e}$. Let us describe the low-twist part of the correlators $G_{\s\s\s\s}$, $G_{\e\e\e\e}$, and $G_{\e\s\s\e}$ in more detail.

\subsubsection{$G_{\s\s\s\s}$}

We have
\be
\label{eq:blocksssss}
G_{\s\s\s\s}(z,1-\bar z) &= 1 + \sum_{\substack{\cO=[\s\s]_{0,\ell} \\ \ell = 2, 4, \dots}} f_{\s\s\cO}^2 y^{h_{\cO}} k_{2 \bar h_{\cO}}(1-\bar z) + f_{\s\s\e}^2 y^{h_\e}k_{2\bar h_\e}(1-\bar z) 
 + \dots,
\ee
where ``$\dots$" represents terms of higher order than $y^{h_\e}$.\footnote{Here, we assume that no $\Z_2$-even operators other than the ones written have twist less than $\De_\e$. This is supported by numerics but we cannot prove it.} 
Let us split the sum over $[\s\s]_0$ into a finite part which we treat exactly and an infinite part which we expand in small anomalous dimensions $\de_{[\s\s]_0}$,
\be
\label{eq:splitfiniteinfinite}
\sum_{\substack{\cO=[\s\s]_{0,\ell} \\ \ell = 2, 4, \dots}} f_{\s\s\cO}^2 y^{h_{\cO}} k_{2 \bar h_{\cO}}(1-\bar z)
&=
\p{\sum_{\substack{\cO=[\s\s]_{0,\ell} \\ \ell = 2,4,\dots,\ell_0-2}} + \sum_{\substack{\cO=[\s\s]_{0,\ell} \\ \ell = \ell_0,\ell_0+2\dots}}} f_{\s\s\cO}^2 y^{h_{\cO}} k_{2 \bar h_{\cO}}(1-\bar z).
\ee
We can make $\de_{[\s\s]_0}$ arbitrarily small by choosing $\ell_0$ large enough. Taking $\ell_0=6$ will be sufficient for our purposes. Thus, the finite sum in (\ref{eq:splitfiniteinfinite}) will contain the stress tensor and the spin-4 operator $[\s\s]_{0,4}$.
For these contributions, we use the expansion of $k_{2\bar h}(1-\bar z)$ up to first order in $\bar y$,
\be
y^{h_\cO} k_{2\bar h_\cO}(1-\bar z) &\approx y^{h_\cO}\sum_{k=0}^1 \pdr{}{k}\p{-\frac{\G(2\bar h_\cO)}{\G(\bar h_\cO)^2}T_{-k-1}(\bar h_\cO) \bar y^k}.
\ee

Meanwhile, expanding the infinite sum in $\de_{[\s\s]_0}$, we obtain
\be
\sum_{\substack{\cO=[\s\s]_{0,\ell} \\ \ell = \ell_0, \ell_0+2, \dots}} f_{\s\s\cO}^2 y^{h_{\cO}} k_{2 \bar h_{\cO}}(1-\bar z)
&=\sum_{m=0}^\oo 
y^{2h_\s}\log^m y \sum_{\ell=\ell_0,\ell_0+2,\dots} \pdr{\bar h}{\ell}
\frac{\de_{[\s\s]_0}(\bar h)^m}{m!} \l_{\s\s[\s\s]_0}(\bar h)^2 k_{2\bar h}(1-\bar z).
\ee
The quantities $\l_{\s\s[\s\s]_0}$ and $\de_{[\s\s]_0}$ are given in~(\ref{eq:sigsigfit0}) and (\ref{eq:sigsigfit1}).
We can compute the sums over $\ell$ using the methods of appendix~\ref{app:nonintegerspacing}, 
\be
&\sum_{\ell=\ell_0,\ell_0+2,\dots} \pdr{\bar h}{\ell}
\frac{\de_{[\s\s]_0}(\bar h)^m}{m!} \l_{\s\s[\s\s]_0}(\bar h)^2 k_{2\bar h}(1-\bar z)\nn\\
&=\sum_{a\in A_m} \frac 1 2 c_a^{(m)} \bar y^a + \sum_{k=0}^\oo \pdr{}{k}\p{\bar y^k\a_k^\mathrm{even}\left[\frac{\de_{[\s\s]_0}(\bar h)^m}{m!} \l_{\s\s[\s\s]_0}(\bar h)^2, \de_{[\s\s]_0}\right](2h_\s+\ell_0)},
\ee
where $c^{(m)}_a$ are coefficients in the asymptotic expansion
\be
\frac 1 {m!}\de_{[\s\s]_0}(\bar h)^m \l_{\s\s[\s\s]_0}(\bar h)^2 &\sim \sum_{a\in A_m} c_a^{(m)} S_a(\bar h).
\ee
Only the $m\geq 2$ cases will survive the $\cC$ operation (because $\log^m y$ is Casimir-regular for $m=0,1$). However the $m=2$ term is already quite small, so it will be sufficient to truncate the series here.  The first few $c_a^{(2)}$ are
\be
\label{eq:examplec2coeffs}
\frac{1}{2}\de_{[\s\s]_0}^2 \l_{\s\s[\s\s]_0}^2  &\sim \sum_{\cO_1,\cO_2\in\{\e,T\}} f_{\s\s\cO_1}^2 f_{\s\s\cO_2}^2
\frac{\Gamma (2 \bar h_1) \Gamma (2 \bar h_2) \Gamma (2 h_\s)^2 \Gamma (2 h_\s-h_1-h_2)^2}{\Gamma (\bar h_1)^2 \Gamma (\bar h_2)^2 \Gamma (2 h_\s-h_1)^2 \Gamma (2 h_\s-h_2)^2} S_{h_1+h_2-2h_\s}(\bar h) +\dots\nn\\
&= \sum_{\cO_1,\cO_2\in\{\e,T\}} c_{h_1+h_2-2h_\s}^{(2)} S_{h_1+h_2-2h_\s}(\bar h) + \dots.
\ee
The $S_{2h_\e-2h_\s}$ term is important because it gives a contribution to $M_{\s\s\s\s}(\bar y, \bar h)$ proportional to $\bar y^{2h_\e}$, which contributes to mixing with $[\e\e]_0$. In general, we find terms of the form $S_{h_1+\dots + h_n + k - 2h_\s}(\bar h)$ where $h_i \in \{h_T, h_\e\}$ and $k\in \Z_{\geq 0}$.

\begin{figure}
\begin{center}
\begin{tikzpicture}[xscale=0.6,yscale=0.6]
\draw[thick] (0,0) -- (0,1.5);
\draw[thick, dashed] (0,1.5) -- (0,5.5);
\draw[thick] (0,5.5) -- (0,7);
\draw[thick] (5,0) -- (5,1.5);
\draw[thick, dashed] (5,1.5) -- (5,5.5);
\draw[thick, dashed] (1.2,1.5) -- (1.2,5.5);
\draw[thick, dashed] (2.4,1.5) -- (2.4,5.5);
\draw[thick] (5,5.5) -- (5,7);
\draw[thick] (0,1.5) -- (5,1.5);
\draw[thick] (0,5.5) -- (5,5.5);
\node[below] at (0,0) {$\s$};
\node[below] at (5,0) {$\s$};
\node[above] at (0,7) {$\s$};
\node[above] at (5,7) {$\s$};
\node[below] at (2.5,1.5) {$\s$};
\node[above] at (2.5,5.5) {$\s$};
\node[left] at (0.1,3.5) {$h_1$};
\node[] at (0.7,3.5) {$h_2$};
\node[] at (1.9,3.5) {$h_3$};
\node[] at (3.8,3.5) {$\dots$};
\end{tikzpicture}
\end{center}
\caption{The operators $[\s\s]_0$ give contributions to $M(\bar y,\bar h)$ of the form $\bar y^{h_1+\cdots+h_i+k}$. At large-$\bar h$, these must be matched by multi-twist operators $[\cO_1\cdots \cO_i]_k$. In the picture, we can see how multi-twist operators must appear in the $\s\x\s$ OPE (bottom-to-top channel) because they come from the exponentiation of the anomalous dimensions of $[\s\s]$ in the left-to-right channel.}
\label{fig:illustrationofmultitwist}
\end{figure}
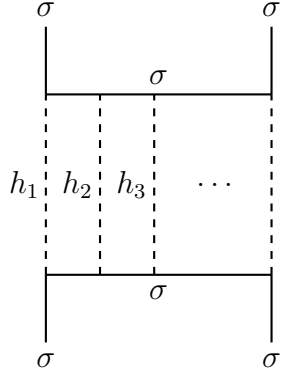

Here, we can see the exponentiation discussed in section~\ref{sec:whatabout} at work. Summing over the family $[\s\s]_0$ gives terms of the form $\bar y^{h_1 + \dots + h_i + k}$, where $h_i$ are half-twists of other operators in the theory. These give contributions to the twist Hamiltonian proportional to $h_1+\dots+h_i+k$, which must be matched by multi-twist operators $[\cO_1\dots\cO_i]_k$. This is illustrated in figure~\ref{fig:illustrationofmultitwist}.

\newcommand\lb{\nn\\&\quad}
\newcommand\parenspace{\quad\quad\quad\quad}

Plugging in the values (\ref{eq:actualvalues}) and multiplying by $\bar y^{2h_\s} y^{-2h_\s}$, the infinite sum is
\be
\label{eq:currentsumsigsigsigsig}
&\bar{y}^{2h_\s}y^{-2h_\s}\sum_{\substack{\cO=[\s\s]_{0,\ell} \\ \ell = 6, 8, \dots}} f_{\s\s\cO}^2 y^{h_{\cO}} k_{2 \bar h_{\cO}}(1-\bar z)
\nn\\
&=
1
- 3.69919 \bar{y}^{h_T}
+ (3.37064 \log \bar y+11.6413) \bar{y}^{2 h_{\sigma }}
\lb
+ 0.739023 \bar{y}^{h_{\epsilon }}
+ (28.1977 \log \bar y+44.2112) \bar{y}^{2 h_{\sigma }+1}
+\dots
\lb
+
\log y \Big(
-1.44458 \bar{y}^{h_T}
+ (0.0173629 \log \bar y+1.88281) \bar{y}^{2 h_{\sigma }}
\lb \parenspace
- 0.591176 \bar{y}^{h_{\epsilon }}
+ (-0.267215 \log \bar y-0.278914) \bar{y}^{2 h_{\sigma }+1}
+\dots
\Big)
\lb
+
\log^2y \Big(
(-0.0000261014 \log \bar y-0.000146056) \bar{y}^{2 h_{\sigma }}
\lb \parenspace
+ 4.36051\!\.\!10^{-6} \bar{y}
+ 0.000391581 \bar{y}^{h_T+h_{\epsilon }}
+ 0.0369549 \bar{y}^{2 h_{\epsilon }}
\lb \parenspace
+ 3.88489\!\.\!10^{-6} \bar{y}^{h_T+1}
+ (0.00506123 \log \bar y-0.0347285) \bar{y}^{2 h_{\sigma }+1}
\lb \parenspace
+ 1.64132\!\.\!10^{-6} \bar{y}^{h_{\epsilon }+1}
+ 2.26961\!\.\!10^{-7} \bar{y}^{h_T+2 h_{\epsilon }}
- 9.74836\!\.\!10^{-7} \bar{y}^2
+ \dots
\Big)
\lb
+\dots
\ee
where ``$\dots$" represent terms higher order in $\bar y$ or $\log y$. We stress that while we have written the above coefficients numerically for brevity, they all have analytic formulae.  For example, the coefficient of $\log^2 y\, \bar y^{2h_\e}$ is given by $\frac 1 2 c_{2h_\e-2h_\s}^{(2)}$ in (\ref{eq:examplec2coeffs}).

We have written ``$\approx$" instead of ``$=$" because the above formula is based on the approximations~(\ref{eq:sigsigfit0}) and (\ref{eq:sigsigfit1}) for $\l_{\s\s[\s\s]_0}$ and $\de_{[\s\s]_0}$. Because those formulae match the numerical data to high accuracy, the same is true of (\ref{eq:currentsumsigsigsigsig}). However, a more sophisticated approximation for $\l_{\s\s[\s\s]_0}$, $\de_{[\s\s]_0}$ would include contributions from operators $\cO_i$ other than $\e,T$, giving rise to additional terms like $\bar y^{h_i}$ in (\ref{eq:currentsumsigsigsigsig}).\footnote{Including the contribution of the whole family $[\s\s]_0$ to itself would give $\log^m y \log^n \bar y$ terms, coming in at order $m+n+3$ in the expansion in the small parameter $2h_\s-\nu$. Such terms have been discussed in~\cite{Alday:2015ota}.}

\subsubsection{$G_{\e\e\e\e}$}

The computation of $G_{\e\e\e\e}(z,1-\bar z)$ proceeds similarly. We have
\be
M_{\e\e\e\e}(\bar y, \bar h) &= 2\cC\p{\bar y^{2h_\e} y^{-2h_\e} G_{\e\e\e\e}(z,1-\bar z) },\nn\\
\label{eq:blockeeee}
G_{\e\e\e\e}(z,1-\bar z) &= 1 + \sum_{\substack{\cO=[\s\s]_{0,\ell} \\ \ell = 2, 4, \dots}} f_{\e\e\cO}^2 y^{h_{\cO}} k_{2 \bar h_{\cO}}(1-\bar z) + f_{\e\e\e}^2 y^{h_\e}k_{2\bar h_\e}(1-\bar z) 
 + \dots.
\ee
The coefficient $f_{\e\e\e}$ is given by (\ref{eq:isingisland}).
We split the sum over $[\s\s]_0$ into a finite part ($\ell<6$) and an infinite part $(\ell \geq 6$) and expand the infinite part in small $\de_{[\s\s]_0}$, up to order $m=2$. This time all the terms $m=0,1,2$ contribute nontrivially after the $\cC$ operation.
The OPE coefficients $\l_{\e\e[\s\s]_0}$ can be obtained from (\ref{eq:betterfitforfeess0}). The infinite sum is
\be
\label{eq:currentsumepsepsepseps}
&\bar y^{2h_\e} y^{-2h_\e} \sum_{\substack{\cO=[\s\s]_{0,\ell} \\ \ell = 6, 8, \dots}} f_{\e\e\cO}^2 y^{h_{\cO}} k_{2 \bar h_{\cO}}(1-\bar z)
\nn\\
&\approx
y^{2h_\s-2h_\e}
\Big(
1.83831 \bar{y}^{2 h_{\sigma }}
+ 0.0294478 \bar{y}^{2 h_{\sigma }+h_T}
- 11.8305 \bar{y}^{2 h_{\sigma }+h_{\epsilon }}
\lb \parenspace
+ (23.1945 \log \bar y+54.9846) \bar{y}^{2 h_{\epsilon }}
+ 57.4846 \bar{y}^{2 h_{\sigma }+1}
\lb \parenspace
- 0.038081 \bar{y}^{2 h_{\sigma }+h_T+h_{\epsilon }}
+ 0.609036 \bar{y}^{2 h_{\sigma }+2 h_{\epsilon }}
+ \dots
\Big)
\lb
+
y^{2h_\s-2h_\e} \log y \Big(
-0.0114997 \bar{y}^{2 h_{\sigma }+h_T}
- 1.77142 \bar{y}^{2 h_{\sigma }+h_{\epsilon }}
\lb \parenspace \parenspace
+ (0.604068 -0.746285 \log \bar y) \bar{y}^{2 h_{\epsilon }}
- 0.00171201 \bar{y}^{2 h_{\sigma }+1}
\lb \parenspace \parenspace
+ 0.00526509 \bar{y}^{2 h_{\sigma }+h_T+h_{\epsilon }}
+ 0.341187 \bar{y}^{2 h_{\sigma }+2 h_{\epsilon }}
+ \dots
\Big)
\lb
+
y^{2h_\s-2h_\e} \log^2y \Big(
(-0.000745198 \log \bar y-0.00496677) \bar{y}^{2 h_{\epsilon }}
+ 0.00016714 \bar{y}^{2 h_{\sigma }+1}
\lb \parenspace \parenspace
+ 0.00187562 \bar{y}^{2 h_{\sigma }+h_T+h_{\epsilon }}
+ 0.0215162 \bar{y}^{2 h_{\sigma }+2 h_{\epsilon }}
+ \dots
\Big)
\lb
+ \dots
\ee

\subsubsection{$G_{\s\e\e\s}$}

For $M_{\s\s\e\e}$, we have
\be
M_{\s\s\e\e}(\bar y, \bar h) &\sim 2 \cC \p{\bar y^{h_\s+h_\e} y^{-h_\s-h_\e} G_{\e\s\s\e}(z,1-\bar z) }, \nn\\
G_{\e\s\s\e}(z,1-\bar z) &=
f_{\s\s\e}^2 G^{h_{\e\s},h_{\s\e}}_{h_\s,h_\s}(z,1-\bar z)
+ \sum_{\substack{\cO=[\s\e]_{0,\ell} \\ \ell=2,3,\dots}} f_{\s\e\cO}^2 y^{h_\cO} k^{h_{\e\s},h_{\s\e}}_{2\bar h_\cO}(1-\bar z) + \dots.
\ee
Here, ``$\dots$" represents higher order terms in $y$.  We keep the terms of order $y^{h_\s}$ and $y^{h_\s+1}$ in the conformal block for $\s$.  The sum over $[\s\e]_0$ can be performed as before, by splitting it into a finite part $\ell<\ell_0$ that we treat exactly and an infinite part $\ell\geq \ell_0$ that we expand in the anomalous dimension $\de_{[\s\e]_0}$. The quantities $\l_{\s\e[\s\e]_0}^2$ and $\de_{[\s\e]_0}$ are given in (\ref{eq:secondapproximationfse}) and (\ref{eq:secondapproximationdeltase}).
We expand to fifth order in $\de_{[\s\e]_0}$ and take $\ell_0=6$.  The final result for $M_{\s\s\e\e}(\bar y, \bar h)$ is independent of $\ell_0$ to high precision. The infinite sum is
\be
\label{eq:currentsumsigepsepssig}
&\bar y^{h_\e+h_\s} y^{-h_\e-h_\s} \sum_{\substack{\cO=[\s\e]_{0,\ell} \\ \ell = 6, 7, \dots}} f_{\s\e\cO}^2 y^{h_{\cO}} k^{h_{\e\s},h_{\s\e}}_{2 \bar h_{\cO}}(1-\bar z)
\nn\\
&\approx
1
- 10.0851 \bar{y}^{h_T}
+ (1.02429 -0.234943 \log \bar y) \bar{y}^{2 h_{\sigma }}
\lb
+ 1.07667 \bar{y}^{h_{\epsilon }}
+ 650.249 \bar{y}^{2 h_{\epsilon }}
- 884.116 \bar{y}^{2 h_{\sigma }+1}
+\dots
\lb
+
\log y \Big(
-3.93834 \bar{y}^{h_T}
+ (4.06924 -0.123248 \log \bar y) \bar{y}^{2 h_{\sigma }}
\lb \parenspace
- 0.861278 \bar{y}^{h_{\epsilon }}
+ 18.8077 \bar{y}^{2 h_{\epsilon }}
- 22.0514 \bar{y}^{2 h_{\sigma }+1}
+\dots
\Big)
\lb
+
\log^2y \Big(
(-0.0170791 \log \bar y-0.0824236) \bar{y}^{2 h_{\sigma }}
- 0.0269513 \bar{y}
- 0.0525874 \bar{y}^{2 h_{\sigma }+h_T}
\lb\parenspace
+ 0.0409684 \bar{y}^{4 h_{\sigma }}
+ 0.209787 \bar{y}^{h_T+h_{\epsilon }}
+ 0.010682 \bar{y}^{2 h_{\sigma }+h_{\epsilon }}
\lb\parenspace
+ (0.0911374 \log \bar y+4.34364) \bar{y}^{2 h_{\epsilon }}
+ 0.629932 \bar{y}^{h_T+1}
\lb\parenspace
+ (-0.101919 \log \bar y-8.8983) \bar{y}^{2 h_{\sigma }+1}
+ 2.31403 \bar{y}^{4 h_{\sigma }+h_T}
\lb\parenspace
- 0.0830807 \bar{y}^{6 h_{\sigma }}
+ 0.0512032 \bar{y}^{h_{\epsilon }+1}
+ 0.165984 \bar{y}^{2 h_{\sigma }+h_T+h_{\epsilon }}
- 0.0762239 \bar{y}^{4 h_{\sigma }+h_{\epsilon }}
\lb\parenspace
- 0.000540742 \bar{y}^{h_T+2 h_{\epsilon }}
- 0.00322981 \bar{y}^{2 h_{\sigma }+2 h_{\epsilon }}
+ 0.00610087 \bar{y}^2
+\dots
\Big)
\lb
+
\log^3y \Big(
-0.0000360366 \bar{y}^{2 h_{\sigma }}
+ 0.0131415 \bar{y}^{2 h_{\sigma }+h_T}
- 0.00465426 \bar{y}^{4 h_{\sigma }}
\lb\parenspace
- 0.0559877 \bar{y}^{2 h_{\sigma }+h_{\epsilon }}
+ 0.217299 \bar{y}^{2 h_{\epsilon }}
- 0.0819986 \bar{y}^{h_T+1}
\lb\parenspace
+ (0.0176415 \log \bar y+0.14041) \bar{y}^{2 h_{\sigma }+1}
- 0.126862 \bar{y}^{4 h_{\sigma }+h_T}
- 0.0592894 \bar{y}^{6 h_{\sigma }}
\lb\parenspace
- 0.013226 \bar{y}^{h_{\epsilon }+1}
- 0.0691064 \bar{y}^{2 h_{\sigma }+h_T+h_{\epsilon }}
+ 0.019903 \bar{y}^{4 h_{\sigma }+h_{\epsilon }}
\lb\parenspace
+ 0.00892839 \bar{y}^{h_T+2 h_{\epsilon }}
+ 0.0103174 \bar{y}^{2 h_{\sigma }+2 h_{\epsilon }}
+ 0.000566515 \bar{y}^2
+\dots
\Big)
\lb
+
\log^4y \Big(
0.0000223151 \bar{y}^{2 h_{\sigma }}
- 0.000536879 \bar{y}^{4 h_{\sigma }}
+ 0.00019118 \bar{y}^{2 h_{\epsilon }}
\lb\parenspace
+ (0.0348547 -0.000867286 \log \bar y) \bar{y}^{2 h_{\sigma }+1}
- 0.0521425 \bar{y}^{4 h_{\sigma }+h_T}
\lb\parenspace
+ 0.0114817 \bar{y}^{6 h_{\sigma }}
+ 0.00543315 \bar{y}^{2 h_{\sigma }+h_T+h_{\epsilon }}
+ 0.000922452 \bar{y}^{4 h_{\sigma }+h_{\epsilon }}
\lb\parenspace
- 0.00202392 \bar{y}^{2 h_{\sigma }+2 h_{\epsilon }}
- 0.0000276538 \bar{y}^2
+\dots
\Big)
\lb
+
\log^5y \Big(
-4.06997\!\.\!10^{-8} \bar{y}^{2 h_{\sigma }}
+ 0.000231809 \bar{y}^{2 h_{\epsilon }}
- 0.00316706 \bar{y}^{2 h_{\sigma }+1}
\lb\parenspace
+ 0.00319354 \bar{y}^{4 h_{\sigma }+h_T}
- 0.0000952085 \bar{y}^{6 h_{\sigma }}
- 0.000188447 \bar{y}^{4 h_{\sigma }+h_{\epsilon }}
+\dots
\Big)
\lb
+\dots
\ee

\subsubsection{Choice of $\bar y_0$}

After computing $M(\bar y, \bar h)$, we must choose a value $\bar y_0$ at which to evaluate the twist Hamiltonian (\ref{eq:twisthamiltoniansummary}).
This presents a trade-off. Small $\bar y_0$ is good because higher-twist operators are exponentially suppressed.\footnote{Additionally, we truncate $M(\bar y, \bar h)$ at order $\bar y^2$, which also removes the effects of higher twist families.}

However, very small $\bar y_0$ is bad for the following reason. Consider the expansion
\be
\label{eq:exponentiatedseries}
\bar y^{\de} &= 1 + \de \log \bar y + \frac 1 2 \de^2 \log^2 \bar y + \frac 1 6 \de^3 \log^3 \bar y+\dots.
\ee
As explained in section~\ref{sec:whatabout}, if the $\log \bar y$ term gets a contribution from exchange of $\cO$ in the crossed-channel, then $\log^2 \bar y$ comes from the exchange of double-twist operators $[\cO\cO]$. Similarly, $\log^3 \bar y$ comes from the exchange of twiple-twist operators $[\cO\cO\cO]$, and so on. If we only include operators with bounded twist in the crossed-channel, we truncate the series (\ref{eq:exponentiatedseries}) and lose exponentiation. This becomes a problem when $\de \log \bar y$ is large. In other words, when $|\!\log \bar y| \gtrsim 1/\de$ there are large logarithms that have not been correctly re-summed because we have not included arbitrary multi-twist operators $[\cO\cdots\cO]$ in the crossed-channel.

In our case, we have included double-twist operators built out of $\s$'s and $\e$'s, so we expect to find errors that go like $\log^3 \bar y\, h^{-2h_\s-h_\e}$ and $\log^4\bar y\, \bar h^{-4h_\s}$, coming from $[\s\s\e]$ and $[\s\s\s\s]$. For small spins, the anomalous dimensions of $[\s\s]_1$ and $[\e\e]_0$ grow to $\sim 0.5$, suggesting we should not take $\bar y_0$ much smaller than $e^{-1/0.5}\sim 0.1$.\footnote{It should be possible to surmount these difficulties with a more sophisticated analysis. If we include higher-twist families $[\s\s]_{n\geq 1}$ and $[\e\e]_{n\geq 0}$ in the twist Hamiltonian, there is less downside to working at larger $\bar y_0$. On the other hand, we expect these higher families to have larger mixing with other families like $[TT]$, $[T\e]$, etc.. So it may be necessary to study a larger system of correlators at the same time. Alternatively, we might try to restore exponentiation of (\ref{eq:exponentiatedseries}) by approximating the contribution of multi-twist operators $[\cO\cdots\cO]$ in some way.}

\subsubsection{Comparison to numerics}

\begin{figure}[ht!]
\begin{center}
\includegraphics[width=0.9\textwidth]{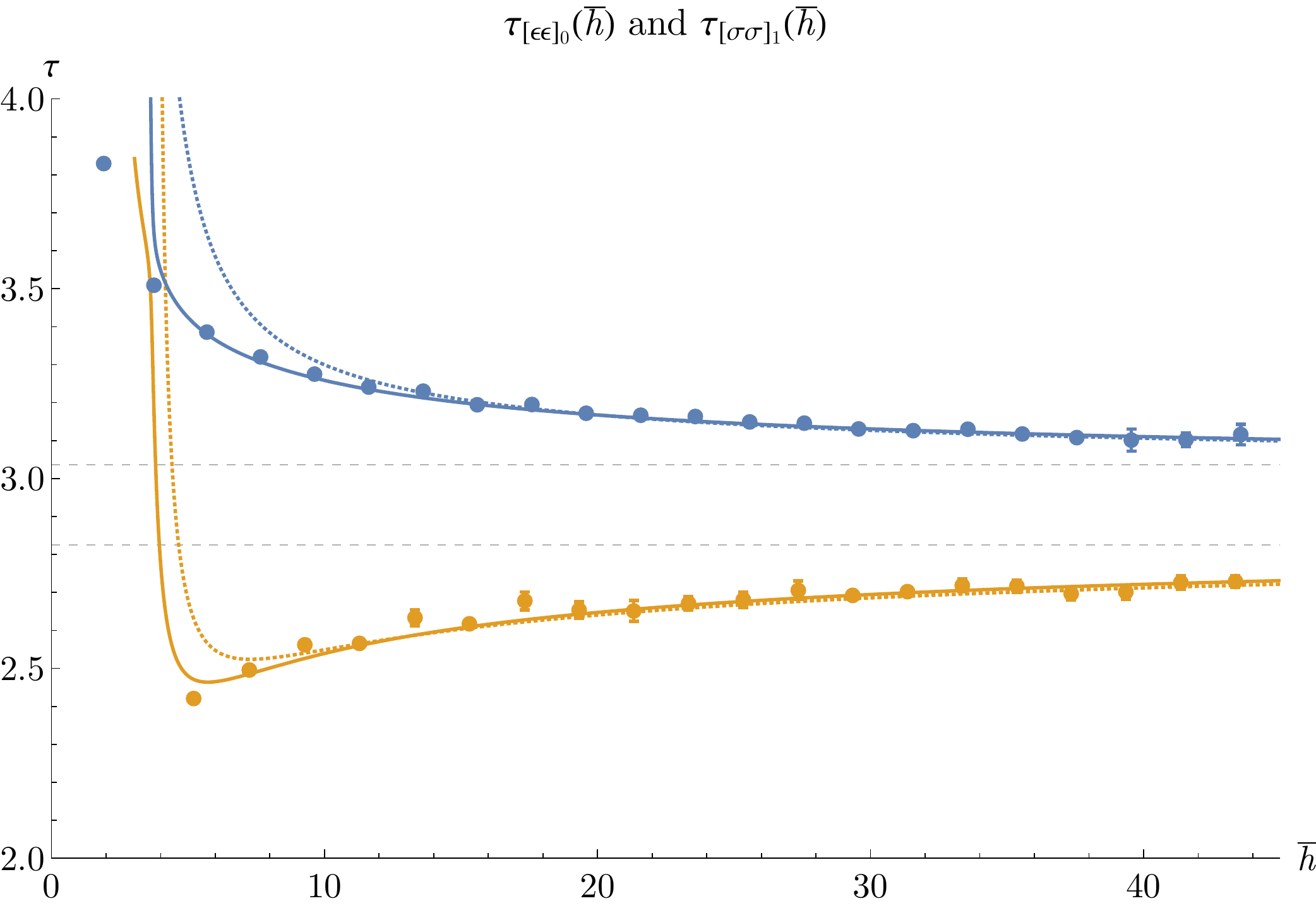}
\end{center}
\caption{Comparison between numerical data and analytical predictions for $\tau_{[\s\s]_1}$ (blue) and $\tau_{[\e\e]_0}$ (orange). Solid lines correspond to $\bar y_0=0.1$, and dotted lines correspond to $\bar y_0=0.02$. The orange curve ramps up sharply (moving from right to left) near $\bar h\approx 3.4$ because $M(\bar y_0,\bar h)$ becomes degenerate there. This coincides with the lower end of the family $[\e\e]_0$.}
\label{fig:tauEpsEps0AndSigSig1}
\end{figure}

\begin{figure}[ht!]
\begin{center}
\includegraphics[width=0.9\textwidth]{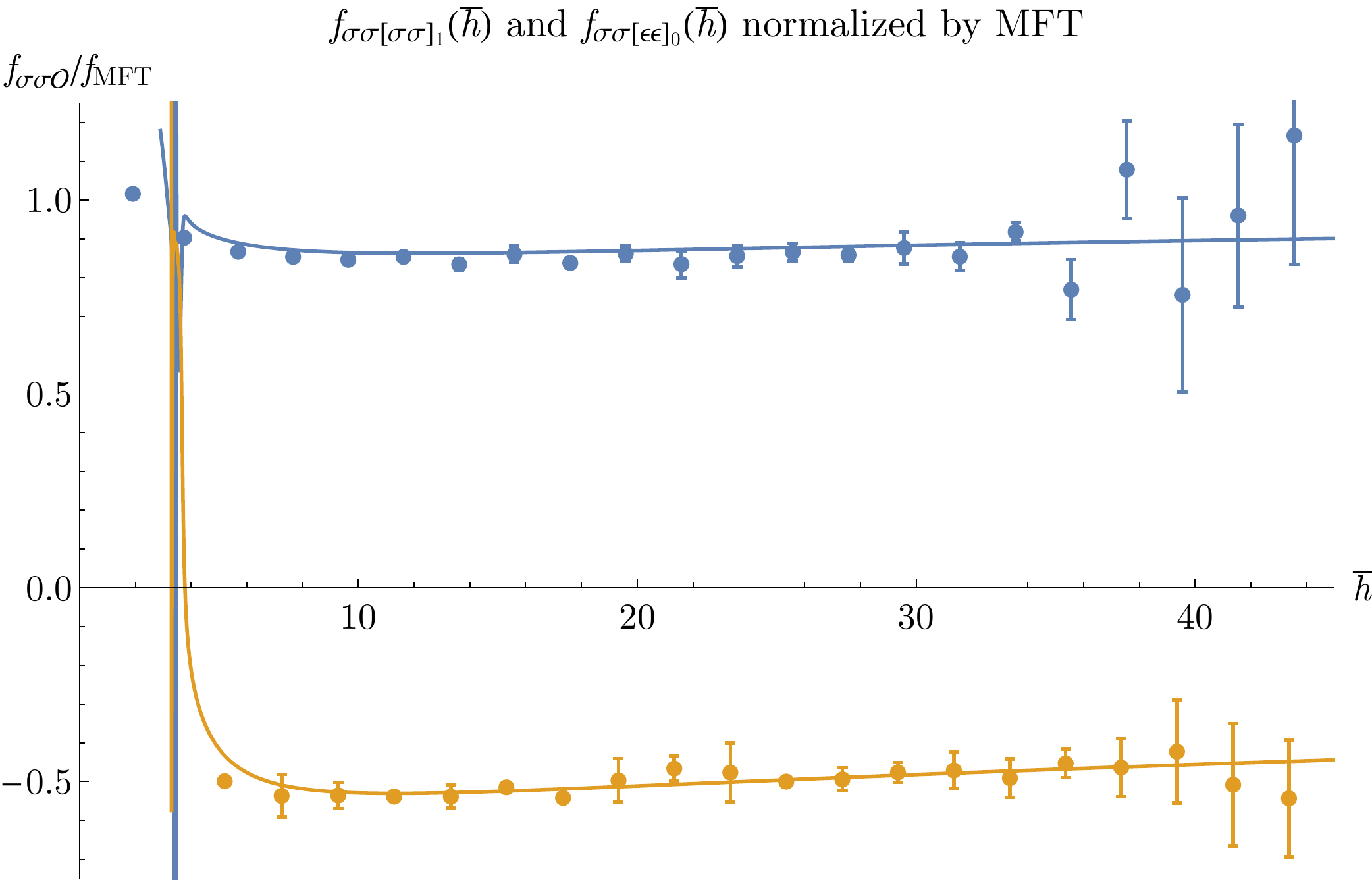}
\end{center}
\caption{Comparison between numerical data and analytical predictions for $f_{\s\s[\s\s]_1}$ (blue) and $f_{\s\s[\e\e]_0}$ (orange), both divided by the Mean Field Theory coefficient $f_\mathrm{MFT} = (2C^{(1)}_{-2h_\s}(2h_\s,\bar h))^{1/2}$. We fix the signs of $[\s\s]_n$ and $[\e\e]_n$ so that $f_{\s\s[\s\s]_n}$ and $f_{\e\e[\e\e]_n}$ are positive. With these conventions, $f_{\s\s[\e\e]_0}$ is negative.}
\label{fig:fSigSigSigSig1AndfSigSigEpsEps0}
\end{figure}

\begin{figure}[ht!]
\begin{center}
\includegraphics[width=0.9\textwidth]{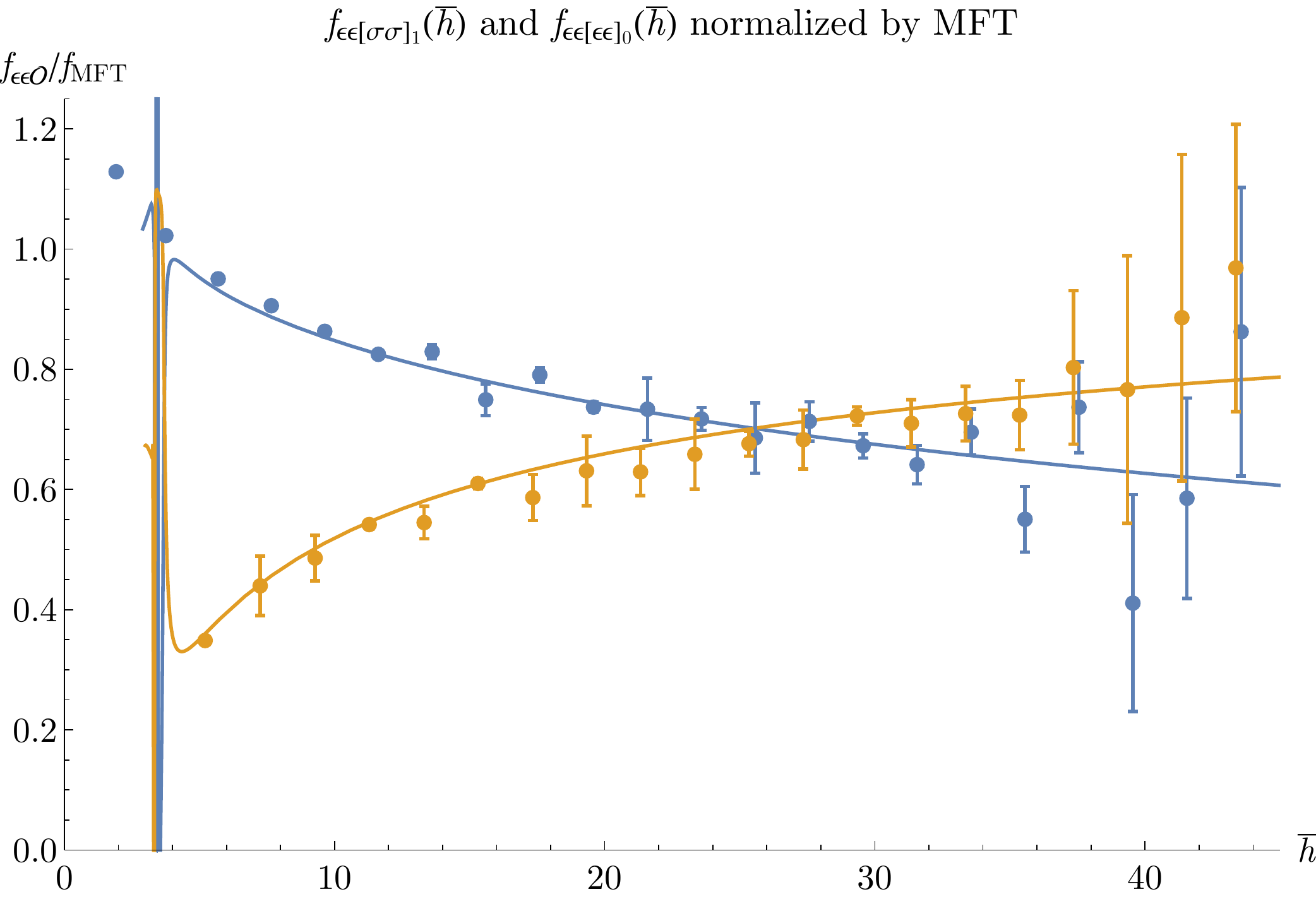}
\end{center}
\caption{Comparison between numerical data and analytical predictions for $f_{\e\e[\s\s]_1}$ (blue) and $f_{\e\e[\e\e]_0}$ (orange), both divided by the Mean Field Theory coefficient $f_\mathrm{MFT} = (2C^{(0)}_{-2h_\e}(2h_\e,\bar h))^{1/2}$. Note that $f_{\e\e[\s\s]_1}$ is larger than $f_{\e\e[\e\e]_0}$ for spins $\ell \lesssim 26$.}
\label{fig:fEpsEpsSigSig1AndfEpsEpsEpsEps0}
\end{figure}

 We compare analytics to numerics in figures~\ref{fig:tauEpsEps0AndSigSig1}, \ref{fig:fSigSigSigSig1AndfSigSigEpsEps0}, and~\ref{fig:fEpsEpsSigSig1AndfEpsEpsEpsEps0}. In figure~\ref{fig:tauEpsEps0AndSigSig1}, we show two sets of curves: the solid lines correspond to $\bar y_0=0.1$, and the dotted lines correspond to $\bar y_0 = 0.02$. As expected, the smaller value of $\bar y_0$ introduces errors that behave approximately like $\log^4 \bar y \bar h^{-4h_\s}$. The value $\bar y_0=0.1$ gives beautiful agreement with numerics for all spins $\ell\gtrsim 2$, so we take $\bar y_0=0.1$ in the remaining plots.
 
The results show several interesting features.  Firstly, we have correctly modeled the large mixing between the two families.  For example, the fact that $f_{\e\e[\s\s]_1}$ is larger than $f_{\e\e[\e\e]_0}$ for $\ell \lesssim 26$ is reproduced nicely.

We also find that $M(\bar y_0, \bar h)$ ceases to be positive-definite at $\bar h\approx 3.4$. This suggests that we cannot continue one of the twist families below this value. Indeed, in the numerical data, the family $[\e\e]_0$ ends at spin 4, which is the lowest spin such that $\bar h > 3.4$. It is surprising that one can predict such a detailed fact about the low-spin spectrum using the first few terms in an asymptotic expansion at high spin. It may be a happy coincidence. Zeros in the determinant of $M(\bar y_0,\bar h)$ are responsible for the rapid oscillations and poles at the leftmost edges of figures~\ref{fig:tauEpsEps0AndSigSig1}, \ref{fig:fSigSigSigSig1AndfSigSigEpsEps0}, and~\ref{fig:fEpsEpsSigSig1AndfEpsEpsEpsEps0}.

\section{Tying the knot}
\label{sec:knot}

\subsection{Where's the magic?}

By matching Casimir-singular terms on one side of the crossing equation to asymptotic large-$\bar h$ expansions on the other, we can systematically solve the crossing equations order-by-order in $y,\bar y$. In particular, we can reproduce a conformal block on one side with a particular large-$\bar h$ expansion on the other side. Our techniques for summing over twist families remove the difficulties associated with accumulation points in twist space.\footnote{See~\cite{Alday:2016njk,Alday:2016jfr} for another approach to this problem.} If this order-by-order solution to crossing is systematic, where are the nontrivial constraints on the spectrum?

Note that the asymptotic large-$\bar h$ expansion misses terms that are Casimir-regular in both channels. That is, terms that are both Casimir-regular in $y$ and Casimir-regular in $\bar y$.  If we write the crossing equation as
\be
\label{eq:writethecrossingeqas}
y^{-2h_\s}\sum_{\cO} f_{\s\s\cO}^2 G_{h_\cO,\bar h_\cO}(z,1-\bar z) &= z\leftrightarrow \bar z,
\ee
then these are terms of the form $y^m \bar y^n \log^p y \log^q \bar y$ with $p,q\leq 1$. We call such terms ``biregular."

We have already seen examples of biregular terms in computations: for example, the $\bar y^{2h_\s} \log \bar y$ and $\bar y^{2h_\s} \log y$ terms in the sum over $[\s\s]_0$ in (\ref{eq:currentsumsigsigsigsig}) are bi-regular, as we can see by multiplying by $\bar y^{-2h_\s}$ as on the right-hand side of (\ref{eq:writethecrossingeqas}). These are certainly nonzero, but they map to zero in the large-$\bar h$ expansion in either channel because $S_a(\bar h)$ has a double zero at $a=0$.

It is somewhat subtle to define the biregular part of a correlator separately from the Casimir-singular part. For example, $y^\de$ is Casimir-singular, but its expansion in small $\de$ contains nonzero Casimir-regular terms ($1$ and $\de \log y$). Indeed, no individual term in the sum (\ref{eq:writethecrossingeqas}) is biregular. However, biregular terms can appear when we evaluate the sum by expanding in the anomalous dimensions of double-twist operators.

To make sense of this, we propose the following prescription. Let us define an ``asymptotic solution" to crossing symmetry as a set of CFT data where the dimensions and OPE coefficients of multi-twist operators have the correct asymptotic large-$\bar h$ behavior to reproduce all Casimir-singular terms on the other side of the crossing equation. Given $S=\{f_{\s\s\cO},\De_\cO,\ell_\cO\}$, define the difference
\be
F_S(y,\bar y) &\equiv y^{-2h_\s} \sum_\cO f_{\s\s\cO}^2 G_{h_\cO,\bar h_\cO}(z,1-\bar z) - (y\leftrightarrow \bar y).
\ee
\begin{claim}
If $S$ is an asymptotic solution to crossing, then the ``biregular limit"
\be
\label{eq:magiclimit}
L_S=\lim_{y\to 0} \left.\p{\pdr{}{\log y}-\pdr{}{\log \bar y}} F_S(y,\bar y)\right|_{\bar y = y}
\ee
is finite. Furthermore if $S$ is a true (not just asymptotic) solution to crossing, then $L_S=0$.
\end{claim}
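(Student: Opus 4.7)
The plan is to exploit the manifest antisymmetry $F_S(y,\bar y) = -F_S(\bar y, y)$ together with the hypothesis that an asymptotic solution eliminates all Casimir-singular terms from $F_S$. Since Casimir-regular terms in $y$ are of the form $y^m$ and $y^m \log y$ for $m \in \Z_{\geq 0}$, and analogously in $\bar y$, the asymptotic-solution hypothesis forces $F_S$ to be biregular, i.e.\ to admit (formally) an expansion
\[ F_S(y,\bar y) = \sum_{m,n \geq 0} y^m \bar y^n \bigl(c_{mn} + d_{mn} \log y + e_{mn} \log \bar y + g_{mn} \log y \log \bar y\bigr). \]
Substituting this into $F_S(y,\bar y) = -F_S(\bar y, y)$ yields the relations $c_{mn} = -c_{nm}$, $g_{mn} = -g_{nm}$, and $d_{mn} = -e_{nm}$; in particular $c_{00} = g_{00} = 0$ and $d_{00} = -e_{00}$.

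Next I would apply the operator $\partial/\partial \log y - \partial/\partial \log \bar y$ termwise. Acting on the monomial $y^m \bar y^n \log^p y \log^q \bar y$ it produces $(m-n)$ times the original plus corrections from differentiating the logs. After setting $\bar y = y$ and collecting, one finds
\[ \left(\tfrac{\partial}{\partial \log y} - \tfrac{\partial}{\partial \log \bar y}\right) F_S\bigg|_{\bar y = y} = \sum_{m,n \geq 0} y^{m+n}\Bigl[(m-n)\bigl(c_{mn} + (d_{mn} + e_{mn})\log y + g_{mn}\log^2 y\bigr) + (d_{mn} - e_{mn})\Bigr]. \]
As $y \to 0$, every term with $m+n \geq 1$ vanishes (powers dominate logs), so the only potential survivor is $m=n=0$. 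The factor $(m-n)$ then kills the first group and the limit is $L_S = d_{00} - e_{00} = 2\, d_{00}$, a finite number. This proves the first half of the claim. The second half is immediate: if $S$ is a true solution then $F_S \equiv 0$ as a function, so every biregular coefficient vanishes, in particular $d_{00} = 0$, whence $L_S = 0$.

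The main obstacle lies in justifying the biregular expansion itself. For a generic asymptotic solution the $\SO(d,2)$ block sums on either side of the crossing equation need not converge absolutely, so ``biregular'' must be defined through the Casimir-singular matching prescription of section~\ref{sec:families}: the coefficients $c_{mn}, d_{mn}, e_{mn}, g_{mn}$ are constructed as regularized sums (akin to the $\alpha_k, \beta_k$ of section~\ref{sec:generalcoefficients}) and one must verify that they are independent of regularization choices (e.g.\ how operators are grouped into twist families), and that the termwise operations above commute with this regularization. Fortunately only the single coefficient $d_{00}$ enters $L_S$, so the higher-order ambiguities are irrelevant to the conclusion; turning this expectation into a clean argument --- most naturally by thinking of $L_S$ as a contour-integral/residue extraction of the leading biregular piece near $y = \bar y = 0$ --- is the technical heart of the proof.
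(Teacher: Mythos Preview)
The paper does not actually prove this claim; it is stated without proof as a motivated proposal (the word ``claim'' here functions more like ``conjecture''). Immediately before the claim, the paper itself flags the central difficulty you identify: ``It is somewhat subtle to define the biregular part of a correlator separately from the Casimir-singular part \dots\ no individual term in the sum is biregular. However, biregular terms can appear when we evaluate the sum by expanding in the anomalous dimensions of double-twist operators.'' The claim is then offered as a prescription for making sense of this, not as a theorem.

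Your formal argument is sound and is exactly the computation the claim is designed to encode. The antisymmetry $F_S(y,\bar y) = -F_S(\bar y,y)$ indeed forces $c_{00}=g_{00}=0$ and $d_{00}=-e_{00}$, and your termwise evaluation of $(\partial_{\log y}-\partial_{\log \bar y})F_S|_{\bar y=y}$ is correct; the limit picks out $d_{00}-e_{00}=2d_{00}$. The second assertion ($L_S=0$ for a true solution) is, as you note, immediate from $F_S\equiv 0$.

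You are also right that the entire substance of the matter lies in the step you call ``the main obstacle.'' Whether the regularized biregular coefficients are well-defined (independent of how one groups operators into twist families, of where one truncates, etc.) is precisely what the paper leaves open. So your proposal is not so much an alternative proof as a clear articulation of why the claim is plausible and where the genuine work would lie --- which matches the paper's own stance.
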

One can define similar biregular limits to extract biregular terms of the form $y^m \bar y^n \log^p y \log^q \bar y$ with $p,q\leq 1$. Demanding that biregular terms are crossing-symmetric gives nontrivial constraints on the spectrum.

\subsection{Constraints from low-twist operators}
\label{sec:biregularconstraints}

This suggests an interesting way to derive approximate constraints on the data of the 3d Ising CFT\@. From our work in sections~\ref{sec:asymptoticsappliedtoising} and~\ref{sec:eezerosigsigone}, we have approximate expressions for OPE coefficients and dimensions of the twist families $[\s\s]_0$, $[\s\s]_1$, $[\e\e]_0$, and $[\s\e]_0$ in terms of a finite set of initial data, namely $\{\De_\s, \De_\e, f_{\s\s\e}, f_{\e\e\e}, c_T\}$. By plugging these expressions back into the correlator and demanding that biregular limits vanish, we obtain constraints on the initial data.\footnote{In functional programming, defining a data structure in terms of itself is known as ``tying the knot" (\href{https://wiki.haskell.org/Tying_the_Knot}{https://wiki.haskell.org/Tying\_the\_Knot}).
}

Because we have not found exact asymptotic solutions to crossing, we must approximate the limits $L_S$ in some way.  We also do not have analytic approximations for the lowest spin members of the families $[\e\e]_0$ and $[\s\s]_1$, so we will restrict ourselves to limits involving $[\s\s]_0$ and $[\s\e]_0$.

In our expressions (\ref{eq:sigsigfit0}) and (\ref{eq:sigsigfit1}) for the dimensions and OPE coefficients of the $[\s\s]_0$ family, we treated the $\e$ and $T$ operators exactly. The biregular terms are approximately given by the $\log y, \log\bar y$ terms from expanding in small anomalous dimensions of the remaining operators $[\s\s]_{0,\ell\geq 4}$,  
\be
L_S&\approx 
2\p{\a_0^\mathrm{even}\left[\l_{\s\s[\s\s]_0}^2, \de_{[\s\s]_0}\right](2h_\s+4) -
\b_0^\mathrm{even}\left[\l_{\s\s[\s\s]_0}^2 \de_{[\s\s]_0}, \de_{[\s\s]_0}\right](2h_\s+4)},
\ee
where $\a_k[p,\de](\bar h_0)$ and $\b_k[p,\de](\bar h_0)$ are defined in appendix~\ref{app:nonintegerspacing} and $\l_{\s\s[\s\s]_0}$, $\de_{[\s\s]_0}$ are given by (\ref{eq:sigsigfit0}) and (\ref{eq:sigsigfit1}). Naively, these two quantities in parentheses have nothing to do with each other. However, plugging in the numerically-determined values of $\{\De_\s,\De_\e,f_{\s\s\e},f_{\e\e\e},c_T\}$, we find that they match to one part in $10^{-3}$,
\be
\a_0^\mathrm{even}\left[\l_{\s\s[\s\s]_0}^2, \de_{[\s\s]_0}\right](2h_\s+4) &\approx 1.92084,
\nn\\
\b_0^\mathrm{even}\left[\l_{\s\s[\s\s]_0}^2 \de_{[\s\s]_0}, \de_{[\s\s]_0}\right](2h_\s+4) &\approx 1.92280.
\ee

Similarly, by demanding that the leading biregular terms cancel in the sums over $[\s\s]_0$ and $[\s\e]_0$ in $\<\s\s\e\e\>$, we find the conditions
\be
L'_S &\equiv  A_1^{\s\s} - A_1^{\s\e} = 0,\nn\\
L''_S & \equiv A_2^{\s\s} - A_2^{\s\e}  = 0,
\ee
where
\be
A_1^{\s\s} &\equiv \bar{\b}_0^\mathrm{even}\left[\l_{\s\s[\s\s]_0} \l_{\e\e[\s\s]_0}, \de_{[\s\s]_0}\right](2h_\s+4)
    &\approx 6.89276, \nn\\
A_1^{\s\e} &\equiv
-\Gamma (2 h_{\epsilon\s}) \Gamma (1-h_{\e\s})^2 \bar{\alpha}_{-h_{\e\s}}\left[\l_{\s\e[\s\e]_0}^2,\de_{[\s\e]_0}\right](h_\s+h_\e+2)
&\approx 6.92499,
\ee
and
\be
A_2^{\s\s} &\equiv \bar \a^\mathrm{even}_0\left[\l_{\s\s[\s\s]_0} \l_{\e\e[\s\s]_0}, \de_{[\s\s]_0}\right](2h_\s+4)
&\approx 4.36510, \nn\\
A_2^{\s\e} &\equiv -\Gamma (2 h_{\epsilon\s}) \Gamma (1-h_{\e\s})^2 \bar{\alpha}_{-h_{\e\s}}\left[\l_{\s\e[\s\e]_0}^2 \de_{[\s\e]_0},\de_{[\s\e]_0}\right](h_\s+h_\e+2)
&\approx 4.35102.
\ee
(The regularized sums $\bar \a$ and $\bar \b$ are defined in appendix~\ref{app:regularizationofsums}.) On the right, we show the values of these quantities using the approximations in section~\ref{sec:asymptoticsappliedtoising} and the numerically-determined $\{\De_\s,\De_\e,f_{\s\s\e},f_{\e\e\e},c_T\}$. In all cases, the contributions to the limits $L_S,L'_S,L''_S$ cancel to reasonable precision.

The $L_S,L'_S, L''_S$ are interesting because their dominant contributions come from the lowest-twist operators in the theory, namely $[\s\s]_0$, $[\s\e]_0$, and indirectly $\s,\e,T$. This is based on our empirical observation that the contributions of these operators to the large-spin expansion give approximations that work well for {\it all\/} the operators in the twist families $[\s\s]_0$, $[\s\e]_0$. Thus, we can explore them without fully understanding the larger-twist spectrum.

By sampling values near the actual Ising point, we find that $L_S$ is much more sensitive to $c_T$ and $f_{\s\s\e}$ than the other quantities $\De_\s,\De_\e,f_{\e\e\e}$. The tangent plane to $L_S(c_T, f_{\s\s\e})$ at the Ising point is given by
\be
L_S &\approx -0.3999+1.599 c_T-1.061 f_{\s\s\e}.
\ee
Demanding that $L_S$ vanish gives a relationship between $c_T$ and $f_{\s\s\e}$.

\begin{figure}[ht!]
\begin{center}
\includegraphics[width=0.83\textwidth]{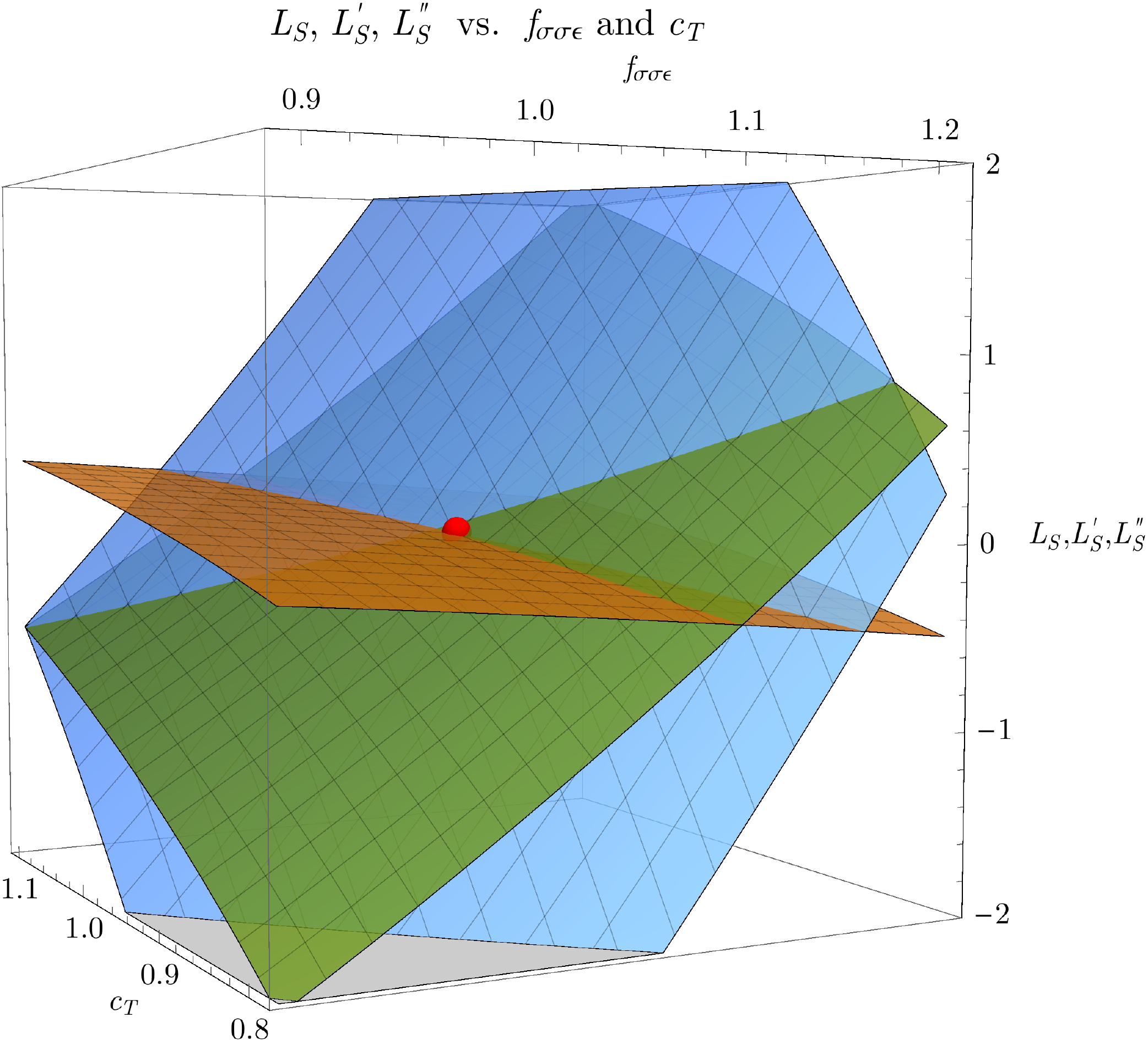}
\end{center}
\caption{The biregular limits $L_S$ (orange), $L'_S$ (blue), and $L''_S$ (green), plotted as a function of $f_{\s\s\e}$ and $c_T$, with $\De_\s,\De_\e,f_{\e\e\e}$ set to the values~(\ref{eq:isingisland}). The red sphere sits at the point expected for the 3d Ising CFT, $(f_{\s\s\e},c_T,L_S^*)=(1.0518539,0.946539,0)$.}
\label{fig:biregularLimits}
\end{figure}

In figure~\ref{fig:biregularLimits}, we plot all three limits $L_S,L_S',L_S''$ as a function of $c_T$ and $f_{\s\s\e}$, with the other quantities $\De_\s,\De_\e,f_{\e\e\e}$ held fixed at the values~(\ref{eq:isingisland}). The three quantities vanish nearly simultaneously at the correct values of $c_T$ and $f_{\s\s\e}$. Thus, requiring that $L_S,L_S',L_S''$ vanish gives a way to fix $c_T$ and $f_{\s\s\e}$ analytically in terms of $\De_\s,\De_\e,f_{\e\e\e}$, to accuracy $\sim 10^{-2}$-$10^{-3}$, using only the lightcone limit!

\section{Discussion}
\label{sec:discussion}

\subsection{Lessons for the numerical bootstrap}
\label{sec:lessonsfornumerics}

Traditional numerical bootstrap methods clearly probe the lightcone limit. This might explain why one must typically study a large number of derivatives around the Euclidean point $z=\bar z = \frac 1 2$ before the bounds saturate: many derivatives are needed to reach the lightcone limit, and the bounds may not saturate until the lightcone limit has been explored.

However, the Euclidean regime is also important. Because of the convergence properties of the conformal block expansion, the Euclidean regime effectively receives contributions from a small number of operators~\cite{Pappadopulo:2012jk,Hogervorst:2013sma,Rychkov:2015lca}, and one can make surprising progress by demanding that these contributions (almost) cancel among themselves~\cite{Gliozzi:2013ysa,Gliozzi:2014jsa,Gliozzi:2015qsa}.

This suggests the following hybrid analytical/numerical approach
\begin{enumerate}
\item First solve the lightcone limit analytically using the techniques in this work. The result will be an asymptotic expansion in $\bar h$, as a function of a small amount of initial data.

This step is likely easiest for theories with a relatively sparse spectrum in twist space. Since the spectrum becomes less sparse at high-twist, we expect mixing effects in the twist-Hamiltonian to become more important in this regime. We may not find an accurate picture of the high-twist spectrum without studying a large system of crossing equations (enough to build all the necessary multi-twist operators). However, figures~\ref{fig:evenspectrum} and~\ref{fig:oddspectrum} suggest that we may not need a perfectly accurate high-twist spectrum to make progress --- we just need {\it some\/} high-twist spectrum with approximately the right density in $\bar h$-space.
\label{step:one}

\item Choose some lower cutoff $\bar h \geq \bar h_0$, and compute the dimensions and OPE coefficients of multi-twist operators above this cutoff using the asymptotic expansion in step~\ref{step:one}. Larger $\bar h_0$ will mean more accurate expressions. However, smaller $\bar h_0$ will leave fewer operators to solve for in step~\ref{step:three}.
\label{step:two}

\item Plug the large-spin operators from step~\ref{step:two} back into the crossing equation and solve for the remaining operators in the Euclidean regime using traditional numerical bootstrap methods, the techniques of~\cite{Gliozzi:2013ysa,Gliozzi:2014jsa,Gliozzi:2015qsa},\footnote{The helpfulness of including higher spin $\Z_2$-odd operators in the ``severe truncation" method of~\cite{Gliozzi:2013ysa} has been observed previously~\cite{SlavaUnpublished2015}.} or some other method. We suspect that many fewer derivatives may be needed. It would also be interesting to see if this hybrid method reduces the need for  high-precision arithmetic.

\label{step:three}
\end{enumerate}

Unfortunately, this approach sacrifices the rigor of traditional numerical bootstrap methods because the large-$\bar h$ expansion is asymptotic. One must take $\bar h_0$ large enough that the results saturate. (Though working with larger $\bar h_0$ likely requires more derivatives.) It is encouraging that $\bar h_0\approx 4$ is good enough for most of the results in this work. Another disadvantage is that some theories might require a large amount of initial input to compute the large-spin spectrum. For example, in this work we used $\{\De_\s,\De_\e,f_{\s\s\e},f_{\e\e\e}, c_T\}$ to parameterize the large-spin spectrum of the 3d Ising CFT\@. We must scan over each of these parameters to explore the space of theories. In a larger system of crossing equations, we would have even more parameters.

On the other hand, the possibility of working with fewer derivatives, at lower precision, with larger systems of correlators, and perhaps without imposing unitarity (using the methods of~\cite{Gliozzi:2013ysa,Gliozzi:2014jsa,Gliozzi:2015qsa})\footnote{Not imposing unitarity could also help in studies of boundary and defect crossing equations, which in some cases haven't been formulated in a way that takes advantage of positivity (even in unitary theories)~\cite{Gliozzi:2015qsa,Liendo:2012hy,Gadde:2016fbj}.} makes this hybrid approach worth exploring.

\subsection{Moving towards analytics}

Although we have made progress in reverse-engineering a solution to crossing symmetry analytically, numerics were crucial throughout. Let us catalog the ways in which we used numerics and discuss whether/how they can be replaced with analytics.

\begin{itemize}
\item Because the large-$\bar h$ expansion is asymptotic, numerics were crucial in determining how many terms to keep in the expansion to get reasonable results. We could also see explicitly which operators were well-described by a truncated asymptotic expansion and which ones were not. For example, $[\s\s]_{0,\ell=4}$ fits well to the analytic predictions~(\ref{eq:sigsigfit0}, \ref{eq:sigsigfit1}), while $[\s\s]_{1,\ell=0}$ does not fit the prediction in figure~\ref{fig:tauEpsEps0AndSigSig1}. We used this information implicitly in several ways. For example, in section~\ref{sec:biregularconstraints}, we used that the analytic predictions~(\ref{eq:sigsigfit0}, \ref{eq:sigsigfit1}) fit well {\it all\/} the operators in the family $[\s\s]_0$. 
 
To understand these issues without numerics, it will be important to prove rigorous bounds on error terms in the large-$\bar h$ expansion. It would also be interesting to understand convergence properties of the twist expansion in a way analogous to the dimension expansion~\cite{Pappadopulo:2012jk,Hogervorst:2013sma,Rychkov:2015lca,Maldacena:2015iua}.

Ideally, perhaps there is a way to identify the correct representative of a given large-$\bar h$ equivalence class.
Consistency with causality and the chaos bound~\cite{Maldacena:2015waa} may be relevant, since it requires delicate cancellations between high-spin operators in a certain kinematic limit, see e.g.~\cite{Maldacena:2016hyu,Perlmutter:2016pkf}.\footnote{We thank Douglas Stanford for this suggestion.} It also implies bounds on dimensions of operators in the lowest twist-family~\cite{Hartman:2015lfa,Hartman:2016dxc,Hofman:2016awc,Hartman:2016lgu}.

\item We used numerics to discover that the contribution of other multi-twist families like $[TT]$ and $[\e T]$ to the four-point functions $\<\s\s\s\s\>$, $\<\s\s\e\e\>$, $\<\e\e\e\e\>$ is small. Consequently, we could ignore these families when diagonalizing the twist Hamiltonian for $[\s\s]_1$ and $[\e\e]_0$ in section~\ref{sec:eezerosigsigone}.

We might guess that $[\e T]_0$ and $[TT]_0$ should be unimportant because the anomalous dimension of $T$ is small, so only the leading term in the exponentiation of $\bar y^{h_T-2h_\s}$ matters. However, a better treatment of this issue would involve studying correlators with $T$ as an external operator in addition to $\s$ and $\e$. In fact, to get a full picture of the small-twist spectrum of the 3d Ising model, we should study correlators including all the operators in $[\s\s]_0$ as external operators. This will likely require new techniques, since the mixing matrices will be infinite-dimensional.

\item We also used numerics to help choose the value $\bar y_0$ at which to evaluate the twist Hamiltonian in section~\ref{sec:eezerosigsigone}. The results should become less sensitive to $\bar y_0$ when we study all the twist families that could contribute to $M(\bar y,\bar h)$. This includes additional double-twist families like $[\e T]_0$ and $[TT]_0$ discussed above, as well as higher-twist towers like $[\s\s]_2$ and $[\e\e]_1$. To completely recover exponentiation, we must also understand $n$-twist families with $n\geq 3$. Although this may be possible with four-point functions, in practice it might require studying higher-point functions, as discussed in section~\ref{sec:effectivesubspace}.

\item Although we parameterized most of the low-twist spectrum in terms of a small amount of initial data $\{\De_\s,\De_\e,f_{\s\s\e},f_{\e\e\e},c_T\}$, it would be difficult to fix this data in practice without already knowing the answer~(\ref{eq:isingisland}). The biregular limits in section~\ref{sec:knot} give constraints. It will be important to understand whether they can be solved systematically.

The Euclidean regime is also important and currently the best techniques for exploring it are numerical. Perhaps the hybrid approach suggested above can help. It may also be interesting to study how recent Mellin-space approaches to the bootstrap~\cite{Gopakumar:2016wkt,Gopakumar:2016cpb,Dey:2016mcs} interact with the results of this work.

\end{itemize}

\subsection{More future directions}

A central question is: why do the truncated large-$\bar h$ expansions for $[\s\s]_0$, $[\s\e]_0$, etc., work so well even at small $\bar h$? Perhaps our all-orders asymptotic solutions are close to an exact answer. 
Our work in section~\ref{sec:twisthamiltonian} suggests the following ansatz for the conformal block expansion:
\be
\oint_{-i\oo}^{i\oo} d\bar h\, \L(\bar h) \frac{\pi}{\tan\p{\pi(\bar h - H(\bar h))}} G_{H(\bar h), \bar h}(z,1-z) \L(\bar h)^T,
\ee
where $H(\bar h)$ is the twist-Hamiltonian and $\L(\bar h)$ is a matrix of OPE coefficients. We have shown how to compute the large-$\bar h$ asymptotics of
\be
\L(\bar h) \frac{\pi}{\tan\p{\pi(\bar h - H(\bar h))}} y^{H(\bar h)} \L(\bar h)^T &\sim \L(\bar h) y^{H(\bar h)} \L(\bar h)^T
\label{eq:thingwecomputeasymptoticsof}
\ee
using crossing symmetry. (Asymptotics as $\bar h\to \oo$ along the real axis are related to asymptotics as $\bar h\to\pm i\oo$ for the class of functions we consider.) However, perhaps one could compute the full function on the left-hand side of (\ref{eq:thingwecomputeasymptoticsof}) using a crossing kernel for $\SO(d,2)$ conformal blocks~\cite{CaronHuotFuture}. 
This would remove the difficulty of working with asymptotic expansions. 

One could then try to solve crossing symmetry via an iterative procedure:
\begin{enumerate}
\item Start with a few known operators like $\s$, $\e$, and $T$.
\item Compute $H(\bar h)$ and $\L(\bar h)$ and diagonalize $H(\bar h)$.
\item Plug the results back in to compute corrections to $H(\bar h)$ and $\L(\bar h)$ from multi-twist operators.
\item Repeat until the spectrum converges.
\end{enumerate}
It will be interesting to explore this program in the future.

While we have focused on multi-twist operators (in particular double-twist operators), it is also interesting to consider other types of operator families like logarithmic Regge trajectories in conformal gauge theories. Such trajectories can still be described using the techniques in this work, by writing $\log \ell=\pdr{}{\e}\ell^\e |_{\e=0}$.\footnote{This observation was also made in~\cite{Alday:2016njk}.} We expect that the techniques of section~\ref{sec:twisthamiltonian} give the right language for studying interesting phenomena like mixing between large-spin single- and double-trace operators in non-planar $\mathcal{N}=4$ SYM\@.

It would also be interesting to apply our techniques to large-$N$ theories. Summing up the effects of graviton exchange in the bulk is important for understanding the emergence of geometry in AdS/CFT\@. While Virasoro symmetry makes this relatively simple in 2d CFTs, it is a difficult task in $d>2$. Our all-orders results for large-spin operators may help make headway on this problem.

Finally, our new data for the 3d Ising CFT may have interesting applications to condensed matter and statistical physics. In~\cite{Rychkov:2016mrc}, we used the low-dimension operators in table~\ref{tab:lowestdim} to plot the Euclidean four-point function and check some inequalities from the lattice Ising model. (In this work, we can see explicitly some of the non-Gaussianity discussed in~\cite{Rychkov:2016mrc}, from the large mixing between $[\e\e]_0$ and $[\s\s]_1$.) It would be interesting if some of the new operator dimensions and OPE coefficients in this work could be checked with Monte-Carlo techniques, the $\e$-expansion, or experiment.

\section*{Acknowledgements}
I am grateful to Chris Beem, Simon Caron-Huot, Abhijit Gadde, Joanna Huey, Juan Maldacena, Eric Perlmutter, David Poland, Slava Rychkov, Douglas Stanford, and Balt van Rees for discussions. Thanks to Douglas Stanford and Slava Rychkov for comments on the draft.
I am supported by DOE grant DE-SC0009988, a William D. Loughlin Membership at the Institute for Advanced Study, and Simons Foundation grant 488657 (Simons Collaboration on the Non-perturbative Bootstrap).
The computations in this paper were run on the Hyperion computing cluster supported by the School of Natural Sciences Computing Staff at the Institute for Advanced Study.

\pagebreak

\appendix

\section{Numerical calculation of the 3d Ising spectrum}
\label{app:numerics}

\subsection{More details on the numerics}
\label{app:extremalfunctional}

As explained in section~\ref{sec:numericalpicture}, our strategy is to compute a partial spectrum $\cS_N(p)$ for several different points $p$ on the boundary of the allowed region $\cA_N$, and then choose the operators that are stable under varying $p$. To get to the boundary of $\cA_N$, we can minimize or maximize any quantity. It is not actually necessary that $(\De_\s,\De_\e,f_{\s\s\e},f_{\e\e\e})$ themselves lie on the boundary of (\ref{eq:isingisland}), as long as they don't lie outside (\ref{eq:isingisland}) and some other quantity is minimal or maximal.

Extremizing an OPE coefficient is technically easier than extremizing an operator dimension because it can be done in a single optimization step.\footnote{See~\cite{El-Showk:2016mxr} for recent progress on speeding up operator dimension extremization.} In~\cite{Komargodski:2016auf}, we chose to maximize $f_{\s\s\e}$. In this work, we minimize $c_T$ (equivalently maximize $f_{\s\s T}$) as in~\cite{El-Showk:2014dwa}. The answers are essentially identical.  We describe how to extract a partial spectrum by extremizing an OPE coefficient in section~\ref{app:extractingspectra}.

To get a sense of the errors in the extremal functional method, we must choose a variety of points on the boundary of $\cA_N$. The space of CFT data is infinite-dimensional, so random sampling is imposible. We must simply try different things, and hope some results will be invariant.

Let $\tan \th_{\s\e} = f_{\e\e\e}/f_{\s\s\e}$. We minimize $c_T$ at the following 20 points in $(\De_\s,\De_\e,\th_{\s\e})$-space:
\be
(\De_\s,\De_\e,\th_{\s\e}) &\in\nn\\
&
\begin{array}{llll}
\{ \!\!\!\!\!\!\!&  (0.51814898,1.4126250,0.9692610),
&  (0.51814937,1.4126306,0.9692662),
\\&(0.51814930,1.4126283,0.9692632),
&  (0.51814807,1.4126156,0.9692547),
\\&(0.51814978,1.4126348,0.9692687),
&  (0.51814893,1.4126251,0.9692611),
\\&(0.51814927,1.4126283,0.9692631),
&  (0.51814881,1.4126251,0.9692623),
\\&(0.51814835,1.4126192,0.9692574),
&  (0.51814880,1.4126253,0.9692632),
\\&(0.51814924,1.4126285,0.9692643),
&  (0.51814951,1.4126320,0.9692664),
\\&(0.51814865,1.4126215,0.9692583),
&  (0.51814945,1.4126323,0.9692678),
\\&(0.51814791,1.4126142,0.9692544),
&  (0.51814954,1.4126313,0.9692654),
\\&(0.51814819,1.4126180,0.9692574),
&  (0.51814856,1.4126210,0.9692591),
\\&(0.51814828,1.4126191,0.9692586),
&  (0.51814931,1.4126302,0.9692658)\  \}
\end{array}
\ee
We do not specify the norm $\sqrt{f_{\e\e\e}^2 + f_{\s\s\e}^2}$ --- it is an output of the spectrum computation, together with a list of other operators.

We assume that $\s$ and $\e$ are the only relevant scalars in the theory.  In addition, we impose gaps in the $\Z_2$-even scalar sector (above $\e$) and spin-2 sector (above $T_{\mu\nu}$) of the following form
\be
\De_\mathrm{\ell=0}^\textrm{min} \in \{3,3.5\}, \qquad
\De_\mathrm{\ell=2}^\textrm{min} \in \{3,4,5\}.
\ee
When we impose a gap in the spin-2 sector, we also impose the stress-tensor Ward identity $f_{\s\s T}/f_{\e\e T} = \De_\s/\De_\e$.

The resulting spectra are mostly independent of the gaps, with one exception: in the extremal functional method, spurious operators often appear at the gaps. Some examples are the higher spin operators at the unitarity bound discussed in section~\ref{sec:sigsig0section}. Similarly, the spectra computed using the above assumptions often (not always) have scalars of dimensions $3$ or $3.5$ or spin-2 operators of dimension $4$ or $5$. (In addition, there are occasionally $\Z_2$-odd scalars with dimension $3$ due to the gap in that sector.) By varying the gaps, these operators become ``unstable" in the sense that their dimensions depend on the boundary point $p$. Hence, in practice we can ignore them compared to the stable operators. Their OPE coefficients are usually quite small, so they don't affect the crossing equations much. We have removed these spurious operators by hand in figures~\ref{fig:evenspectrum} and~\ref{fig:oddspectrum}.

We minimize $c_T$ by setting up a semidefinite program and solving it with the solver \texttt{SDPB}~\cite{Simmons-Duffin:2015qma}. We work with $\Lambda=43$, corresponding to 1265 derivatives of the crossing equations, and our \texttt{SDPB} parameters are given in table~\ref{tab:parameters}.

\begin{table}[!ht]
\centering
{\small
\begin{tabular}{|l|c|c|c|c|}
\hline
$\Lambda$& 43 \\
$\kappa$ & 40 \\
spins & $S_{43}$\\
\texttt{precision} & 960\\
\texttt{findPrimalFeasible} & False\\
\texttt{findDualFeasible} & False\\
\texttt{detectPrimalFeasibleJump} & False\\
\texttt{detectDualFeasibleJump} & False\\
\texttt{dualityGapThreshold} & $10^{-60}$ \\
\texttt{primalErrorThreshold} & $10^{-75}$ \\
\texttt{dualErrorThreshold} & $10^{-75}$ \\
\texttt{initialMatrixScalePrimal} & $10^{60}$\\
\texttt{initialMatrixScaleDual} & $10^{60}$\\
\texttt{feasibleCenteringParameter} & 0.1 \\
\texttt{infeasibleCenteringParameter} & 0.3\\
\texttt{stepLengthReduction} & 0.7\\
\texttt{choleskyStabilizeThreshold} & $10^{-200}$ \\
\texttt{maxComplementarity} & $10^{200}$\\
\hline
\end{tabular}
}
\caption{\texttt{SDPB} parameters for the computations in this work. $S_{43}$ is given by $\{0,\dots,64\}\cup\{67, 68, 71, 72, 75, 76, 79, 80, 83, 84, 87, 88\}$.}
\label{tab:parameters}
\end{table}

\subsection{Extracting spectra and OPE coefficients from SDPs}
\label{app:extractingspectra}

A system of $k$ crossing equations for CFT four-point functions can be put in the form
\be
\label{crossingexample}
0 = \sum_{\Delta,\ell,R} \vec{\lambda}_{\Delta,\ell,R}^T F^i_{\Delta,\ell,R}(z,\bar z) \vec{\lambda}_{\Delta,\ell,R},\qquad (i=1,\dots,k).
\ee
Here, $\Delta,\ell,R$ run over the dimension, spin, and symmetry representations of exchanged operators. Each $F^i_{\Delta,\ell,R}(z,\bar z)$ is a matrix whose entries are combinations of conformal blocks, and the $\vec{\lambda}_{\Delta,\ell,R}$ are vectors of OPE coefficients. For example, for a four-point function of identical scalars $\<\f\f\f\f\>$, $k=1$ and $F^1_{\De,\ell}(z,\bar z)$ is a $1\x1$ matrix with entry $v^{\De_\f} g_{\De,\ell}(u,v)-u^{\De_\f} g_{\De,\ell}(v,u)$.

For simplicity, we first consider minimizing the $0$ function with respect to the constraints (\ref{crossingexample}). Thus, we are simply asking when it is possible to find real $\vec \l_{\De,\ell,R}$ such that (\ref{crossingexample}) is true.  (We comment about the case where we minimize something nontrivial later.)
 In~\cite{Kos:2014bka,Simmons-Duffin:2015qma}, it was shown how to reformulate this question as a Polynomial Matrix Program (a special type of semidefinite program) of the following form:\footnote{We follow the notation of~\cite{Simmons-Duffin:2015qma} for the rest of this appendix.}  Find $y\in \R^N$ such that
\be
\label{pmp}
\sum_{n=0}^N \alpha_n M^n_j(x) \succeq 0\qquad{\rm for\ all\ }x\in[0,\oo),\ j=1,\dots,J.
\ee
where $\a=(1,y)\in \R^{N+1}$.
The notation ``$M\succeq 0$" means ``$M$ is positive semidefinite."  The $M^n_j(x)$ are matrices with polynomial entries
\be
\label{polynomialmatrices}
M_j^n(x) = 
\begin{pmatrix}
P_{j,11}^n(x) & \cdots & P^n_{j,1m_j}(x) \\
\vdots & \ddots & \vdots \\
P_{j,m_j1}^n(x) & \cdots & P^n_{j,m_jm_j}(x)
\end{pmatrix}.
\ee
In our case, the dual objective function $b$ vanishes because we are minimizing the 0 function.

The relation between the matrices $M_j^n(x)$ and the functions $F^i_{\Delta,\ell,R}(z,\bar z)$ entering the crossing equation is as follows.  Firstly, $j$ corresponds to tuples $(\ell,R)$.  Secondly, $n$ corresponds to tuples $(i,a,b)$, where $i$ labels crossing equations and $a,b$ are positive integers labeling derivatives. We have
\be
\label{matrixintermsofblocks}
M_j^n(x) &\approx \chi_j(\Delta(x))^{-1} \left.\partial_z^a \partial_{\bar z}^b F_{\Delta(x), \ell, R}(z,\bar z)\right|_{z=\bar z = \frac 1 2},\cr
\Delta(x) &= \Delta_{j,{\rm min}} + x,
\ee
where $\Delta_{j,{\rm min}}$ is the minimum dimension for $j=(\ell,R)$ (e.g., the unitarity bound for an operator with spin $\ell$ and representation $R$).  $\chi_j(\Delta)$ is a positive function of $\Delta$, written explicitly in~\cite{Simmons-Duffin:2015qma}. The accuracy of the approximation (\ref{matrixintermsofblocks}) can be made arbitrarily good by increasing the polynomial degree of $M_j^n(x)$.

Consequently, the first $N+1$ derivatives of the crossing equations (\ref{crossingexample}) are (approximately) equivalent to
\be
\label{matrixcrossingeq}
0 =\sum_{j}\sum_{\tau} \vec{v}_{j,\tau}^T M_j^n(\tau) \vec{v}_{j,\tau},
\ee
where 
\be
\label{translation}
\vec{\lambda}_{\Delta,\ell,R} &= \chi_j(\Delta_{j,{\rm min}}+\tau) \vec{v}_{j,\tau},\cr
\Delta &= \Delta_{j,{\rm min}}+\tau.
\ee
For each $j$, the $\tau$-sum in (\ref{matrixcrossingeq}) ranges over a discrete set of nonnegative real numbers.
Equation (\ref{matrixcrossingeq}) can be rewritten as
\be
\label{matrixcrossingeqtwo}
0 = \sum_{j}\sum_\tau {\rm Tr}(V_{j,\tau} M_j^n(\tau)),\qquad V_{j,\tau}\succeq 0,
\ee
where $V_{j,\tau}$ is a sum of outer products of $\vec v$'s and is thus positive semidefinite.
The vectors $\vec{v}_{j,\tau}$ can be recovered from $V_{j,\tau}$ via Cholesky decomposition.\footnote{The matrix $V_{j,\tau}$ typically has low rank, which means that numerically it may have very small negative eigenvalues.  Thus, instead of using a Cholesky decomposition, we simply compute its eigenvectors and throw out those with very small eigenvalues.  For computations in this work, it suffices to keep only the first eigenvector. We expect low rank matrices because a higher-rank matrix would mean that the determinant of the functional has a higher-order zero at a fixed $\Delta$, which is non-generic. An exception occurs if an operator is isolated in $\De$-space, which is why one can obtain stronger constraints by imposing that the matrix associated to the $\e$ operator is rank-$1$ as in~\cite{Kos:2016ysd}. We thank Slava Rychkov and Alessandro Vichi for discussions on this point.}

Thus, if we can find $\tau$'s and $V_{j,\tau}$'s such that (\ref{matrixcrossingeqtwo}) holds, then (\ref{translation}) gives a set of dimensions and OPE coefficients that solve $N+1$ derivatives of the crossing equations and are consistent with unitarity.

It is simple to find the appropriate $\tau$'s.  The solver {\tt SDPB} returns a vector $y\in \R^N$ which can be assembled into $\alpha=(1,y)\in \R^{N+1}$ satisfying the constraints (\ref{matrixcrossingeq}).  Taking the inner product of (\ref{matrixcrossingeqtwo}) with $\alpha$, we find 
\be
\label{innerprodwithalpha}
0 = \sum_{j,\tau} {\rm Tr}(V_{j,\tau}(\alpha\cdot M_j(\tau))).
\ee
This implies that each term in (\ref{innerprodwithalpha}) vanishes individually, since each term is nonnegative.  However, this is only possible if $\alpha\cdot M_j(\tau)$ is a degenerate $m_j\times m_j$ matrix.  Thus, $\tau$ must be a nonnegative zero of $\det(\alpha\cdot M_j(x))$.

The function $\det(\alpha \cdot M_j(x))$ is constrained to be positive for $x\in[0,\infty)$.  Thus, its positive zeros for must be double zeros.  In numerical computations, it never actually attains the value zero, but instead dips very close to to the $x$-axis.  Thus, it's more convenient to compute the $\tau$'s as local minima of $\det(\alpha \cdot M_j(x))$, which can be computed as zeros of its derivative (together with a possible zero at $x=0$, which must be checked separately).

The matrices $V_{j,\tau}$ can be obtained by solving the linear algebra problem (\ref{matrixcrossingeqtwo}).  However, they are also already encoded in the primal solution computed by {\tt SDPB}.  
Let $d_j=\max_{n}[\deg(M_j^n(x))]$.  The primal solution is a vector $u\in \R^P$ where $P$ counts the number of tuples $(j,r,s,k)$ with $1\leq r\leq s\leq m_j$, $k=0,\dots,d_j$.\footnote{We use the notation $u$ instead of~\cite{Simmons-Duffin:2015qma}'s $x$ to avoid confusion with the sample points $x_k^{(j)}$.}  We can assemble $u$ into symmetric matrices
\be
\label{umatrices}
U_{j,k} &= \chi_j(x_k^{(j)})\sum_{r,s} u_{(j,r,s,k)} E^{rs},
\ee
where $E^{rs}$ is a symmetrized unit matrix with components
\be
\label{symmetrizedunitmatrix}
(E^{rs})_{ab} \equiv \half(\delta^r_a \delta^s_b + \delta^r_b\delta^s_a).
\ee

The primal solution satisfies the constraint\footnote{This is a combination of the equations $B^T u=0$ and $c\cdot u=0$ in~\cite{Simmons-Duffin:2015qma}.  The equation $c\cdot u=0$ follows from equality of the primal and dual objective functions at a solution of the SDP.}
\be
\label{primalconstraint}
0 = \sum_{j,k}{\rm Tr}(U_{j,k} M^n_j(x_k^{(j)})),
\ee
where the $x_k^{(j)}$, $k=0,\dots,d_j$ are ``sample points" provided as input to {\tt SDPB}.  This is almost the desired result (\ref{matrixcrossingeqtwo}), except that $M_j^n(x)$ is evaluated at the sample points $x_k^{(j)}$ instead of the $\tau$'s.  However, since since $M_j^n(x)$ is a polynomial of degree $d_j$, its value at $\tau$ is a linear combination of its value at the sample points,
\be
\label{lagrangeinterpolation}
M_j^n(\tau) = \sum_{k=0}^{d_j} L(\tau,x_k^{(j)}) M_j^n(x_k^{(j)}),
\ee
where $L(\tau,x_k^{(j)})$ are Lagrange interpolation coefficients.  Thus, we should solve the linear algebra problem
\be
\label{vfromu}
U_{j,k} = \sum_\tau  V_{j,\tau} L(\tau,x_k^{(j)}).
\ee
This is usually an overdetermined system, since the number of positive real zeros of $\det(\alpha\cdot M_j(x))$ is typically smaller than the number of sample points $d_j+1$. We solve it with a least-squares fit, using the singular value decomposition of $L(\tau,x_k^{(j)})$.  The validity of the fit can be verified by checking the crossing equation (\ref{matrixcrossingeq}).

We are sometimes interested in solving a program with a nonzero objective function $b\in \R^N$.  When this objective function is a linear combination of contributions of operators to the crossing equation, we must simply include those operators in the resulting spectrum. Their OPE coefficients should be multiplied by the square root of the absolute value of the objective function at the solution.

An implementation of the algorithm described in this section is available at \\ \href{https://gitlab.com/bootstrapcollaboration/spectrum-extraction}{\tt https://gitlab.com/bootstrapcollaboration/spectrum-extraction}.

\subsection{Several operators in the 3d Ising CFT}
\label{app:severaloperators}

In this section, we list dimensions and OPE coefficients of 112 stable operators obtained from the calculation described in section~\ref{app:extremalfunctional} (and plotted in figures~\ref{fig:evenspectrum} and~\ref{fig:oddspectrum}).
Most of the stable operators fall into the families $[\s\s]_0$ (table~\ref{tab:sigsig0}), $[\e\e]_0$ (table~\ref{tab:epseps0}), $[\s\s]_1$ (table~\ref{tab:sigsig1}), and $[\s\e]_0$ (table~\ref{tab:sigeps0}).\footnote{A unique assignment of an operator to a twist family is not actually well-defined, due to the fact that the large-spin expansion is asymptotic, and the possibility of mixing between twist families. However, for all the operators here, there is only one reasonable choice.} The rest include $\s$ and $\e$, and a few low-dimension stable operators that are not obviously part of any twist family (table~\ref{tab:sporadic}). For convenience, we also list all stable operators with dimension $\De\leq 8$ in table~\ref{tab:lowestdim}.

We estimate errors as standard deviations in our sample set of $60$ partial spectra. It is important to remember that these error estimates are non-rigorous (in contrast to the bounds on $\De_\s,\De_\e,f_{\s\s\e}$, and $f_{\e\e\e}$, in (\ref{eq:isingisland}), which are rigorous).
In fact, the tables show that this method of assigning errors is imperfect. In table~\ref{tab:sigeps0}, for example, the precision of the OPE coefficients $f_{\s\e\cO}$ varies significantly at large $\ell$.  For instance, the error for the OPE coefficient of $[\s\e]_{0,\ell=27}$ is $0.3\%$, while the error for $[\s\e]_{0,\ell=28}$ is $2\%$. It is surprising that these should be so different. Perhaps a wider scan of the boundary of the allowed region $\cA_N$ would equalize the errors somewhat. Regardless, the reader should take the error estimates with a grain of salt.

Because we have chosen different conventions for conformal blocks, our OPE coefficients are normalized differently from those in~\cite{Komargodski:2016auf}.  Specifically, the leading terms in the conformal block expansion are given by
\be
f_{12\cO} f_{43\cO} G_{h=\frac{\De-\ell}{2},\bar h = \frac{\De+\ell}{2}}^{r=\frac{\De_{12}}{2},s=\frac{\De_{34}}{2}}(z,\bar z) &= f_{12\cO} f_{43\cO} z^{\frac{\De-\ell}{2}} \bar z^{\frac{\De+\ell}{2}} + \ldots, & (\textrm{here}),\nn\\
f_{12\cO} f_{34\cO} g_{\De,\ell}^{\De_{12},\De_{34}}(z,\bar z) &= f_{12\cO} f_{34\cO} (-1)^\ell \frac{(\nu)_\ell}{(2\nu)_\ell} z^{\frac{\De-\ell}{2}} \bar z^{\frac{\De+\ell}{2}} + \ldots &(\textrm{in~\cite{Komargodski:2016auf}}),
\ee
where $\nu=\frac{d-2}{2}=\frac 1 2$.
Using $f_{43\cO}=(-1)^{\ell_\cO} f_{34\cO}$, we find
\be
\left[
f_{12\cO} f_{34\cO}
\right]_\textrm{here} &=
\left[
f_{12\cO} f_{34\cO} \frac{(\nu)_{\ell_\cO}}{(2\nu)_{\ell_\cO}}
\right]_\textrm{in~\cite{Komargodski:2016auf}}.
\label{eq:operelation}
\ee

\begin{table}
\begin{center}
{\small
\begin{tabular}{|c|c|l|l|l|l|l|}
\hline
$\cO$ & $\Z_2$ & $\ell$ & $\De$ & $\tau=\De-\ell$ & $f_{\s\s\cO}$ & $f_{\e\e\cO}$ \\
\hline
$\e$ & $+$ & 0 & $1.412625{\bf\boldsymbol(10\boldsymbol)}$ & $1.412625{\bf\boldsymbol(10\boldsymbol)}$ & $1.0518537{\bf\boldsymbol(41\boldsymbol)}$ & $1.532435{\bf\boldsymbol(19\boldsymbol)}$ \\
$\e'$ & $+$ & 0 & $3.82968(23)$ & $3.82968(23)$ & $0.053012(55)$ & $1.5360(16)$ \\
& $+$ & 0 & $6.8956(43)$ & $6.8956(43)$ & $0.0007338(31)$ & $0.1279(17)$ \\
& $+$ & 0 & $7.2535(51)$ & $7.2535(51)$ & $0.000162(12)$ & $0.1874(31)$ \\
$T_{\mu\nu}$ & $+$ & 2 & $3$ & $1$ & $0.32613776(45)$ & $0.8891471(40)$ \\
$T'_{\mu\nu}$ & $+$ & 2 & $5.50915(44)$ & $3.50915(44)$ & $0.0105745(42)$ & $0.69023(49)$ \\
& $+$ & 2 & $7.0758(58)$ & $5.0758(58)$ & $0.0004773(62)$ & $0.21882(73)$ \\
$C_{\mu\nu\rho\s}$ & $+$ & 4 & $5.022665(28)$ & $1.022665(28)$ & $0.069076(43)$ & $0.24792(20)$ \\
& $+$ & 4 & $6.42065(64)$ & $2.42065(64)$ & $0.0019552(12)$ & $-0.110247(54)$ \\
& $+$ & 4 & $7.38568(28)$ & $3.38568(28)$ & $0.00237745(44)$ & $0.22975(10)$ \\
& $+$ & 6 & $7.028488(16)$ & $1.028488(16)$ & $0.0157416(41)$ & $0.066136(36)$ \\
\hline
\hline
$\cO$ & $\Z_2$ & $\ell$ & $\Delta$ & $\tau=\De-\ell$ & $f_{\s\e\cO}$ & -  \\
\hline
$\s$ & $-$ & 0 & $0.5181489{\bf\boldsymbol(10\boldsymbol)}$ & $0.5181489{\bf\boldsymbol(10\boldsymbol)}$ & $1.0518537{\bf\boldsymbol(41\boldsymbol)}$ & \\
$\s'$ & $-$ & 0 & $5.2906(11)$ & $5.2906(11)$ & $0.057235(20)$  &\\
& $-$ & 2 & $4.180305(18)$ & $2.180305(18)$ & $0.38915941(81)$ & \\
& $-$ & 2 & $6.9873(53)$ & $4.9873(53)$ & $0.017413(73)$ & \\
& $-$ & 3 & $4.63804(88)$ & $1.63804(88)$ & $0.1385(34)$ & \\
& $-$ & 4 & $6.112674(19)$ & $2.112674(19)$ & $0.1077052(16)$ & \\
& $-$ & 5 & $6.709778(27)$ & $1.709778(27)$ & $0.04191549(88)$ & \\
\hline
\end{tabular}
}
\end{center}
\caption{
Stable operators with dimensions $\De\leq 8$. The leftmost column shows the names of the operators from~\cite{El-Showk:2014dwa}. Errors in bold are rigorous. All other errors are non-rigorous. Because we have chosen different conventions for conformal blocks, our normalization of OPE coefficients differs from those in~\cite{El-Showk:2014dwa,Komargodski:2016auf} by (\ref{eq:operelation}).
}
\label{tab:lowestdim}
\end{table}

\begin{table}
\begin{center}
{\small
\begin{tabular}{|c|l|l|l|l|l|}
\hline
$\Z_2$ & $\ell$  & $\De$ & $\tau=\De-\ell$ & $f_{\s\s\cO}$ & $f_{\e\e\cO}$ \\
\hline
$+$ & 2 & $3$ & $1$ & $0.32613776(45)$ & $0.8891471(40)$ \\
$+$ & 4 & $5.022665(28)$ & $1.022665(28)$ & $0.069076(43)$ & $0.24792(20)$ \\
$+$ & 6 & $7.028488(16)$ & $1.028488(16)$ & $0.0157416(41)$ & $0.066136(36)$ \\
$+$ & 8 & $9.031023(30)$ & $1.031023(30)$ & $0.0036850(54)$ & $0.017318(30)$ \\
$+$ & 10 & $11.0324141(99)$ & $1.0324141(99)$ & $0.00087562(13)$ & $0.0044811(15)$ \\
$+$ & 12 & $13.033286(12)$ & $1.033286(12)$ & $0.000209920(37)$ & $0.00115174(59)$ \\
$+$ & 14 & $15.033838(15)$ & $1.033838(15)$ & $0.000050650(99)$ & $0.00029484(56)$ \\
$+$ & 16 & $17.034258(34)$ & $1.034258(34)$ & $0.000012280(18)$ & $0.00007517(18)$ \\
$+$ & 18 & $19.034564(12)$ & $1.034564(12)$ & $2.98935(46)\.10^{-6}$ & $0.0000191408(89)$ \\
$+$ & 20 & $21.0347884(84)$ & $1.0347884(84)$ & $7.2954(10)\.10^{-7}$ & $4.8632(23)\.10^{-6}$ \\
$+$ & 22 & $23.034983(11)$ & $1.034983(11)$ & $1.78412(27)\.10^{-7}$ & $1.23201(72)\.10^{-6}$ \\
$+$ & 24 & $25.035122(11)$ & $1.035122(11)$ & $4.37261(60)\.10^{-8}$ & $3.1223(15)\.10^{-7}$ \\
$+$ & 26 & $27.035249(11)$ & $1.035249(11)$ & $1.07287(18)\.10^{-8}$ & $7.8948(42)\.10^{-8}$ \\
$+$ & 28 & $29.035344(19)$ & $1.035344(19)$ & $2.6409(19)\.10^{-9}$ & $1.9992(23)\.10^{-8}$ \\
$+$ & 30 & $31.035452(16)$ & $1.035452(16)$ & $6.447(24)\.10^{-10}$ & $5.003(20)\.10^{-9}$ \\
$+$ & 32 & $33.035473(28)$ & $1.035473(28)$ & $1.640(25)\.10^{-10}$ & $1.308(21)\.10^{-9}$ \\
$+$ & 34 & $35.035632(67)$ & $1.035632(67)$ & $3.58(22)\.10^{-11}$ & $2.90(19)\.10^{-10}$ \\
$+$ & 36 & $37.035610(41)$ & $1.035610(41)$ & $1.15(13)\.10^{-11}$ & $9.6(11)\.10^{-11}$ \\
$+$ & 38 & $39.035638(58)$ & $1.035638(58)$ & $2.26(71)\.10^{-12}$ & $1.93(60)\.10^{-11}$ \\
$+$ & 40 & $41.03564(13)$ & $1.03564(13)$ & $7.3(15)\.10^{-13}$ & $6.3(13)\.10^{-12}$ \\
\hline
\end{tabular}
}
\end{center}
\caption{
Operators in the family $[\s\s]_0$. The first line is the stress tensor $T_{\mu\nu}$.
}
\label{tab:sigsig0}
\end{table}

\begin{table}
\begin{center}
{\small
\begin{tabular}{|c|l|l|l|l|l|}
\hline
$\Z_2$ & $\ell$ & $\De$ & $\tau=\De-\ell$ & $f_{\s\s\cO}$ & $f_{\e\e\cO}$  \\
\hline
$+$ & 4 & $6.42065(64)$ & $2.42065(64)$ & $0.0019552(12)$ & $-0.110247(54)$ \\
$+$ & 6 & $8.4957(75)$ & $2.4957(75)$ & $0.000472(49)$ & $-0.0431(48)$ \\
$+$ & 8 & $10.562(12)$ & $2.562(12)$ & $0.0001084(69)$ & $-0.0139(11)$ \\
$+$ & 10 & $12.5659(57)$ & $2.5659(57)$ & $0.00002598(39)$ & $-0.004437(62)$ \\
$+$ & 12 & $14.633(21)$ & $2.633(21)$ & $6.10(33)\.10^{-6}$ & $-0.001224(60)$ \\
$+$ & 14 & $16.6174(75)$ & $2.6174(75)$ & $1.417(34)\.10^{-6}$ & $-0.0003791(54)$ \\
$+$ & 16 & $18.678(24)$ & $2.678(24)$ & $3.547(59)\.10^{-7}$ & $-0.0000972(64)$ \\
$+$ & 18 & $20.654(22)$ & $2.654(22)$ & $7.99(90)\.10^{-8}$ & $-0.0000284(26)$ \\
$+$ & 20 & $22.651(27)$ & $2.651(27)$ & $1.83(13)\.10^{-8}$ & $-7.58(47)\.10^{-6}$ \\
$+$ & 22 & $24.671(18)$ & $2.671(18)$ & $4.55(72)\.10^{-9}$ & $-2.09(19)\.10^{-6}$ \\
$+$ & 24 & $26.681(20)$ & $2.681(20)$ & $1.168(29)\.10^{-9}$ & $-5.67(17)\.10^{-7}$ \\
$+$ & 26 & $28.706(24)$ & $2.706(24)$ & $2.81(17)\.10^{-10}$ & $-1.49(11)\.10^{-7}$ \\
$+$ & 28 & $30.6923(81)$ & $2.6923(81)$ & $6.69(36)\.10^{-11}$ & $-4.162(88)\.10^{-8}$ \\
$+$ & 30 & $32.702(11)$ & $2.702(11)$ & $1.62(16)\.10^{-11}$ & $-1.066(59)\.10^{-8}$ \\
$+$ & 32 & $34.718(17)$ & $2.718(17)$ & $4.15(42)\.10^{-12}$ & $-2.83(18)\.10^{-9}$ \\
$+$ & 34 & $36.717(16)$ & $2.717(16)$ & $9.44(77)\.10^{-13}$ & $-7.33(59)\.10^{-10}$ \\
$+$ & 36 & $38.697(17)$ & $2.697(17)$ & $2.40(39)\.10^{-13}$ & $-2.12(34)\.10^{-10}$ \\
$+$ & 38 & $40.701(19)$ & $2.701(19)$ & $5.4(17)\.10^{-14}$ & $-5.2(15)\.10^{-11}$ \\
$+$ & 40 & $42.726(18)$ & $2.726(18)$ & $1.59(49)\.10^{-14}$ & $-1.55(48)\.10^{-11}$ \\
$+$ & 42 & $44.729(15)$ & $2.729(15)$ & $4.2(12)\.10^{-15}$ & $-4.4(11)\.10^{-12}$ \\
\hline
\end{tabular}
}
\end{center}
\caption{
Operators in the family $[\e\e]_0$.
}
\label{tab:epseps0}
\end{table}

\begin{table}
\begin{center}
{\small
\begin{tabular}{|c|l|l|l|l|l|}
\hline
$\Z_2$ & $\ell$ & $\De$ & $\tau=\De-\ell$ & $f_{\s\s\cO}$ & $f_{\e\e\cO}$ \\
\hline
$+$ & 0 & $3.82968(23)$ & $3.82968(23)$ & $0.053012(55)$ & $1.5360(16)$ \\
$+$ & 2 & $5.50915(44)$ & $3.50915(44)$ & $0.0105745(42)$ & $0.69023(49)$ \\
$+$ & 4 & $7.38568(28)$ & $3.38568(28)$ & $0.00237745(44)$ & $0.22975(10)$ \\
$+$ & 6 & $9.32032(34)$ & $3.32032(34)$ & $0.00055657(42)$ & $0.06949(11)$ \\
$+$ & 8 & $11.2751(24)$ & $3.2751(24)$ & $0.00013251(91)$ & $0.01980(15)$ \\
$+$ & 10 & $13.2410(10)$ & $3.2410(10)$ & $0.00003234(15)$ & $0.005459(39)$ \\
$+$ & 12 & $15.2301(64)$ & $3.2301(64)$ & $7.64(14)\.10^{-6}$ & $0.001538(22)$ \\
$+$ & 14 & $17.1944(55)$ & $3.1944(55)$ & $1.930(46)\.10^{-6}$ & $0.000386(14)$ \\
$+$ & 16 & $19.1950(62)$ & $3.1950(62)$ & $4.568(72)\.10^{-7}$ & $0.0001107(16)$ \\
$+$ & 18 & $21.1720(23)$ & $3.1720(23)$ & $1.153(27)\.10^{-7}$ & $0.00002798(33)$ \\
$+$ & 20 & $23.167(10)$ & $3.167(10)$ & $2.74(11)\.10^{-8}$ & $7.45(52)\.10^{-6}$ \\
$+$ & 22 & $25.163(10)$ & $3.163(10)$ & $6.88(22)\.10^{-9}$ & $1.937(51)\.10^{-6}$ \\
$+$ & 24 & $27.1491(82)$ & $3.1491(82)$ & $1.716(45)\.10^{-9}$ & $4.92(42)\.10^{-7}$ \\
$+$ & 26 & $29.1460(53)$ & $3.1460(53)$ & $4.183(78)\.10^{-10}$ & $1.347(62)\.10^{-7}$ \\
$+$ & 28 & $31.1306(52)$ & $3.1306(52)$ & $1.056(50)\.10^{-10}$ & $3.35(10)\.10^{-8}$ \\
$+$ & 30 & $33.126(12)$ & $3.126(12)$ & $2.54(10)\.10^{-11}$ & $8.35(42)\.10^{-9}$ \\
$+$ & 32 & $35.1299(77)$ & $3.1299(77)$ & $6.71(17)\.10^{-12}$ & $2.36(13)\.10^{-9}$ \\
$+$ & 34 & $37.1174(64)$ & $3.1174(64)$ & $1.39(14)\.10^{-12}$ & $4.87(48)\.10^{-10}$ \\
$+$ & 36 & $39.1079(78)$ & $3.1079(78)$ & $4.84(56)\.10^{-13}$ & $1.70(17)\.10^{-10}$ \\
$+$ & 38 & $41.101(29)$ & $3.101(29)$ & $8.4(28)\.10^{-14}$ & $2.5(11)\.10^{-11}$ \\
$+$ & 40 & $43.102(18)$ & $3.102(18)$ & $2.63(64)\.10^{-14}$ & $9.0(26)\.10^{-12}$ \\
$+$ & 42 & $45.116(27)$ & $3.116(27)$ & $7.9(22)\.10^{-15}$ & $3.42(95)\.10^{-12}$ \\
\hline
\end{tabular}
}
\end{center}
\caption{
Operators in the family $[\s\s]_1$.
}
\label{tab:sigsig1}
\end{table}

\begin{table}
\begin{center}
{\small
\begin{tabular}{|c|l|l|l|l|}
\hline
$\Z_2$ & $\ell$ & $\De$ & $\tau=\De-\ell$ & $f_{\s\e\cO}$  \\
\hline
$-$ & 2 & $4.180305(18)$ & $2.180305(18)$ & $0.38915941(81)$ \\
$-$ & 3 & $4.63804(88)$ & $1.63804(88)$ & $0.1385(34)$ \\
$-$ & 4 & $6.112674(19)$ & $2.112674(19)$ & $0.1077052(16)$ \\
$-$ & 5 & $6.709778(27)$ & $1.709778(27)$ & $0.04191549(88)$ \\
$-$ & 6 & $8.08097(25)$ & $2.08097(25)$ & $0.0286902(80)$ \\
$-$ & 7 & $8.747293(56)$ & $1.747293(56)$ & $0.01161255(13)$ \\
$-$ & 8 & $10.0623(29)$ & $2.0623(29)$ & $0.00745(21)$ \\
$-$ & 9 & $10.77075(36)$ & $1.77075(36)$ & $0.003115(12)$ \\
$-$ & 10 & $12.0492(18)$ & $2.0492(18)$ & $0.001940(19)$ \\
$-$ & 11 & $12.787668(92)$ & $1.787668(92)$ & $0.000823634(82)$ \\
$-$ & 12 & $14.0383(33)$ & $2.0383(33)$ & $0.0004983(88)$ \\
$-$ & 13 & $14.80006(51)$ & $1.80006(51)$ & $0.0002150(10)$ \\
$-$ & 14 & $16.0305(12)$ & $2.0305(12)$ & $0.0001291(12)$ \\
$-$ & 15 & $16.81009(16)$ & $1.81009(16)$ & $0.000055870(15)$ \\
$-$ & 16 & $18.025(11)$ & $2.025(11)$ & $0.0000313(30)$ \\
$-$ & 17 & $18.81794(18)$ & $1.81794(18)$ & $0.0000144219(91)$ \\
$-$ & 18 & $20.01947(94)$ & $2.01947(94)$ & $8.442(28)\.10^{-6}$ \\
$-$ & 19 & $20.8246(11)$ & $1.8246(11)$ & $3.690(54)\.10^{-6}$ \\
$-$ & 20 & $22.0152(36)$ & $2.0152(36)$ & $2.131(28)\.10^{-6}$ \\
$-$ & 21 & $22.83035(11)$ & $1.83035(11)$ & $9.5120(13)\.10^{-7}$ \\
$-$ & 22 & $24.01143(53)$ & $2.01143(53)$ & $5.4746(61)\.10^{-7}$ \\
$-$ & 23 & $24.83518(65)$ & $1.83518(65)$ & $2.428(11)\.10^{-7}$ \\
$-$ & 24 & $26.00809(94)$ & $2.00809(94)$ & $1.3908(17)\.10^{-7}$ \\
$-$ & 25 & $26.8394(13)$ & $1.8394(13)$ & $6.16(18)\.10^{-8}$ \\
$-$ & 26 & $28.0045(17)$ & $2.0045(17)$ & $3.523(20)\.10^{-8}$ \\
$-$ & 27 & $28.84330(31)$ & $1.84330(31)$ & $1.5809(50)\.10^{-8}$ \\
$-$ & 28 & $30.0042(38)$ & $2.0042(38)$ & $8.86(18)\.10^{-9}$ \\
$-$ & 29 & $30.84667(23)$ & $1.84667(23)$ & $4.0311(33)\.10^{-9}$ \\
$-$ & 30 & $31.99996(74)$ & $1.99996(74)$ & $2.2555(81)\.10^{-9}$ \\
$-$ & 31 & $32.84955(61)$ & $1.84955(61)$ & $1.0144(28)\.10^{-9}$ \\
$-$ & 32 & $33.9976(28)$ & $1.9976(28)$ & $5.82(11)\.10^{-10}$ \\
$-$ & 33 & $34.85245(50)$ & $1.85245(50)$ & $2.669(34)\.10^{-10}$ \\
$-$ & 34 & $35.99600(99)$ & $1.99600(99)$ & $1.374(72)\.10^{-10}$ \\
$-$ & 35 & $36.85548(90)$ & $1.85548(90)$ & $5.94(34)\.10^{-11}$ \\
$-$ & 36 & $37.9939(12)$ & $1.9939(12)$ & $4.02(45)\.10^{-11}$ \\
$-$ & 37 & $38.85691(49)$ & $1.85691(49)$ & $1.99(19)\.10^{-11}$ \\
$-$ & 38 & $39.9895(17)$ & $1.9895(17)$ & $9.5(18)\.10^{-12}$ \\
$-$ & 39 & $40.8583(11)$ & $1.8583(11)$ & $3.7(13)\.10^{-12}$ \\
$-$ & 40 & $41.9886(15)$ & $1.9886(15)$ & $2.50(96)\.10^{-12}$ \\
$-$ & 41 & $42.8607(14)$ & $1.8607(14)$ & $1.32(24)\.10^{-12}$ \\
$-$ & 42 & $43.9915(21)$ & $1.9915(21)$ & $7.9(19)\.10^{-13}$ \\
\hline
\end{tabular}
}
\end{center}
\caption{
Operators in the family $[\s\e]_0$.
}
\label{tab:sigeps0}
\end{table}

\begin{table}
\begin{center}
{\small
\begin{tabular}{|c|l|l|l|l|l|}
\hline
$\Z_2$  & $\ell$ & $\De$ & $\tau=\De-\ell$ & $f_{\s\s\cO}$ & $f_{\e\e\cO}$ \\
\hline
$+$ & 0 & $1.412625{\bf\boldsymbol(10\boldsymbol)}$ & $1.412625{\bf\boldsymbol(10\boldsymbol)}$ & $1.0518537{\bf\boldsymbol(41\boldsymbol)}$ & $1.532435{\bf\boldsymbol(19\boldsymbol)}$ \\
$+$ & 2 & $7.0758(58)$ & $5.0758(58)$ & $0.0004773(62)$ & $0.21882(73)$ \\
$+$ & 4 & $8.9410(99)$ & $4.9410(99)$ & $0.0001173(21)$ & $0.08635(18)$ \\
$+$ & 6 & $10.975(13)$ & $4.975(13)$ & $0.00002437(59)$ & $0.02775(17)$ \\
$+$ & 0 & $6.8956(43)$ & $6.8956(43)$ & $0.0007338(31)$ & $0.1279(17)$ \\
$+$ & 0 & $7.2535(51)$ & $7.2535(51)$ & $0.000162(12)$ & $0.1874(31)$ \\
\hline
\hline
$\Z_2$  &$\ell$ & $\Delta$ & $\tau=\De-\ell$ & $f_{\s\e\cO}$ & -  \\
\hline
$-$ & 0 & $0.5181489{\bf\boldsymbol(10\boldsymbol)}$ & $0.5181489{\bf\boldsymbol(10\boldsymbol)}$ & $1.0518537{\bf\boldsymbol(41\boldsymbol)}$ & \\
$-$ & 0 & $5.2906(11)$ & $5.2906(11)$ & $0.057235(20)$ &\\
$-$ & 2 & $6.9873(53)$ & $4.9873(53)$ & $0.017413(73)$ &\\
\hline
\end{tabular}
}
\end{center}
\caption{
Stable operators not in one of the families $[\s\s]_0$, $[\e\e]_0$, $[\s\s]_1$, $[\s\e]_0$. Errors in bold are rigorous. All other errors are non-rigorous.
}
\label{tab:sporadic}
\end{table}

\section{Sums of $\SL(2,\R)$ blocks with general coefficients and general spacing}

\label{app:nonintegerspacing}

Consider the sum
\be
\sum_{\substack{h = h_0 + \ell + \de(h) \\ \ell=0,1,\dots}} \pdr{h}{\ell} p(h) k_{2 h}(1-z) 
&=
\sum_a c_a y^a + \sum_{k=0}^\oo y^k \p{\a_k[p,\de](h_0) \log y + \b_k[p,\de](h_0)},
\label{eq:mynonintegerspaced}
\ee
with general coefficients $p(h)$, and where the weights are the solutions of
\be
h &= h_0+\ell + \de(h),\qquad \ell=0,1,\dots.
\ee

To compute the Casimir-singular terms, we must match asymptotic expansions
\be
p(h) &\sim \sum_{a\in A} c_a S_a(h).
\ee
To compute Casimir-regular terms, we expand $k_{2h}(1-z)$ in small $y$ and naively switch the order of summations,
\be
\sum_{k=0}^\oo \pdr{}{k}\p{y^k \sum_{\substack{h}} \pdr{h}{\ell} p(h) \p{-\frac{\G(2h)}{\G(h)^2}T_{-k-1}(h)}}.
\ee
We must now regularize the sum over $h$.
Using $\pdr{h}{\ell} = 1+\pdr{\de}{h_0}$, one can show
\be
\pdr{h}{\ell} p(h) \p{-\frac{\G(2h)}{\G(h)}^2 T_{-k-1}(h)}
&=
\sum_{m=0}^\oo \ptl_{h_0}^m\p{p(h_0+\ell)\frac{\de(h_0+\ell)^m}{m!}\p{-\frac{\G(2(h_0+\ell))}{\G(h_0+\ell)^2}T_{-k-1}(h_0+\ell)}}.
\ee
Now form the asymptotic expansions
\be
p(h) \frac{\de(h)^m}{m!}
&\sim
\sum_{a\in A_{m}} c_{a}^{(m)} S_{a}(h).
\ee
with coefficients $c_{a}^{(m)}$ and sets $A_{m}$. (When $m=0$, these reduce to $c_a$ and $A$ above.)
Note that $-\frac{\G(2h)}{\G(h)^2}S_a(h) = (1-2h)T_a(h) \sim h^{-2a-1}$.  The derivative $\ptl_{h_0}$ decreases degree in $\ell$ by $1$.  Thus, the combination
\be
\label{eq:nicecombo}
&f_k(\ell,h_0) \equiv \nn\\
&\pdr{h}{\ell} p(h) \p{-\frac{\G(2h)}{\G(h)}^2 T_{-k-1}(h)} - \sum_{m=0}^M \sum_{\substack{a\in A_{m} \\ a \leq k-m/2}} 
c_{a}^{(m)}
\ptl_{h_0}^m
\p{
 (1-2(h_0+\ell))T_{a}(h_0+\ell)T_{-k-1}(h_0+\ell)
}
\ee
falls off faster than $\ell^{-1}$, so its sum over $\ell$ converges.  Here, we must choose $M$ so that $\min(A_{m})\geq k-m/2$ for all $m>M$. If $\de$ approaches zero as $h\to \oo$, it is sufficient to take $M\geq 2k-2\min(A)$.

Summing (\ref{eq:nicecombo}) over $\ell$ and adding back the regularized sum of the subtractions, we find
\be
&\a_k[p,\de](h_0)= 
\sum_{m=0}^M 
\sum_{\substack{a\in A_{m} \\ a \leq k-m/2}} c_a^{(m)} \ptl_{h_0}^m\cA_{a,-k-1}(h_0) +\sum_{\ell=0}^\oo f_k(\ell,h_0).
\ee
Note that $f_k(\ell, h_0)$ as we've defined it is analytic in $k$, so we can form the derivative $\b_k[p,\de](h_0)$.  The above result generalizes easily to the case of alternating or even sums, where we must simply replace $\cA\to \cA^-$ or $\cA\to\cA^\mathrm{even}$ and modify the sum over $\ell$ appropriately.

\subsection{Special cancellations between singular and regular parts}
\label{app:regularizationofsums}

We sometimes encounter sums where both the Casimir-singular and Casimir-regular part naively diverge, but the divergences cancel to leave a finite quantity. This occurs in sums over un-mixed blocks with coefficients $\lim_{\e\to 0}\G(-\e)^2 S_\e(h)$ and in sums over mixed blocks with coefficients $\G(-\e)S^{r,s}_{\e-r}(h)$.  In such sums, the naive Casimir-singular parts are proportional to
\be
\lim_{\e\to 0}\G(-\e)^2 y^\e &= \frac 1 2 \log ^2y+ \lim_{\e\to 0} \p{ \frac{1}{\e^2}+\frac{\log y+2 \gamma }{\e}+ 2 \gamma  \log y+2 \gamma ^2+\frac{\pi ^2}{6} },
\label{eq:badcassingone}
\ee
or in the case of mixed blocks
\be
\lim_{\e\to 0} \G(-\e) y^{\e-r} &= -y^{-r} \log y-\lim_{\e\to 0} y^{-r}\p{\frac 1 \e + \gamma}.
\label{eq:badcassingtwo}
\ee
We define regularized quantities $\bar \a_k$ by replacing $S_0(h)\to S_{\e}(h)$ and $S_{-r}^{r,s}(h) \to S_{\e-r}^{r,s}(h)$, adding the quantities in parentheses in (\ref{eq:badcassingone}) or (\ref{eq:badcassingtwo}) to $\a_k$, and then taking the limit $\e\to 0$. Examples of this procedure are given in~(\ref{eq:extractfunnylimit}) and (\ref{eq:otherexamplewithpolecancelling}). In general, we must apply it whenever the asymptotic large-$h$ expansion of $p(h)$ contains terms of the form $\lim_{\e\to 0}\G(-\e)^2 S_\e(h)$ or $\G(-\e)S^{r,s}_{\e-r}(h)$.

\section{Box diagrams}
\label{app:boxdiagrams}

\begin{figure}
\begin{center}
\begin{tikzpicture}[xscale=0.6,yscale=0.6]
\draw[thick] (0,0) -- (0,1.5);
\draw[thick] (0,1.5) -- (0,5.5);
\draw[thick] (0,5.5) -- (0,7);
\draw[thick] (5,0) -- (5,1.5);
\draw[thick] (5,1.5) -- (5,5.5);
\draw[thick] (5,5.5) -- (5,7);
\draw[thick] (0,1.5) -- (5,1.5);
\draw[thick] (0,5.5) -- (5,5.5);
\node[below] at (0,0) {$1$};
\node[below] at (5,0) {$2$};
\node[above] at (0,7) {$4$};
\node[above] at (5,7) {$3$};
\node[below] at (2.5,1.5) {$a$};
\node[right] at (5,3.5) {$b$};
\node[above] at (2.5,5.5) {$c$};
\node[left] at (0,3.5) {$d$};
\end{tikzpicture}
\end{center}
\caption{
The Casimir-singular part of the above diagram is the same, whether we interpret it as $[db]_n$ exchange in the $12\to 34$ channel or $[ac]_m$ exchange in the $23\to 14$ channel.
}
\label{fig:boxdiagram}
\end{figure}
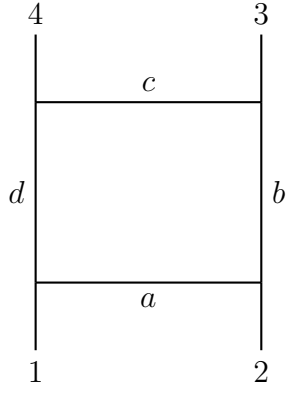

We claim that the Casimir-singular part of the box diagram in figure~\ref{fig:boxdiagram} is the same whether we read the diagram from left-to-right or bottom-to-top.
Consequently, the Casimir-singular part of the sum of box diagrams over all possible internal legs is crossing-symmetric.\footnote{This is equivalent to the claim that exponentiation of the twist Hamiltonian in section~\ref{sec:twisthamiltonian} is consistent with crossing-symmetry at asymptotically large spin, to second order.}

We can regard any CFT as a 2d theory with $\SL(2,\R)\x\SL(2,\R)$ symmetry. If our claim holds in 2d, it holds in general dimensions. A benefit of working in 2d is that tensor structures are extremely simple, so we can prove the claim for external operators of any spin (not just scalars).

Some conventions in 2d theories are different from those in the main text.  We have the two and three-point functions
\be
\<\f(z_1,\bar z_1)\f(z_2,\bar z_2)\> &= \frac{1}{z_{12}^{2h}\bar z_{12}^{2\bar h}}\nn\\
\<\f_1(z_1,\bar z_1)\f_2(z_2,\bar z_2)\f_3(z_3,\bar z_3)\> &= f_{123}\frac{1}{z_{12}^{h_1+h_2-h_3} z_{23}^{h_2+h_3-h_1} z_{31}^{h_3+h_1-h_2}} \x (z\leftrightarrow \bar z).
\ee
In unitary theories with these conventions, operators satisfy the reality property
\be
\f(z,\bar z)^\dag &= (-1)^\ell \f(z,\bar z).
\ee
That is, even-spin operators are real and odd-spin operators are imaginary.  The three-point coefficients have the same reality properties as the product of operators.  They satisfy the symmetry property $f_{abc}=f_{bac}(-1)^{\ell_a+\ell_b+\ell_c}$ and similarly for other permutations.

Consider a four-point function $\<\f_1\cdots \f_4\>$ where the operators $\f_i$ have weights $(h_i,\bar h_i)$ (not necessarily equal).  We have the conformal block expansion
\be
\<\f_1(z_1,\bar z_2)\cdots \f_4(z_4,\bar z_4)\>
&=
\frac{1}{z_{21}^{h_1+h_2}z_{43}^{h_3+h_4}}\frac{z_{41}^{h_{34}}z_{32}^{h_{12}}}{z_{31}^{h_{12}+h_{34}}} \x (z\leftrightarrow \bar z)\nn\\
& \x \sum_a f_{12a} f_{34a} k_{2h_a}^{h_{12},h_{34}}(z) k_{2\bar h_a}^{\bar h_{12},\bar h_{34}}(\bar z).
\ee
The crossing equation reads
\be
\label{eq:twodcrossing}
\bar y^{-\bar h_1-\bar h_3} \sum_a f_{12a}f_{34a}k^{h_{12},h_{34}}_{2h_a}(1-z) k^{\bar h_{12},\bar h_{34}}_{2\bar h_a}(\bar z)
&= y^{-h_1-h_3} \sum_{a} f_{23a}f_{41a} k_{2h_a}^{h_{32},h_{14}}(z)k_{2\bar h_a}^{\bar h_{32},\bar h_{14}}(1-\bar z).
\ee
On the right-hand side, we have the expansions
\be
\bar y^{\bar h_1+\bar h_3} k_{2\bar h_\cO}^{\bar h_{32},\bar h_{14}}(1-\bar z) &= \sum_{k=0}^\oo K_k^{\bar h_{32},\bar h_{14}}(\bar h_\cO) \bar y^{k+\bar h_1 + \bar h_2} + (3\leftrightarrow 1, 2\leftrightarrow 4),\\
y^{-h_1-h_3} k_{2 h_\cO}^{h_{32},h_{14}}(z) &= \sum_{m=0}^\oo \frac{(h_\cO-h_{32})_m(h_\cO-h_{14})_m}{(2h_\cO)_m m!}(-1)^m y^{h_\cO+m-h_1-h_3}.
\ee
On the left-hand side, this matches to the contribution of the double-twist operators $[12]_n$. We have
\be
\l_{12[12]_n}(h)\l_{34[12]_n}(h) &\sim \sum_a f_{23a}f_{41a} \a^{1234}_a(h) \b^{\bar 1\bar 2\bar 3\bar 4}_{\bar a}(n) + (-1)^{\ell+\ell_1+\ell_2}(1\leftrightarrow 2)\nn\\
\a^{1234}_a(h) &= \sum_{m=0}^\oo \frac{(h_a-h_{32})_m(h_a-h_{14})_m}{(2h_a)_m m!}(-1)^m S^{h_{12},h_{34}}_{h_a+m-h_1-h_3}(h)\\
\b^{\bar 1\bar 2\bar 3\bar 4}_{\bar a}(n) &= \sum_{k=0}^n K_k^{\bar h_{32},\bar h_{14}}(\bar h_a) S^{\bar h_{12},\bar h_{34}}_{-k-\bar h_1-\bar h_2}(\bar h_1+\bar h_2+n).
\ee
As in section~\ref{sec:matchingtosides}, the $\l$'s differ from the $f$'s by a Jacobian factor $\pdr{h}{\ell}$.
To get the first line, we used $f_{abc}=(-1)^{\ell_a+\ell_b+\ell_c} f_{bac}$. One can check that swapping $3\leftrightarrow 4$ gives the same quantity as swapping $1\leftrightarrow 2$ because
\be
(-1)^{\ell_1+\ell_2} f_{13a} f_{42a} &= (-1)^{\ell_3+\ell_4} f_{24a} f_{31a},
\ee
and  
\be
\a^{2134}_a(h) &\sim \a^{1243}_a(h),\\
\b^{2134}_a(n) &= \b^{1243}_a(n).
\ee
(Here, ``$\sim$" means the two quantities have the same large-$h$ expansion to all orders in $1/h$.)

Let us now look at the contribution of the double-twist operator $[bd]_n$ in the $12\to34$ channel.  We have
\be
\l_{bd[bd]_n}(h) \l_{12[bd]_n}(h) &\sim \sum_a f_{d1a}f_{a2b} \a^{bd12}_a(h) \b^{\bar b\bar d\bar 1\bar 2}_{\bar a}(n) + (-1)^{\ell+\ell_b+\ell_d} (b\leftrightarrow d) \nn\\
\l_{db[bd]_n}(h) \l_{34[bd]_n}(h) &\sim \sum_c f_{b3c}f_{c4d} \a^{db34}_c(h) \b^{\bar d\bar b\bar 3\bar 4}_{\bar c}(n) + (-1)^{\ell+\ell_b+\ell_d} (b\leftrightarrow d)\nn\\
\l_{db[bd]_n}(h) \l_{bd[bd]_n}(h) &\sim S^{h_{bd},h_{db}}_{-h_b-h_d}(h)S^{\bar h_{bd},\bar h_{db}}_{-\bar h_b - \bar h_d}(\bar h_b+\bar h_d+n) + \dots
\ee
where we have used $f_{abc}=f_{cab}$.
Thus,
\be
\l_{12[bd]_n}(h)\l_{43[bd]_n}(h) &\sim \sum_{a,c} f_{d1a}f_{a2b} f_{b3c}f_{c4d} \frac{\a^{bd12}_a(h) \a^{db34}_c(h)}{S_{-h_b-h_d}^{h_{bd},h_{db}}(h)}  \frac{\b^{\bar b\bar d\bar 1\bar 2}_{\bar a}(n)\b^{\bar d\bar b\bar 3\bar 4}_{\bar c}(n)}{S_{-\bar h_b - \bar h_d}^{\bar h_{bd},\bar h_{db}}(\bar h_b+\bar h_d + n)}\nn\\
&\qquad\qquad \phantom{\frac{1^1_1}{1^1_1}}+ (b\leftrightarrow d) + \dots,
\ee
where ``$\dots$" represent terms proportional to $(-1)^\ell$ which do not contribute to the Casimir-singular part. 

Now we would like to compute the Casimir-singular terms in
\be
\bar y^{-\bar h_1 -\bar h_3} \sum_{h,n} f_{12[bd]_n}(h)f_{43[bd]_n}(h) k^{h_{12},h_{34}}_{2h_{[bd]_n}}(1-z) k^{\bar h_{12},\bar h_{34}}_{2\bar h_{[bd]_n}}(\bar z).
\ee
Isolating the contribution from $a,c$, the sums factorize into holomorphic and antiholomorphic parts. The antiholomorphic sum is
\be
&\sum_{n=0}^\oo \frac{\b^{\bar b\bar d\bar 1\bar 2}_{\bar a}(n)\b^{\bar d\bar b\bar 3\bar 4}_{\bar c}(n)}{S_{-\bar h_b - \bar h_d}^{\bar h_{bd},\bar h_{db}}(\bar h_b+\bar h_d + n)} k^{\bar h_{12},\bar h_{34}}_{2(\bar h_b+\bar h_d + n)}(\bar z)
\\
&= \sum_{n,m=0}^\oo \frac{\b^{\bar b\bar d\bar 1\bar 2}_{\bar a}(n)\b^{\bar d\bar b\bar 3\bar 4}_{\bar c}(n)}{S_{-\bar h_b - \bar h_d}^{\bar h_{bd},\bar h_{db}}(\bar h_b+\bar h_d + n)} \frac{(\bar h_b + \bar h_d +n - \bar h_{12})_m(\bar h_b+\bar h_d+n - \bar h_{34})_m}{(2(\bar h_b+\bar h_d+n))_m m!}(-1)^m \bar y^{\bar h_b+\bar h_d+n+m}\nn\\
&= \sum_{m=0}^\oo \sum_{n=0}^m \frac{\b^{\bar b\bar d\bar 1\bar 2}_{\bar a}(n)\b^{\bar d\bar b\bar 3\bar 4}_{\bar c}(n)}{S_{-\bar h_b - \bar h_d}^{\bar h_{bd},\bar h_{db}}(\bar h_b+\bar h_d + n)} \frac{(\bar h_b + \bar h_d +n - \bar h_{12})_{m-n}(\bar h_b+\bar h_d+n - \bar h_{34})_{m-n}}{(2(\bar h_b+\bar h_d+n))_{m-n} (m-n)!}(-1)^{m-n} \bar y^{\bar h_b+\bar h_d+m}.\nn\\
&= \sum_{m=0}^\oo \g_{\bar 1\bar 2\bar 3\bar 4; \bar a \bar b \bar c \bar d}(m) \bar y^{\bar h_b+\bar h_d+m},
\ee
where
\be
\g_{\bar 1\bar 2\bar 3\bar 4; \bar a \bar b \bar c \bar d}(m) &\equiv
\sum_{n=0}^m (-1)^{m-n} \frac{\b^{\bar b\bar d\bar 1\bar 2}_{\bar a}(n)\b^{\bar d\bar b\bar 3\bar 4}_{\bar c}(n)}{S_{-\bar h_b - \bar h_d}^{\bar h_{bd},\bar h_{db}}(\bar h_b+\bar h_d + n)} \frac{(\bar h_b + \bar h_d +n - \bar h_{12})_{m-n}(\bar h_b+\bar h_d+n - \bar h_{34})_{m-n}}{(2(\bar h_b+\bar h_d+n))_{m-n} (m-n)!}.
\ee

Meanwhile, to do the sum over $h$, we must find a large-$h$ expansion
\be
\frac{\a^{bd12}_a(h) \a^{db34}_c(h)}{S_{-h_b-h_d}^{h_{bd},h_{db}}(h)}  &\sim \sum_{m=0}^\oo c_m S^{h_{12},h_{34}}_{h_a+h_c-h_1-h_3+m}(h).
\ee
We claim that these coefficients are
\be
c_m &= \g_{3214;badc}(m).
\ee
That is, we have the truly remarkable identity
\be
\frac{\a^{bd12}_a(h) \a^{db34}_c(h)}{S_{-h_b-h_d}^{h_{bd},h_{db}}(h)}  &\sim \sum_{m=0}^\oo \g_{3214;badc}(m) S^{h_{12},h_{34}}_{h_a+h_c-h_1-h_3+m}(h).
\ee
Thus, we obtain
\be
&\p{\sum_{a,c}\sum_{m,n} f_{d1a}f_{a2b} f_{b3c}f_{c4d}  \g_{3214;badc}(m) \g_{\bar 1\bar 2\bar 3\bar 4; \bar a \bar b \bar c \bar d}(n) + b\leftrightarrow d}\nn\\
& \qquad \x y^{h_a+h_c-h_1-h_3+m} \bar y^{\bar h_b + \bar h_d - \bar h_1 - \bar h_3 + n}
\ee
We should sum over unordered pairs $b,d$, with weight $1/2$ for the case $b=d$ because only even-spin operators are present. This is equivalent to dropping the second term and summing over ordered pairs $b,d$.  Overall, the Casimir-singular part in $y,\bar y$ is
\be
\sum_{a,b,c,d}\sum_{m,n=0}^\oo f_{d1a}f_{a2b} f_{b3c}f_{c4d}  \g_{3214;badc}(m) \g_{\bar 1\bar 2\bar 3\bar 4; \bar a \bar b \bar c \bar d}(n) y^{h_a+h_c-h_1-h_3+m} \bar y^{\bar h_b + \bar h_d - \bar h_1 - \bar h_3 + n}.
\ee
The summand is invariant under $(1,a,d,y,m) \leftrightarrow (3,b,c,\bar y,n)$ and multiplication by the phase $(-1)^{\ell_1+\ell_2+\ell_3+\ell_4}$ coming from the prefactor in the crossing equation (\ref{eq:twodcrossing}). Thus, the above result is crossing-symmetric.

There are some special cases of this calculation that we must take care with.  Each special case will require us to modify the calculation slightly. We should check that the results are still crossing-symmetric after these modifications.

Suppose $\bar h_d+\bar h_b=\bar h_1+\bar h_2$.  For simplicity we assume $[bd]=[12]$, though this is not actually necessary.  Then in the $1,2\to 3,4$ channel we have
\be
&\bar y^{-\bar h_1-\bar h_3} \sum_{n,h} f_{12[12]_n}(h) f_{34[12]_n}(h) k_{2(\bar h_1 + \bar h_2 + n + \g_{[12]_n}/2)}(\bar z) k_{2h}^{h_{12},h_{34}}(1-z)\\
&= \bar y^{-\bar h_1-\bar h_3} \sum_{n,h} f_{12[12]_n}(h) f_{34[12]_n}(h) \frac{\g_{[12]_n}(h)}{2}\x\nn\\
& \qquad \qquad \qquad
 \frac 1 2\p{\pdr{}{\bar h_1}+ \pdr{}{\bar h_2}} k^{\bar h_{12},\bar h_{34}}_{2(\bar h_1 + \bar h_2 + n)}(\bar z) k_{2h}^{h_{12},h_{34}}(1-z) + \dots
\ee
The combination above is
\be
f_{12[12]_n}(h) f_{34[12]_n}(h) \frac{\g_{[12]_n}(h)}{2} &\sim 
\lim_{b,d\to 1,2}(\bar h_b+\bar h_d-\bar h_1-\bar h_2) f_{12[bd]_n}(h) f_{34[bd]_n}(h).
\ee
Recall that we are interested in computing only the Casimir-singular terms with respect to the crossed-channel. The Casimir-regular terms are proportional to
\be
\bar y^{k-\bar h_3+\bar h_2},\quad \bar y^{k-\bar h_1+ \bar h_4}
\ee
Only the parts proportional to $\bar y^{k-\bar h_3 + \bar h_2} \log \bar y$ or $\bar y^{k-\bar h_1 + \bar h_4} \log \bar y$ are Casimir-singular.  To determine them, we can use
\be
\pdr{}{\bar h} k_{2\bar h}^{r,s}(\bar z) &= \log \bar y\, k_{2\bar h}^{r,s}(\bar z) + \sum_{k=1}^\oo \# \bar y^{k+\bar h}.
\ee
Thus,
\be
\bar y^{-\bar h_1 - \bar h_3} \frac 1 2\p{\pdr{}{\bar h_1}+ \pdr{}{\bar h_2}} k^{\bar h_{12},\bar h_{34}}_{2(\bar h_1 + \bar h_2 + n)}(\bar z) &= \bar y^{-\bar h_1 - \bar h_3} \log \bar y\, k^{\bar h_{12},\bar h_{34}}_{2(\bar h_1 + \bar h_2 + n)}(\bar z) + [\dots]_{\bar y},
\ee
and the Casimir-singular terms above are
\be
&\bar y^{-\bar h_1 - \bar h_3} \log \bar y \sum_{n,h} f_{12[12]_n}(h) f_{34[12]_n}(h) \frac{\g_{[12]_n}(h)}{2} k^{\bar h_{12},\bar h_{34}}_{2(\bar h_1 + \bar h_2 + n)}(\bar z) k_{2h}^{h_{12},h_{34}}(1-z)\nn\\
&= 
\lim_{b,d \to 2,1} (\bar h_d + \bar h_b - \bar h_1 - \bar h_2) f_{d1a}f_{a2b} f_{b3c}f_{c4d}\nn\\
&\qquad\x  \sum_{m,n} \g_{3214;badc}(m) \g_{\bar 1\bar 2\bar 3\bar 4; \bar a \bar b \bar c \bar d}(n)
\x y^{h_a+h_c-h_1-h_3+m} \bar y^{\bar h_b + \bar h_d - \bar h_1 - \bar h_3 + n}\log \bar y.
\label{eq:curehbarpoles}
\ee
The $\bar y$-Casimir-regular terms in the $12\to 34$ channel start at $n=1$ (i.e.\ not at leading order in $\bar y$), because of the fact that $k_{2\bar h}$ has leading order $y^{\bar h}$, with a coefficient that is independent of $\bar h$.

In the case where we have $[db]=[12]=[34]$, the Casimir-singular terms are given by expanding to second order in the anomalous dimension $\g_{[12]_n}$.
\be
&\bar y^{-\bar h_1-\bar h_3} \sum_{n,h} f_{12[12]_n}(h)^2 k_{2(\bar h_1 + \bar h_2 + n + \g_{[12]_n}/2)}(\bar z) k_{2h}^{h_{12},h_{34}}(1-z)\\
&= \bar y^{-\bar h_1-\bar h_3} \sum_{n,h} f_{12[12]_n}(h)^2 \frac 1 2\p{\frac{\g_{[12]_n}(h)}{2}}^2\nn\\
& \qquad \qquad
 \x\p{\frac 1 2\p{\pdr{}{\bar h_1}+ \pdr{}{\bar h_2}}}^2 k^{\bar h_{12},\bar h_{34}}_{2(\bar h_1 + \bar h_2 + n)}(\bar z) k_{2h}^{h_{12},h_{34}}(1-z) + \dots
\ee
This time, only the $\log \bar y^2$ terms are Casimir-singular. By similar logic as before, they are given by
\be
&\lim_{b,d \to 2,1} \frac 1 2 (\bar h_d + \bar h_b - \bar h_1 - \bar h_2)^2 f_{d1a}f_{a2b} f_{b3c}f_{c4d}\nn\\
&\qquad\x  \sum_{m,n} \g_{3214;badc}(m) \g_{\bar 1\bar 2\bar 3\bar 4; \bar a \bar b \bar c \bar d}(n)
\x y^{h_a+h_c-h_1-h_3+m} \bar y^{\bar h_b + \bar h_d - \bar h_1 - \bar h_3 + n}\log^2 \bar y,
\label{eq:prescriptiondoublepoles}
\ee
where $3,4=1,2$ or $3,4=2,1$, depending on which correlator we're studying. The Casimir-regular terms start at order $n=1$, by the same logic as before.

Another type of special case is when $h_a+h_c=h_1+h_4$ or $h_a+h_c=h_2+h_3$ or both. Naively, this leads to poles in the holomorphic part of the Casimir-singular terms.  However, the Casimir-singular and Casimir-regular parts combine to give a finite result in these cases. (We saw an example in section~\ref{sec:sigepszero}.) In the first case, we have a sum proportional to
\be
&\lim_{a\to -r} \G(-a-r) \sum_{h} S^{r,s}_{a}(h) k_{2h}^{r,s}(1-z)\nn\\
&= \lim_{a\to -r} \G(-a-r) \p{y^a + \frac{\pi}{\sin(\pi(s-r))}\frac{\G(-a)^2}{\G(-a-r)\G(-a-s)} \p{\frac{\G(1-r)^2 \cA_{a,r-1}(h_0)}{\G(1+s-r)}y^{-r} - (r\leftrightarrow s)}}\nn\\
&= - y^{-r} \log y + [\dots]_y,
\label{eq:correctcassing}
\ee
where
\be
r = h_{12},\quad s=h_{34},\quad a=h_a+h_c-h_1-h_3,
\ee
and $h_a+h_c\to h_1+h_4$.  The correct Casimir-singular term (\ref{eq:correctcassing}) can be obtained by multiplying the naive answer (containing the pole in $h_1+h_3-h_a-h_c$) by
\be
(h_a+h_c-h_1-h_4)\log y
\ee
and then taking the limit. This is exactly the same prescription we gave for curing $\bar h$ poles in (\ref{eq:curehbarpoles}), and the two special cases are related by crossing.

Finally, if $h_a+h_c=h_1+h_4=h_2+h_3$, then we have
\be
\lim_{a\to -r} \G(-a-r)^2 \sum_h S^{r,r}_a(h) k_{2h}^{r,r}(1-z) &= \frac 1 2 y^{-r} \log^2 y + [\dots]_y,
\ee
where now $[\dots]_y$ includes $y^{-r} \log y$ and $y^{-r}$ terms.  To get the correct Casimir-singular part, we multiply the naive answer by
\be
(h_a+h_c-h_1-h_4)^2 \frac 1 2 \log^2 y
\ee
and then take the limit. This exactly corresponds with the prescription for $\bar h$ double poles (\ref{eq:prescriptiondoublepoles}), and again these cases are related by crossing.

\clearpage
\bibliography{Biblio}{}
\bibliographystyle{utphys}

\end{document}